\documentclass[a4paper,reqno,11pt]{amsart}

\usepackage{amsmath, amsfonts, amssymb, amsthm, amscd}
\usepackage{graphicx}
\usepackage{psfrag}
\usepackage{perpage}
\usepackage{url}
\usepackage{color}
\usepackage{subfig}

\usepackage[utf8]{inputenc}
\usepackage[T1]{fontenc}

\usepackage[a4paper,scale={0.72,0.74},marginratio={1:1},footskip=7mm,headsep=10mm]{geometry}

\usepackage{hyperref}

\setcounter{secnumdepth}{2}

\frenchspacing

\numberwithin{equation}{section}

\newtheorem{theorem}{Theorem}[section]
\newtheorem{lemma}[theorem]{Lemma}
\newtheorem{proposition}[theorem]{Proposition}

\newtheorem{remark}[theorem]{Remark}

\newtheorem{hypothesis}[theorem]{Hypothesis}



\newcommand{\cA}{{\ensuremath{\mathcal A}} }

\newcommand{\cO}{{\ensuremath{\mathcal O}} }





\renewcommand{\tilde}{\widetilde}          
\DeclareMathSymbol{\leqslant}{\mathalpha}{AMSa}{"36} 
\DeclareMathSymbol{\geqslant}{\mathalpha}{AMSa}{"3E} 
\DeclareMathSymbol{\eset}{\mathalpha}{AMSb}{"3F}     
\newcommand{\dd}{\text{\rm d}}             



\newcommand{\R}{\mathbb{R}}
\newcommand{\C}{\mathbb{C}}

\newcommand{\N}{\mathbb{N}}

\newcommand{\PEfont}{\mathrm}

\renewcommand{\P}{\ensuremath{\PEfont P}}

\newcommand{\E}{\ensuremath{\PEfont E}}

\newcommand{\sign}{\mathrm{sign}}

\newcommand{\ind}{{\sf 1}}

\renewcommand{\epsilon}{\varepsilon} 
\renewcommand{\theta}{\vartheta} 
\renewcommand{\rho}{\varrho} 


\newenvironment{myenumerate}{%
\renewcommand{\theenumi}{\arabic{enumi}}%
\renewcommand{\labelenumi}{{\rm(\theenumi)}}%
\begin{list}{\labelenumi}
	{%
	\setlength{\itemsep}{0.4em}%
	\setlength{\topsep}{0.5em}%
	\setlength\leftmargin{2.45em}%
	\setlength\labelwidth{2.05em}%
	\setlength{\labelsep}{0.4em}%
	\usecounter{enumi}%
	}%
	}%
{\end{list}
}

\newenvironment{ienumerate}{%
\renewcommand{\theenumi}{\roman{enumi}}%
\renewcommand{\labelenumi}{{\rm(\theenumi)}}%
\begin{list}{\labelenumi}
	{%
	\setlength{\itemsep}{0.4em}%
	\setlength{\topsep}{0.5em}%
	\setlength\leftmargin{2.45em}%
	\setlength\labelwidth{2.05em}%
	\setlength{\labelsep}{0.4em}%
	\usecounter{enumi}%
	}%
	}%
{\end{list}
}

\newenvironment{aenumerate}{%
\renewcommand{\theenumi}{\alph{enumi}}%
\renewcommand{\labelenumi}{{\rm(\theenumi)}}%
\begin{list}{\labelenumi}
	{%
	\setlength{\itemsep}{0.4em}%
	\setlength{\topsep}{0.5em}%
	\setlength\leftmargin{2.45em}%
	\setlength\labelwidth{2.05em}%
	\setlength{\labelsep}{0.4em}%
	\usecounter{enumi}%
	}%
	}%
{\end{list}
}

\renewenvironment{enumerate}{
\begin{myenumerate}}%
{\end{myenumerate}}

\newenvironment{myitemize}{%
\begin{list}{$\bullet$}%
 	{%
	\setlength{\itemsep}{0.4em}%
	\setlength{\topsep}{0.5em}%
	\setlength\leftmargin{2.45em}%
	\setlength\labelwidth{2.05em}%
	\setlength{\labelsep}{0.4em}%
	}%
	}%
{\end{list}}

\renewenvironment{itemize}{
\begin{myitemize}}%
{\end{myitemize}}


\MakePerPage[2]{footnote} 



\title{}

\date{\today}

\newcommand\CBS{\mathsf{C_{BS}}}
\newcommand\VBS{\mathsf{V_{BS}}}
\newcommand\imp{\mathrm{imp}}
\newcommand\F{\overline F}

\newcommand\bkappa{\boldsymbol{\kappa}}


\title{General smile asymptotics with bounded maturity}

\author{Francesco Caravenna}
\address{Dipartimento di Matematica e Applicazioni, 
Universit\`a degli Studi di Milano-Bicocca,
via Cozzi 55,
I-20125 Milano, Italy}

\email{francesco.caravenna@unimib.it}

\author{Jacopo Corbetta}
\address{ 
\'{E}cole des Ponts - ParisTech,
CERMICS,
6 et 8 avenue Blaise Pascal,
77420 Champs sur Marne, 
France}
\email{jacopo.corbetta@enpc.fr}

\keywords{Implied Volatility, Asymptotics, Volatility Smile, Tail Probability}
\subjclass[2010]{Primary: 91G20; Secondary: 91B25, 60G44}


\begin{document}


\begin{abstract}
We provide explicit conditions on the distribution of risk-neutral
log-returns which yield sharp asymptotic estimates on the
implied volatility smile. We allow for a variety of asymptotic regimes,
including both small maturity (with arbitrary strike) and
extreme strike (with arbitrary bounded maturity),
extending previous work of Benaim and Friz \cite{cf:BF09}.
We present applications to popular models, including Carr-Wu 
finite moment logstable model, Merton's jump diffusion model
and Heston's model.
\end{abstract}

\maketitle


\section{Introduction}

The price of a European option is typically expressed in terms of the
Black\&Scholes \emph{implied volatility} $\sigma_\imp(\kappa,t)$
(where $\kappa$ denotes the log-strike and $t$ the maturity), cf.\ \cite{cf:Gat06}.
Since exact formulas for a given model are typically out of reach, 
an active line or research is devoted to finding
asymptotic expansions for $\sigma_\imp(\kappa,t)$,
which can be useful in many respects,
e.g.\ for a fast calibration of some parameters of the model.
Explicit asymptotic formulas for $\sigma_\imp(\kappa,t)$
also allow to understand how the parameters affect
key features of the volatility surface, such as its slope, and what are the
possible shapes that can actually be obtained for a given model.
Let us mention the celebrated
Lee moment's formula~\cite{cf:L} and
more recent results 
\cite{cf:BF08, cf:BF09, cf:T09, cf:G, cf:FF,
cf:MT12, cf:GL,cf:FJ09,cf:RR, cf:GMZ14}.

\smallskip

A key problem is to link the implied volatility \emph{explicitly} to the
distribution of the risk-neutral log-return $X_t$, because the latter can be 
computed or estimated for many models.
The results of Benaim and Friz \cite{cf:BF09} are particularly
appealing, because they 
connect directly the asymptotic behavior of $\sigma_\imp(\kappa,t)$ to the
\emph{tail probabilities} 
\begin{equation}\label{ch2:eq:tail}
	\F_t(\kappa) := \P(X_t > \kappa) \,, \qquad
	F_t(-\kappa) := \P(X_t \le -\kappa) \,.
\end{equation}
Their results, which are limited to the special regime of extreme strike 
$\kappa \to \pm\infty$ with fixed maturity $t > 0$,
are based on the key notion
of \emph{regular variation}, which is appropriate when one considers
a single random variable $X_t$ (since $t$ is fixed).
This leaves out many interesting regimes,
notably the much studied case of small maturity $t \to 0$ with fixed strike $\kappa$.

\smallskip

In this paper we provide a substantial extension of \cite{cf:BF09}:
we formulate a suitable generalization of the regular variation assumption 
on $\F_t(\kappa)$, $F_t(\kappa)$ which, coupled to suitable moment conditions,
yields the 
asymptotic behavior of $\sigma_\imp(\kappa,t)$ in essentially \emph{any regime
of small maturity and/or extreme strike} (with bounded maturity).
We thus provide a unified approach, which includes as special cases both the regime of 
extreme strike $\kappa \to \pm\infty$ with fixed maturity $t > 0$, and
that of small maturity $t \to 0$ with fixed strike $\kappa$.
Mixed regimes, where $\kappa$ and $t$ vary simultaneously, are also allowed.
This flexibility yields asymptotic formulas for the
volatility surface $\sigma_\imp(\kappa,t)$ in open regions of the plane.

In Section~\ref{ch2:sec:examples} we illustrate our results 
through applications to 
popular models, such as Carr-Wu finite moment logstable model
and Merton's jump diffusion model.
We also discuss Heston's model, cf.\ \S\ref{ch2:sec:Heston}.
In a separate paper~\cite{cf:CC} we consider a stochastic volatility
model which exhibits multiscaling of moments, introduced in \cite{cf:ACDP}.

\smallskip

The key point in our analysis 
is to connect explicitly
the asymptotic behavior of the tail probabilities $\F_t(\kappa), F_t(\kappa)$
to call and put option prices $c(\kappa,t), p(\kappa,t)$
(cf.\ Theorems~\ref{ch2:th:main2b}, \ref{ch2:th:main2bl} and \ref{ch2:th:main2a}).
In fact, once the asymptotics of $c(\kappa,t), p(\kappa,t)$ are known,
the behavior of the implied volatility $\sigma_\imp(\kappa,t)$
can be deduced in a model independent way, as recently shown Gao and Lee~\cite{cf:GL}.
We summarize their results in~\S\ref{ch2:sec:main1}
(see Theorem~\ref{ch2:th:main1}),
where we also give an extension to a special regime, that was left out from their analysis
(cf.\ also \cite{cf:MT12}).

\smallskip

The paper is structured as follows.
\begin{itemize}
\item In Section~\ref{ch2:sec:main} we set some notation and
we state our main results.

\item In Section~\ref{ch2:sec:examples} we apply our results to some
popular models.

\item In Section~\ref{ch2:sec:pricetovol} we 
prove Theorem~\ref{ch2:th:main1}, linking option price and implied volatility.

\item In Section~\ref{ch2:sec:probtoprice} we prove our main results
(Theorems~\ref{ch2:th:main2b}, \ref{ch2:th:main2bl} and~\ref{ch2:th:main2a}). 

\item Finally, a few technical points
have been deferred to the Appendix~\ref{ch2:sec:app}.
\end{itemize}

\section{Main results}
\label{ch2:sec:main}

\subsection{The setting}
\label{ch2:sec:setting}

We consider a generic stochastic process $(X_t)_{t\ge 0}$
representing the log-price of an asset,
normalized by $X_0 := 0$.
We work under the risk-neutral measure,
that is (assuming zero interest rate) the price
process $(S_t := e^{X_t})_{t\ge 0}$ is a martingale.
European call and put options,
with maturity $t > 0$ and a log-strike $\kappa \in \R$, are
priced respectively
\begin{equation} \label{ch2:eq:cp}
	c(\kappa,t) = \E[(e^{X_t}-e^\kappa)^+] \,, \qquad
	p(\kappa,t) = \E[(e^\kappa - e^{X_t})^+] \,,
\end{equation}
and are linked by the \emph{call-put parity} relation:
\begin{equation}\label{ch2:eq:parity}
	c(\kappa,t) - p(\kappa,t) = 1-e^\kappa \,.
\end{equation}

\smallskip

As in \cite{cf:GL},
in our results
\emph{we take limits along an arbitrary family
(or ``path'') of values of $(\kappa,t)$}.
This includes both sequences $((\kappa_n,t_n))_{n\in\N}$ 
and curves $((\kappa_s, t_s))_{s \in [0,\infty)}$, 
hence we omit subscripts.
Without loss of generality, we assume that all the $\kappa$'s have the same sign
(just consider separately 
the subfamilies with positive and negative $\kappa$'s). To simplify notation,
we only consider positive families $\kappa \ge 0$
and give results for both $\kappa$ and $-\kappa$.

Our main interest is for families of values of $(\kappa,t)$ such that
\begin{equation}\label{ch2:eq:assfamily}
	\text{either $\kappa \to \infty$ with bounded $t$\,, \quad
	or $t \to 0$ with arbitrary $\kappa \ge 0$} \,.
\end{equation}
Whenever this holds, one has (see \S\ref{ch2:sec:cpexplained})
\begin{equation}\label{ch2:eq:cpvanish}
	c(\kappa,t) \to 0\,, \qquad 	p(-\kappa,t) \to 0 \,.
\end{equation}
We stress that \eqref{ch2:eq:assfamily} gathers many interesting regimes, namely:
\begin{aenumerate}
\item\label{ch2:it:a} $\kappa \to \infty$ and $t \to \bar t \in (0,\infty)$
(in particular, the case of fixed $t = \bar t > 0$);

\item\label{ch2:it:b} $\kappa \to \infty$ and $t \to 0$;

\item\label{ch2:it:c} $t \to 0$ and $\kappa \to \bar\kappa \in (0,\infty)$
(in particular, the case of fixed $\kappa = \bar \kappa > 0$);

\item\label{ch2:it:d} $t \to 0$ and $\kappa \to 0$.
\end{aenumerate}
Remarkably, while regime \eqref{ch2:it:d}
needs to be handled separately,
regimes \eqref{ch2:it:a}-\eqref{ch2:it:b}-\eqref{ch2:it:c} will be analyzed at once,
as special instances of the case ``$\kappa$ is bounded away from zero''.

\begin{remark}\rm
We stress the requirement of \emph{bounded} maturity $t$ in \eqref{ch2:eq:assfamily}.
Some of our arguments can be adapted to deal with cases when
$t \to \infty$, but additional work is needed
(for instance, we assume the boundedness of some exponential moments 
$\E[e^{(1+\eta)X_t}]$,
cf.\ \eqref{ch2:eq:moment}-\eqref{ch2:eq:momentsimple} below, which
is satisfied by most models if $t$ is bounded, but not if $t \to \infty$).
We refer to \cite{cf:T09,cf:JKM13} for results in the regime $t\to\infty$.
\end{remark}

Given a model $(X_t)_{t \in [0,\infty)}$, the \emph{implied volatility $\sigma_\imp(\kappa,t)$} is
defined as the value of the volatility parameter
$\sigma \in [0,\infty)$ that plugged into the Black\&Scholes formula
yields the given call and put prices $c(\kappa,t)$ and $p(\kappa,t)$ in \eqref{ch2:eq:cp}
(see \S\ref{ch2:sec:BS}-\S\ref{ch2:sec:th:mainmain} below).
To avoid trivialities, we focus on
families of $(\kappa,t)$ such that $c(\kappa,t) > 0$ and $p(-\kappa,t) > 0$
(in fact, note that $\sigma_\imp(\kappa,t) = 0$ if $c(\kappa,t) = 0$
and, likewise, $\sigma_\imp(-\kappa,t) = 0$ if $p(-\kappa,t) = 0$).

\medskip

\noindent
\emph{Notation.}
Throughout the paper, we write
$f(\kappa, t) \sim g(\kappa, t)$ to mean $f(\kappa, t) / g(\kappa, t) \to 1$.
Let us recall a useful standard device (\emph{subsequence argument}): 
to prove an asymptotic relation, such as e.g.\
$f(\kappa,t) \sim g(\kappa, t)$,
it suffices to show that from every subsequence 
one can extract a further sub-subsequence along which
the given relation holds. As a consequence, 
\emph{in the proofs we may always assume 
that all quantities of interest have a (possibly infinite) limit}, e.g.\ $\kappa \to
\bar\kappa \in [0,\infty]$ and $t \to \bar t \in [0,\infty)$,
because this is true along a suitable subsequence.

\subsection{Main results: atypical deviations}
\label{ch2:sec:main2}

We first focus on families
of $(\kappa,t)$ such that
\begin{equation} \label{ch2:eq:Fto0}
	\F_t(\kappa) \to 0 \,,
	\quad \ \ \text{resp.} \quad \ \
	F_t(-\kappa) \to 0 \,,
\end{equation}
a regime that we call \emph{atypical deviations}.
This is the most interesting case, much studied in the literature,
since it includes regimes
\eqref{ch2:it:a}, \eqref{ch2:it:b} and \eqref{ch2:it:c}
described on page \pageref{ch2:it:a},
and also regime \eqref{ch2:it:d} provided
$\kappa\to 0$ sufficiently slow.

When $\kappa \to \infty$ with fixed $t > 0$,
Benaim and Friz~\cite{cf:BF09}
require the \emph{regular variation} of the tail probabilities, i.e.\
there exist $\alpha > 0$ and a slowly varying 
function\footnote{A positive function $L(\cdot)$ is slowly varying
if $\lim_{x\to\infty} L(\rho x)/L(x) = 1$ for all $\rho > 0$.}
$L_t(\cdot)$ such that
\begin{equation} \label{ch2:eq:regvar}
\begin{split}
	& \log \F_t(\kappa) \sim -L_t(\kappa) \, \kappa^\alpha \,,
	\quad \ \ \text{resp.} \quad \ \
	\log F_t(-\kappa) \sim -L_t(\kappa) \, \kappa^\alpha \,.
\end{split}
\end{equation}
It is not obvious how to generalize \eqref{ch2:eq:regvar} when $t$ is allowed to vary,
i.e.\ which conditions to impose on $L_t(\kappa)$.
However, one can reformulate the first relation in \eqref{ch2:eq:regvar} 
simply requiring the
existence of $\lim_{\kappa\to\infty}\log \F_t(\rho\kappa) / \log\F_t(\kappa)$
for any fixed $\rho > 0$, by\ \cite[Theorem~1.4.1]{cf:BinGolTeu},
and analogously for the second relation in \eqref{ch2:eq:regvar}.
This reformulation (in which $L_t(\kappa)$ is not even mentioned!)
turns out to be the right condition to impose in the general context
that we consider, when $t$ is allowed to vary.
We are thus led to the following:

\begin{hypothesis}[Regular decay of tail probability]\label{ch2:ass:rv}
The family of values of $(\kappa,t)$
with $\kappa > 0$, $t > 0$
satisfies \eqref{ch2:eq:Fto0}, and for every $\rho \in [1,\infty)$ 
the following limit exists in $[0,+\infty]$:
\begin{equation} \label{ch2:eq:rv}
\begin{split}
	I_+(\rho) := \lim \frac{\log \F_t(\rho\kappa)}{\log \F_t(\kappa)} \,,
	\qquad \text{resp.} \qquad
	I_-(\rho) := \lim \frac{\log F_t(-\rho\kappa)}{\log F_t(-\kappa)} \,,
\end{split}
\end{equation}
where limits are taken along
the given family of values of $(\kappa,t)$.
Moreover 
\begin{equation}\label{ch2:eq:cont1}
	\lim_{\rho\downarrow 1} I_+(\rho) = 1 \,, \qquad
	\text{resp.} \qquad
	\lim_{\rho\downarrow 1} I_-(\rho) = 1 \,.
\end{equation}
\end{hypothesis}

Depending on the regime of $\kappa$, we will also need
one of the following \emph{moment conditions}.
\begin{itemize}
\item Given $\eta \in (0,\infty)$, the first moment condition is
\begin{gather} \label{ch2:eq:moment}
	\limsup\, \E[e^{(1+\eta)X_t}] < \infty \,,
\end{gather}
along the given family of values of $(\kappa,t)$.
When $t \le T$, it is enough to require that
\begin{gather} \label{ch2:eq:momentsimple}
	\E[e^{(1+\eta)X_T}] < \infty \,,
\end{gather}
because $(e^{(1+\eta)X_t})_{t\ge 0}$ is a submartingale
and hence $\E[e^{(1+\eta)X_t}] \le \E[e^{(1+\eta)X_T}]$.

\item Given $\eta \in (0,\infty)$, the second moment condition is
\begin{gather}
	\label{ch2:eq:moment0p}
	\limsup\,
	\E\bigg[ \bigg|\frac{e^{X_t} - 1}{\kappa} \bigg|^{1+\eta} 
	\bigg] < \infty \,,
\end{gather}
along the given family
of values of $(\kappa,t)$.
Note that for $\eta = 1$ this simplifies to
\begin{equation}\label{ch2:eq:moment0psimple}
	\exists C \in (0,\infty): \qquad
\E [e^{2 X_t} ] \le 1 + C \kappa^2 \,.
\end{equation}

\end{itemize}

We are ready to state our main results, which
express the asymptotic behavior of option prices
and implied volatility explicitly in terms of the tail probabilities.
Due to different assumptions, we first consider
right-tail asymptotics.

\begin{theorem}[Right-tail atypical deviations]\label{ch2:th:main2b}
Consider a family of values of $(\kappa,t)$ with $\kappa > 0$, $t > 0$ such that
Hypothesis~\ref{ch2:ass:rv} is satisfied by the right tail probability 
$\F_t(\kappa)$.

\begin{ienumerate}
\item {\sf [$\kappa$ bounded away from zero, $t$ bounded away from infinity
($\liminf\kappa > 0$, $\limsup t < \infty$)]}
Let the moment condition \eqref{ch2:eq:moment} hold for \emph{every $\eta > 0$},
or alternatively let it hold
only for \emph{some $\eta > 0$} but in addition assume that
\begin{equation}\label{ch2:eq:Iplus}
	I_+(\rho) \ge \rho \,, \qquad \forall \rho \ge 1 \,.
\end{equation}
Then
\begin{align}\label{ch2:eq:ma1c}
	\log c(\kappa,t) & \sim  \log \F_t(\kappa) + \kappa \,, \\
	\label{ch2:eq:sigmainf}
	\sigma_\imp(\kappa,t) & \sim
	\left( \sqrt{\frac{-\log\F_t(\kappa)}{\kappa}}
	- \sqrt{\frac{-\log\F_t(\kappa)}{\kappa}-1}\, \right)
	\sqrt{\frac{2\kappa}{t}} \,.
\end{align}
{\sf Special case:} if $-\log \F_t(\kappa)/\kappa \to \infty$,
assumption \eqref{ch2:eq:Iplus} can be relaxed to
\begin{equation}\label{ch2:eq:I+infty}
	\lim_{\rho\to\infty} I_+(\rho) = \infty \,, 
\end{equation}
and relations \eqref{ch2:eq:ma1c}-\eqref{ch2:eq:sigmainf} simplify to
\begin{align}\label{ch2:eq:mac}
	\log c(\kappa,t) & \sim  \log \F_t(\kappa) \,, \\
	\label{ch2:eq:sigmafin}
	\sigma_\imp(\kappa,t) & \sim 
	\frac{\kappa}{\sqrt{2t\, (-\log \F_t(\kappa) )}} \,.
\end{align}

\item {\sf [$\kappa$ and $t$ vanish ($\kappa \to 0$, $t \to 0$)]} 
Let the moment condition \eqref{ch2:eq:moment0p} hold for \emph{every $\eta > 0$},
or alternatively let it hold only for \emph{some $\eta > 0$} but in addition assume
\eqref{ch2:eq:I+infty}. Then
\begin{align} \label{ch2:eq:ma2c}
	\log \big( c(\kappa,t) / \kappa \big) & \sim  \log \F_t(\kappa) \,,\\
	\label{ch2:eq:sigmafinbis}
	\sigma_\imp(\kappa,t) & \sim 
	\frac{\kappa}{\sqrt{2t\, (-\log \F_t(\kappa) )}} \,.
\end{align}
\end{ienumerate}
\end{theorem}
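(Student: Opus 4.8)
The plan is to prove the price asymptotics \eqref{ch2:eq:ma1c}, \eqref{ch2:eq:mac} and \eqref{ch2:eq:ma2c} directly from Hypothesis~\ref{ch2:ass:rv} and the moment conditions, and then to obtain the implied-volatility statements \eqref{ch2:eq:sigmainf}, \eqref{ch2:eq:sigmafin}, \eqref{ch2:eq:sigmafinbis} by feeding the price asymptotics into Theorem~\ref{ch2:th:main1} (Gao--Lee). Since by the subsequence argument we may assume $\kappa \to \bar\kappa \in [0,\infty]$, $t \to \bar t \in [0,\infty)$, and that $-\log\F_t(\kappa)/\kappa$ converges in $[0,\infty]$, part (i) splits into the generic case and the ``special case'' $-\log\F_t(\kappa)/\kappa \to \infty$, and part (ii) always falls in the analogue of that special case. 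Everything hinges on estimating $c(\kappa,t)=\E[(e^{X_t}-e^\kappa)^+]$ from above and below in terms of $\F_t$.

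\textbf{Step 1: lower bound.} Write $c(\kappa,t) = \int_\kappa^\infty e^x \,\F_t(x)\,\dd x$ (integration by parts, valid since $S_t$ is a martingale so $e^x\F_t(x)\to 0$). Truncating the integral to $[\kappa,\rho\kappa]$ and using monotonicity of $\F_t$ gives, for fixed $\rho>1$, a lower bound of the form $c(\kappa,t)\ge (e^{\rho\kappa}-e^{\kappa})\,\F_t(\rho\kappa)$, or more wastefully $c(\kappa,t) \ge e^\kappa(e^{(\rho-1)\kappa}-1)\F_t(\rho\kappa)$. Taking logs, dividing by $-\log\F_t(\kappa)$ (which $\to+\infty$ when $\bar\kappa<\infty$, and in any case is positive) and using \eqref{ch2:eq:rv}, one gets $\liminf \log c(\kappa,t)/(-\log\F_t(\kappa)) \le I_+(\rho) - (\rho-1)\kappa/(-\log\F_t(\kappa)) + \kappa/(-\log\F_t(\kappa))$; letting $\rho\downarrow 1$ and invoking the continuity \eqref{ch2:eq:cont1} produces the lower bound $\log c(\kappa,t) \gtrsim \log\F_t(\kappa) + \kappa$ in the generic regime, and the simplified $\log\F_t(\kappa)$ when $\kappa/(-\log\F_t(\kappa))\to 0$. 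For part (ii) one uses $c(\kappa,t)\ge\E[(e^{X_t}-e^\kappa)\ind_{X_t>\kappa}]$ and the elementary bound $e^{X_t}-e^\kappa \ge \kappa(e^{X_t}-1)/\kappa \cdot(\dots)$—more carefully, on $\{X_t>\rho\kappa\}$ one has $e^{X_t}-e^\kappa \ge \mathrm{const}\cdot\kappa$ when $\kappa\to 0$, giving $c(\kappa,t)/\kappa \gtrsim \F_t(\rho\kappa)$ and hence $\log(c(\kappa,t)/\kappa)\gtrsim\log\F_t(\kappa)$ after $\rho\downarrow 1$.

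\textbf{Step 2: upper bound (the main obstacle).} This is where the moment conditions enter and where the argument is most delicate. Split $c(\kappa,t) = \int_\kappa^{\rho\kappa} e^x\F_t(x)\,\dd x + \int_{\rho\kappa}^\infty e^x\F_t(x)\,\dd x$. The first piece is $\le e^{\rho\kappa}\F_t(\kappa)$, contributing $\log\F_t(\kappa)+\rho\kappa$ to the log; letting $\rho\downarrow 1$ at the end recovers the ``$+\kappa$''. The tail piece is the problem: one bounds it by Hölder/Markov against the exponential moment, $\int_{\rho\kappa}^\infty e^x\F_t(x)\,\dd x \le \E[e^{X_t}\ind_{X_t>\rho\kappa}] \le \E[e^{(1+\eta)X_t}]^{1/(1+\eta)}\,\F_t(\rho\kappa)^{\eta/(1+\eta)}$ (for part (ii) one instead uses \eqref{ch2:eq:moment0p} to bound $\E[(e^{X_t}-1)\ind_{X_t>\rho\kappa}]$ and divides by $\kappa$). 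Taking logs, the tail contributes roughly $\frac{\eta}{1+\eta}\log\F_t(\rho\kappa) = \frac{\eta}{1+\eta} I_+(\rho)\,\log\F_t(\kappa)$ plus a bounded term (here the $\limsup$ of the moment is used). One then needs this to be $\le \log\F_t(\kappa)+\kappa$ asymptotically, i.e.\ $\frac{\eta}{1+\eta}I_+(\rho) \ge 1 + \kappa/(-\log\F_t(\kappa))\cdot(\text{small})$ for $\rho$ close to $1$: if the moment condition holds for \emph{every} $\eta>0$ one sends $\eta\to\infty$ to kill the factor $\eta/(1+\eta)\to 1$ while $I_+(\rho)\to 1$ from \eqref{ch2:eq:cont1}; if it holds only for some $\eta$, one needs $I_+(\rho)$ large, which is exactly what \eqref{ch2:eq:Iplus} (resp.\ \eqref{ch2:eq:I+infty} in the special case) provides—$I_+(\rho)\ge\rho$ forces $\frac{\eta}{1+\eta}I_+(\rho)\ge 1$ once $\rho\ge(1+\eta)/\eta$, after which one can still let $\rho\downarrow$ to that threshold. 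Reconciling the three parameters $\rho\downarrow 1$, $\eta$, and the size of $\kappa/(-\log\F_t(\kappa))$ in the right order, and handling separately the bounded-$\kappa$ versus $\kappa\to\infty$ subcases, is the technical heart of the proof; I expect this to be the only genuinely subtle point, everything else being bookkeeping.

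\textbf{Step 3: from prices to implied volatility.} Once \eqref{ch2:eq:ma1c}/\eqref{ch2:eq:mac}/\eqref{ch2:eq:ma2c} are established, the formulas \eqref{ch2:eq:sigmainf}, \eqref{ch2:eq:sigmafin}, \eqref{ch2:eq:sigmafinbis} follow by plugging $-\log c(\kappa,t) \sim -\log\F_t(\kappa)-\kappa$ (resp.\ $\sim -\log\F_t(\kappa)$) into the price-to-volatility dictionary of Theorem~\ref{ch2:th:main1}; the closed form with the difference of two square roots is precisely the inversion of the Black--Scholes log-moneyness/total-variance relation, and \eqref{ch2:eq:sigmafin} is its first-order expansion when the argument $-\log\F_t(\kappa)/\kappa\to\infty$ (one checks $\sqrt{u}-\sqrt{u-1}\sim 1/(2\sqrt{u})$). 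For part (ii), $\kappa\to 0$, one verifies that the relevant regime of Theorem~\ref{ch2:th:main1} is the same as the ``special case'' of part (i), so the same formula applies. A minor point to check throughout is that the hypotheses of Theorem~\ref{ch2:th:main1} (e.g.\ $c(\kappa,t)\to 0$, which is \eqref{ch2:eq:cpvanish}, and $c>0$) are in force, and that the error terms in Step~2 are uniform enough to survive the nonlinear inversion.
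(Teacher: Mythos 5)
Your overall architecture is the same as the paper's: lower and upper bounds on $c(\kappa,t)$ in terms of $\F_t(\kappa)$, then invoke Theorem~\ref{ch2:th:main1}. Step~1 and Step~3 are essentially right (the paper uses the same integral representation $c(\kappa,t)=\int_\kappa^\infty e^x\F_t(x)\,\dd x$ in its equation \eqref{ch2:eq:cnew0}, and the same monotonicity bound for the lower bound). The ``every $\eta$'' branch of Step~2 is also fine: as you say, one sends $\eta\to\infty$ in H\"older; note that in that case Markov's inequality forces $-\log\F_t(\kappa)/\kappa\to\infty$, so \eqref{ch2:eq:ma1c} and \eqref{ch2:eq:mac} coincide and no decomposition is needed at all.

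There is, however, a genuine gap in Step~2 in the ``only some $\eta$'' case with $\kappa\to\infty$ and $-\log\F_t(\kappa)/\kappa$ bounded. Your plan bounds the middle piece $\int_\kappa^{\rho\kappa} e^x\F_t(x)\,\dd x$ by $e^{\rho\kappa}\F_t(\kappa)$, which contributes $(\rho-1)\kappa$ of excess to the log, and you intend to kill this by $\rho\downarrow 1$. But the tail piece forces you to fix $\rho$ at least as large as $(1+\eta)/\eta>1$, hence bounded away from~$1$. In the regime in question, $|\log\F_t(\kappa)+\kappa|$ is only of order $\kappa$ (e.g.\ bounded above by $\tfrac{2}{\eta}\kappa$ and below by $\tfrac{\eta}{2}\kappa$ via \eqref{ch2:eq:bounduseful}), so $(\rho-1)\kappa$ is \emph{comparable} to the target $|\log\F_t(\kappa)+\kappa|$, not negligible relative to it. This means the naive bound on the middle piece yields only $\log c(\kappa,t)\le(1-c_\eta)(\log\F_t(\kappa)+\kappa)$ for some fixed constant $c_\eta>0$ depending on $\eta$, not the required $(1-\epsilon)(\log\F_t(\kappa)+\kappa)$ for arbitrary $\epsilon>0$. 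You cannot reconcile the three parameters as stated: $\rho$ is pinned to a value away from $1$ by the tail, which breaks the middle estimate.

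The missing ingredient is the paper's claim \eqref{ch2:eq:claiM}: for \emph{any} fixed $M>1$ and $\epsilon>0$, eventually
\begin{equation*}
  \log\Big(\sup_{y\in[1,M]} e^{\kappa y}\,\F_t(\kappa y)\Big)
  \le (1-\epsilon)\big(\log\F_t(\kappa)+\kappa\big)\,.
\end{equation*}
In other words, the condition $I_+(\rho)\ge\rho$ implies that $y\mapsto e^{\kappa y}\F_t(\kappa y)$ is ``approximately decreasing'' in log scale, so the sup over $[1,M]$ is attained, up to $\epsilon$-losses, at the left endpoint $y=1$, even when $M$ is large. The paper establishes this by partitioning $[1,M]$ into subintervals of width $\delta$, applying \eqref{ch2:eq:rv} and $I_+(\rho)\ge\rho$ on each, and tuning $\delta$ in terms of $\epsilon$ and the constant $A:=\limsup\,(-\kappa)/(\log\F_t(\kappa)+\kappa)$. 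Once \eqref{ch2:eq:claiM} is in hand, the middle piece contributes exactly the right thing for any $M$, decoupling the choice of $M$ from the $\rho\downarrow 1$ mechanism. This is not bookkeeping: it is the step that requires \eqref{ch2:eq:Iplus} and is the actual content of the ``some $\eta$'' case.

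A smaller note: in Step~1, part~(ii), the inequality you need is $e^{\rho\kappa}-e^\kappa=e^\kappa(e^{(\rho-1)\kappa}-1)\ge(\rho-1)\kappa$ by convexity, so $c(\kappa,t)/\kappa\ge(\rho-1)\F_t(\rho\kappa)$ with no ``const'' depending on $t$; your phrasing is vague there but the idea is fine. Also in Step~3, the passage from $\log c\sim\log\F_t(\kappa)+\kappa$ to \eqref{ch2:eq:sigmainf} is not a single-line substitution — it relies on the uniformity built into the proof of \eqref{ch2:eq:Vas>00} in Theorem~\ref{ch2:th:main1}, but that uniformity is already established there, so this is legitimate as you state it.
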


Next we turn to left-tail asymptotics.
The assumptions in this case turn out to be sensibly weaker than those for right-tail.
For instance, the left-tail condition
$\E[e^{-\eta X_T}] < \infty$ required in \cite[Theorem 1.2]{cf:BF09}
is not needed, which allows to treat the case of a \emph{polynomially decaying} left tail, 
like in the Carr-Wu model described
in Section~\ref{ch2:sec:examples}.

\begin{theorem}[Left-tail atypical deviations]\label{ch2:th:main2bl}
Consider a family of values of $(\kappa,t)$ with $\kappa > 0$, $t > 0$ such that
Hypothesis~\ref{ch2:ass:rv} is satisfied by the left tail probability $F_t(-\kappa)$.

\begin{itemize}
\item {\sf [$\kappa$ bounded away from zero, $t$ bounded away from infinity
($\liminf\kappa > 0$, $\limsup t < \infty$)]}
With no moment condition and no extra assumption on $I_-(\cdot)$, one has
\begin{align}\label{ch2:eq:ma1p}
	\log p(-\kappa,t) & \sim  \log F_t(-\kappa) - \kappa \,, \\
	\label{ch2:eq:sigmainfl}
	\sigma_\imp(-\kappa,t) & \sim
	\left( \sqrt{\frac{-\log F_t(-\kappa)}{\kappa} + 1}
	- \sqrt{\frac{-\log F_t(-\kappa)}{\kappa}}\, \right)
	\sqrt{\frac{2\kappa}{t}} \,.
\end{align}
{\sf Special case:} if $-\log F_t(-\kappa)/\kappa \to \infty$,
relations \eqref{ch2:eq:ma1p}-\eqref{ch2:eq:sigmainfl} simplify to
\begin{align}\label{ch2:eq:map}
	\log p(-\kappa,t) & \sim  \log F_t(-\kappa) \,,\\
	\label{ch2:eq:sigmafinl}
	\sigma_\imp(-\kappa,t) & \sim 
	\frac{\kappa}{\sqrt{2t\, (-\log F_t(-\kappa) )}} \,.
\end{align}

\item {\sf [$\kappa$ and $t$ vanish ($\kappa \to 0$, $t \to 0$)]} 
Let the moment condition \eqref{ch2:eq:moment0p} hold for \emph{every $\eta > 0$},
or alternatively let it hold only for \emph{some $\eta > 0$} but in addition assume that
\begin{equation}\label{ch2:eq:I-infty}
	\lim_{\rho\uparrow\infty} I_-(\rho) = \infty \,.
\end{equation}
Then
\begin{align}\label{ch2:eq:ma2p}
	\log \big( p(-\kappa,t) / \kappa \big) & \sim  \log F_t(-\kappa) \,, \\
	\label{ch2:eq:sigmafinlbis}
	\sigma_\imp(-\kappa,t) & \sim 
	\frac{\kappa}{\sqrt{2t\, (-\log F_t(-\kappa) )}} \,.
\end{align}
\end{itemize}
\end{theorem}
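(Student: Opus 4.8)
The plan is in two stages. First I would establish the put-price asymptotics \eqref{ch2:eq:ma1p} and \eqref{ch2:eq:ma2p}; then, feeding these into Theorem~\ref{ch2:th:main1}, the implied-volatility relations \eqref{ch2:eq:sigmainfl}, \eqref{ch2:eq:sigmafinl} and \eqref{ch2:eq:sigmafinlbis} would follow, the ``special case'' being just the simplification $\log F_t(-\kappa)-\kappa\sim\log F_t(-\kappa)$ when $-\log F_t(-\kappa)/\kappa\to\infty$. By the subsequence device I may assume $\kappa\to\bar\kappa\in[0,\infty]$ and $t\to\bar t\in[0,\infty)$, and since $F_t(-\kappa)\to 0$ by Hypothesis~\ref{ch2:ass:rv} also $\log F_t(-\kappa)\to-\infty$, with $0<F_t(-\kappa)<1$ along the family. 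Everything is built on the elementary identity
\begin{equation*}
	p(-\kappa,t)\;=\;\int_{-\infty}^{-\kappa}\P(X_t<y)\,e^{y}\,\dd y\,,
\end{equation*}
obtained from \eqref{ch2:eq:cp} by Tonelli, together with the elementary monotonicity facts $\P(X_t<y)\le F_t(-\kappa)$ for $y\le-\kappa$ and $\P(X_t<y)\ge F_t(-\rho\kappa)$ for $y>-\rho\kappa$. The first gives $p(-\kappa,t)\le e^{-\kappa}F_t(-\kappa)$, hence $\log p(-\kappa,t)\le\log F_t(-\kappa)-\kappa$, which after division by the negative quantity $\log F_t(-\kappa)-\kappa$ is already the ``$\ge 1$'' half of \eqref{ch2:eq:ma1p}; the reverse bound comes from retaining only the slice $y\in(-\rho\kappa,-\kappa)$, which yields $p(-\kappa,t)\ge F_t(-\rho\kappa)\,(e^{-\kappa}-e^{-\rho\kappa})$ for every fixed $\rho>1$.

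In the regime $\liminf\kappa>0$, $\limsup t<\infty$ one has $e^{-\kappa}-e^{-\rho\kappa}\ge c_\rho\,e^{-\kappa}$ for a constant $c_\rho>0$, so $\log p(-\kappa,t)\ge\log F_t(-\rho\kappa)-\kappa+O(1)$. Dividing by $\log F_t(-\kappa)-\kappa<0$ and using $\log F_t(-\rho\kappa)/\log F_t(-\kappa)\to I_-(\rho)$ — the bounded $O(1)$ being negligible against $\log F_t(-\kappa)\to-\infty$ — the ratio is, up to $o(1)$, a convex combination of $I_-(\rho)$ and $1$, hence at most $I_-(\rho)+o(1)$; letting $\rho\downarrow 1$ and invoking \eqref{ch2:eq:cont1} gives $\limsup\le 1$, which with the ``$\ge 1$'' half proves \eqref{ch2:eq:ma1p} (and \eqref{ch2:eq:map} in the special case), and then \eqref{ch2:eq:sigmainfl}, \eqref{ch2:eq:sigmafinl} follow via Theorem~\ref{ch2:th:main1}. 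Observe that no moment condition, and nothing about $I_-$ beyond \eqref{ch2:eq:cont1}, is used here: the put price is automatically of the right logarithmic order, which is precisely why the left-tail integrability $\E[e^{-\eta X_T}]<\infty$ of \cite{cf:BF09} can be dispensed with.

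In the regime $\kappa\to 0$, $t\to 0$ the lower bound becomes $p(-\kappa,t)/\kappa\ge F_t(-\rho\kappa)\,(e^{-\kappa}-e^{-\rho\kappa})/\kappa$ with $(e^{-\kappa}-e^{-\rho\kappa})/\kappa\to\rho-1$, so $\log\bigl(p(-\kappa,t)/\kappa\bigr)\ge\log F_t(-\rho\kappa)+O(1)$ and hence $\limsup\,\log(p/\kappa)/\log F_t(-\kappa)\le I_-(\rho)\to 1$ as $\rho\downarrow 1$. The delicate half, $\liminf\,\log(p/\kappa)/\log F_t(-\kappa)\ge 1$, is where \eqref{ch2:eq:moment0p} enters. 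Since $(e^{-\kappa}-e^{X_t})^{+}$ is supported on $\{X_t<-\kappa\}$ and bounded there by $|e^{X_t}-1|$, H\"older's inequality with exponents $1+\eta$ and $(1+\eta)/\eta$ gives, whenever \eqref{ch2:eq:moment0p} holds for that $\eta$,
\begin{equation*}
	p(-\kappa,t)\;\le\;\bigl(\E\bigl[|e^{X_t}-1|^{1+\eta}\bigr]\bigr)^{\frac{1}{1+\eta}}\,F_t(-\kappa)^{\frac{\eta}{1+\eta}}\;\le\;C\,\kappa\,F_t(-\kappa)^{\frac{\eta}{1+\eta}}\,,
\end{equation*}
hence $\liminf\,\log(p/\kappa)/\log F_t(-\kappa)\ge\eta/(1+\eta)$; if \eqref{ch2:eq:moment0p} holds for every $\eta>0$, letting $\eta\to\infty$ gives the value $1$. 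If it holds only for some $\eta_0>0$, I would instead split the expectation according to $\{-M\kappa\le X_t<-\kappa\}$ and $\{X_t<-M\kappa\}$: on the first event $(e^{-\kappa}-e^{X_t})^{+}\le e^{-\kappa}(M-1)\kappa$, while the second is handled by the H\"older estimate above applied on $\{X_t<-M\kappa\}$, giving
\begin{equation*}
	p(-\kappa,t)\;\le\;(M-1)\,\kappa\,F_t(-\kappa)\;+\;C\,\kappa\,F_t(-M\kappa)^{\frac{\eta_0}{1+\eta_0}}\,;
\end{equation*}
dividing by $\kappa$, taking logarithms and dividing by $\log F_t(-\kappa)<0$ yields $\liminf\,\log(p/\kappa)/\log F_t(-\kappa)\ge\min\bigl(1,\ \tfrac{\eta_0}{1+\eta_0}\,I_-(M)\bigr)$, and \eqref{ch2:eq:I-infty} then lets us send $M\to\infty$ to reach $1$. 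This proves \eqref{ch2:eq:ma2p}, whence \eqref{ch2:eq:sigmafinlbis} by Theorem~\ref{ch2:th:main1}.

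The step I expect to be the main obstacle is this last one: extracting $\liminf\,\log(p/\kappa)/\log F_t(-\kappa)\ge 1$ from a single moment exponent $\eta_0$. The bare H\"older bound only reaches the fraction $\eta_0/(1+\eta_0)<1$, so one is forced to trade the merely finite integrability at scale $\sim\kappa$ against the tail decay at the far scale $M\kappa$, and the estimate has to be arranged so that the spurious $\log\kappa$ terms wash out after dividing by $\log F_t(-\kappa)$; this is non-trivial precisely because $-\log F_t(-\kappa)$ need not dominate $-\log\kappa$, so bounding the far piece crudely by $F_t(-M\kappa)$ instead of via H\"older would not suffice. Everything else is soft once Theorem~\ref{ch2:th:main1} is in hand — in particular the passage from price asymptotics to implied volatility, and the whole $\kappa$-bounded-away-from-zero case, which needs no integrability assumption at all.
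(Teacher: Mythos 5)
Your proof is correct and follows essentially the same route as the paper: the elementary upper bound $p(-\kappa,t)\le e^{-\kappa}F_t(-\kappa)$ for the first half, a restriction of the put payoff to $X_t\in(-\rho\kappa,-\kappa)$ combined with Hypothesis~\ref{ch2:ass:rv} and $\rho\downarrow1$ for the second half, and in the $\kappa\to0$ regime a split at $-M\kappa$ with H\"older's inequality on the far piece together with \eqref{ch2:eq:I-infty}. Your convex-combination observation (that $\frac{\log F_t(-\rho\kappa)-\kappa}{\log F_t(-\kappa)-\kappa}$ interpolates between $I_-(\rho)$ and $1$) is a cleaner way to package the paper's $\epsilon$--$\rho_\epsilon$ bookkeeping in the bounded-$\kappa$ case, and your diagnosis that the bare H\"older bound only reaches $\eta_0/(1+\eta_0)$ and must be traded against the far-tail decay is exactly the point of the paper's Step~4.
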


We prove Theorems~\ref{ch2:th:main2b} and~\ref{ch2:th:main2bl}
in \S\ref{ch2:sec:main2bproof} below. The key step is to link the
option prices $c(\kappa,t)$, $p(\kappa,t)$ to the tail probabilities
$\F_t(\kappa)$, $F_t(-\kappa)$, exploiting Hypothesis~\ref{ch2:ass:rv}.
Once this is done, the asymptotic behavior of the implied volatility
$\sigma_\imp(\kappa,t)$ can be deduced using the model independent results
of \cite{cf:GL}, that we summarize in \S\ref{ch2:sec:main1}.

\begin{remark}\rm\label{ch2:rem:moment}
The ``special case'' conditions
\begin{equation}\label{ch2:eq:condit}
	-\frac{\log \F_t(\kappa)}{\kappa} \to \infty \,, \qquad
	\text{resp.} \qquad
	-\frac{\log F_t(-\kappa)}{\kappa} \to \infty \,,
\end{equation}
are automatically fulfilled in the small maturity regime $t \to 0$
with fixed strike $\kappa = \bar\kappa > 0$.
In this case, one can use the simplified formulas \eqref{ch2:eq:mac}-\eqref{ch2:eq:sigmafin}
and \eqref{ch2:eq:map}-\eqref{ch2:eq:sigmafinl}.
\end{remark}

\subsection{Main results: typical deviations.}

Next we focus on the regime when $t\to 0$ and $\kappa \to 0$ sufficiently fast,
so that the tail probability $\F_t(\kappa)$, resp. $F_t(-\kappa)$
has a strictly positive limit and condition \eqref{ch2:eq:Fto0}
is violated. We call this regime \emph{typical deviations}.
This includes the basic regime of fixed $\kappa = 0$
and $t \downarrow 0$. 
Mixed regimes, when $\kappa \to 0$ and $t \to 0$
simultaneously, are also interesting,
e.g.\ to interpolate between the at-the-money $(\kappa = 0)$ and
out-of-the-money $(\kappa \ne 0)$ cases, which can be strikingly different as $t \to 0$
(see \cite{cf:MT12}).

We make the following natural assumption.

\begin{hypothesis}[Small time scaling] \label{ch2:ass:smalltime}
There is a positive function $(\gamma_t)_{t > 0}$ with
$\lim_{t\downarrow 0}\gamma_t = 0$ such
that $X_t / \gamma_t$ converges in law as $t \downarrow 0$ to
some random variable $Y$:
\begin{equation} \label{ch2:eq:scaling}
	\frac{X_t}{\gamma_t} \xrightarrow[t\downarrow 0]{d} Y \,.
\end{equation}
\end{hypothesis}

We refer to Remark~\ref{rem:equivalent} below for concrete ways to check
Hypothesis~\ref{ch2:ass:smalltime}.
Let us stress that \eqref{ch2:eq:scaling} is a condition on the
tail probabilities, since it can be reformulated as
\begin{equation} \label{ch2:eq:weakconv}
	\F_t(a\gamma_t) \to \P(Y > a) \qquad \text{and} \qquad
	F_t(-a\gamma_t) \to \P(Y \le -a) \,,
\end{equation}
for all $a\ge 0$ with $\P(|Y|=a) = 0$.
If the support of the law of $Y$ is unbounded from above and below
(as it is usually the case), the limits in \eqref{ch2:eq:weakconv} are strictly positive
for every $a \ge 0$.

\smallskip

The appropriate moment condition in this regime turns out to be
\eqref{ch2:eq:moment0p} with $\kappa = \gamma_t$, i.e.
\begin{equation}\label{ch2:eq:assX}
       \exists \eta > 0: \qquad
	\limsup_{t\to 0} \, \E\bigg[ \bigg|\frac{e^{X_t} - 1}{\gamma_t} \bigg|^{1+\eta} 
	\bigg] < \infty \,.
\end{equation}

Lastly, we introduce some notation.
Denote by $\phi(\cdot)$
and $\Phi(\cdot)$ 
respectively the density and distribution function of a standard Gaussian
(see \eqref{ch2:eq:phiPhi} below), and define the function
\begin{equation} \label{ch2:eq:D}
	D(z) := \frac{\phi(z)}{z} - \Phi(-z) , \qquad \forall z > 0 \,.
\end{equation}
As we show in \S\ref{ch2:eq:backg} below,
$D$ is a smooth and strictly decreasing bijection from $(0,\infty)$ to $(0,\infty)$.
Its inverse $D^{-1}:(0,\infty) \to (0,\infty)$ is also smooth, strictly decreasing and satisfies
\begin{equation}\label{ch2:eq:Das}
	D^{-1}(y) \sim \sqrt{2 \, (-\log y)} \quad \
	\text{as } y \downarrow 0  \,, \qquad \ \
	D^{-1}(y) \sim \frac{1}{\sqrt{2\pi}} \frac{1}{y} \quad \
	\text{as } y \uparrow \infty\,.
\end{equation}

We can finally state the following result,
linking option prices
and implied volatility to tail probabilities in the regime of typical deviations.

\begin{theorem}[Typical deviations]\label{ch2:th:main2a}
Assume that Hypothesis~\ref{ch2:ass:smalltime} is satisfied,
and moreover the moment condition \eqref{ch2:eq:assX} holds.
Then the random variable in \eqref{ch2:eq:scaling} satisfies
$\E[Y]=0$.

Fix $a \in [0,\infty)$ with $\P(Y > a) > 0$, resp.\ $\P(Y < -a) > 0$.
For any family of $(\kappa,t)$ with
\begin{equation*}
	t \to 0 \ \text{ and } \ \frac{\kappa}{\gamma_t} \to a \in [0,\infty)  \,,
\end{equation*}
the asymptotic behavior of option prices is given by
\begin{align} \label{ch2:eq:astypical}
	c(\kappa, t) & \sim \gamma_t \, \E[(Y-a)^+]  \,,
	\qquad \text{resp.} \qquad
	p(-\kappa, t) \sim \gamma_t \, \E[(Y+a)^-] \,,
\end{align}
and correspondingly the implied volatility is given by
\begin{align} \label{ch2:eq:sigma0}
	\sigma_\imp(\pm\kappa,t) & \sim	
	C_\pm(a) \, \frac{\gamma_t}{\sqrt{t}} \,, \ \quad
	\text{with} \quad \ C_\pm(a) =
	\begin{cases}
	\displaystyle\frac{a}{\rule{0pt}{1.1em}D^{-1}\big(\frac{\E[(Y\mp a)^\pm]}{a} \big)}
	& \text{ if \ $\displaystyle
	a > 0 $} \,, \\
	\rule{0em}{1.7em}\displaystyle \sqrt{2\pi} \, \E[Y^\pm] 
	& \text{ if \ $\displaystyle
	a =  0$} \,.
	\end{cases}
\end{align}
\end{theorem}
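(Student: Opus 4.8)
The plan is to prove Theorem~\ref{ch2:th:main2a} in three stages: first establish the claim $\E[Y]=0$, then derive the option-price asymptotics \eqref{ch2:eq:astypical}, and finally translate these into the implied-volatility asymptotics \eqref{ch2:eq:sigma0} using the price-to-vol dictionary of Theorem~\ref{ch2:th:main1}. The underlying philosophy is that the whole regime is governed by the weak convergence $X_t/\gamma_t \xrightarrow{d} Y$ together with just enough uniform integrability (supplied by \eqref{ch2:eq:assX}) to pass limits through the unbounded payoff functions.

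First I would deal with $\E[Y]=0$. The martingale property gives $\E[e^{X_t}] = 1$ for all $t$, equivalently $\E[(e^{X_t}-1)/\gamma_t] = 0$. The quantity inside is exactly the random variable whose $(1+\eta)$-th absolute moment is bounded by \eqref{ch2:eq:assX}, so the family $\{(e^{X_t}-1)/\gamma_t\}$ is uniformly integrable. Since $X_t/\gamma_t \to Y$ in law and $X_t \to 0$ in probability (because $\gamma_t \to 0$ and $X_t/\gamma_t$ is tight), a first-order expansion $e^{x}-1 = x + O(x^2)$ shows $(e^{X_t}-1)/\gamma_t = X_t/\gamma_t + \gamma_t \cdot (X_t/\gamma_t)^2 \cdot (\text{bounded})$, and the second term tends to $0$ in probability; hence $(e^{X_t}-1)/\gamma_t \xrightarrow{d} Y$. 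Uniform integrability then upgrades this to convergence of means, so $\E[Y] = \lim \E[(e^{X_t}-1)/\gamma_t] = 0$.

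For the price asymptotics I would write, with $\kappa/\gamma_t \to a$,
\[
	\frac{c(\kappa,t)}{\gamma_t} = \E\!\left[\frac{(e^{X_t}-e^\kappa)^+}{\gamma_t}\right]
	= \E\!\left[\left(\frac{e^{X_t}-1}{\gamma_t} - \frac{e^\kappa-1}{\gamma_t}\right)^{\!+}\right].
\]
The integrand converges in distribution to $(Y - a)^+$, since $(e^{X_t}-1)/\gamma_t \xrightarrow{d} Y$ (as just shown) and $(e^\kappa - 1)/\gamma_t \to a$ (deterministic, using $\kappa/\gamma_t \to a$ and $e^\kappa - 1 \sim \kappa$). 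The map $x \mapsto x^+$ is $1$-Lipschitz, so the integrand is dominated by $|(e^{X_t}-1)/\gamma_t| + o(1)$, which is uniformly integrable by \eqref{ch2:eq:assX}; therefore $c(\kappa,t)/\gamma_t \to \E[(Y-a)^+]$, which is \eqref{ch2:eq:astypical}. The put case is symmetric, or one can invoke call-put parity \eqref{ch2:eq:parity}: $(c-p)/\gamma_t = (1-e^\kappa)/\gamma_t \to -a$, and $\E[(Y-a)^+] - \E[(Y+a)^-] = \E[Y-a] = -a$ using $\E[Y]=0$, so the two formulas are consistent. One must check $\E[(Y-a)^+] > 0$, which holds exactly under the hypothesis $\P(Y>a)>0$, ensuring the limiting price is nonzero and the implied volatility is well-defined and nontrivial.

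Finally, for \eqref{ch2:eq:sigma0} I would feed the price asymptotics into Theorem~\ref{ch2:th:main1} (the Gao--Lee dictionary). When $a>0$ the normalized strike $\kappa$ stays comparable to $\gamma_t$ while $c(\kappa,t) \sim \gamma_t\,\E[(Y-a)^+]$, so the relevant ratio $c(\kappa,t)/\kappa$ tends to the positive constant $\E[(Y-a)^+]/a$; the Black--Scholes inversion formula then expresses $\sigma_\imp\sqrt{t}$ through $D^{-1}$ evaluated at this ratio, yielding $\sigma_\imp(\kappa,t) \sim \big(a / D^{-1}(\E[(Y-a)^+]/a)\big)\gamma_t/\sqrt{t}$. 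When $a = 0$ (the at-the-money case $\kappa = o(\gamma_t)$), the ratio $c(\kappa,t)/\kappa$ blows up, so one uses the second asymptotic in \eqref{ch2:eq:Das}, $D^{-1}(y) \sim (1/\sqrt{2\pi})(1/y)$ as $y\uparrow\infty$, together with $c(\kappa,t)\sim \gamma_t\,\E[Y^+]$, to get $\sigma_\imp(\kappa,t)\sqrt{t} \sim \sqrt{2\pi}\,\E[Y^+]\,\gamma_t$, i.e. $C_+(0) = \sqrt{2\pi}\,\E[Y^+]$; note $\E[Y^+]=\E[Y^-]$ since $\E[Y]=0$, so $C_+(0)=C_-(0)$ as expected. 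The main obstacle is the bookkeeping of which branch of the Gao--Lee formula applies and verifying that the hypotheses of Theorem~\ref{ch2:th:main1} are met in the borderline $a=0$ regime (the case those authors partially left out, per the introduction); the probabilistic content — uniform integrability plus weak convergence through Lipschitz payoffs — is routine once set up.
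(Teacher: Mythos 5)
Your proof is correct and follows essentially the same strategy as the paper: establish convergence of the rescaled payoff, invoke uniform integrability (from the moment condition \eqref{ch2:eq:assX}) to pass the limit through the expectation, and then dispatch the implied volatility claim to Theorem~\ref{ch2:th:main1}. The one genuine difference of technique is in how the weak limit is handled: the paper applies Skorokhod's representation theorem to upgrade $X_t/\gamma_t \to Y$ to almost-sure convergence, after which $(e^{X_t}-e^\kappa)^+/\gamma_t \to (Y-a)^+$ a.s.\ is immediate by continuity; you instead work directly with convergence in distribution, first showing $(e^{X_t}-1)/\gamma_t \xrightarrow{d} Y$ via a Taylor expansion and Slutsky, then using the Lipschitz continuity of $x \mapsto x^+$. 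Both routes work; Skorokhod is a bit slicker because it avoids the Taylor argument, while your version is more elementary in that it does not require constructing a coupling. Your treatment of $\E[Y]=0$ is a welcome addition --- the paper asserts it in the theorem statement but does not prove it in \S\ref{ch2:sec:proofth:main2a}; your derivation (martingale property gives $\E[(e^{X_t}-1)/\gamma_t]=0$, then pass to the limit by UI) is the natural one, and note one could also obtain it a posteriori from \eqref{ch2:eq:astypical} at $a=0$ together with call-put parity at $\kappa=0$.

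One small imprecision to flag: in the Taylor step you write
\begin{equation*}
\frac{e^{X_t}-1}{\gamma_t} = \frac{X_t}{\gamma_t} + \gamma_t \cdot \Big(\frac{X_t}{\gamma_t}\Big)^2 \cdot (\text{bounded}),
\end{equation*}
but the Lagrange-remainder factor $\tfrac{1}{2}e^{\xi_t}$ (with $\xi_t$ between $0$ and $X_t$) is \emph{not} uniformly bounded, since $X_t$ has no a priori upper bound. Your conclusion is nonetheless correct because what is actually needed is not boundedness but convergence to a constant in probability: since $\gamma_t\to 0$ and $(X_t/\gamma_t)$ is tight, $X_t\to 0$ in probability, hence $e^{\xi_t}\to 1$ in probability, and then Slutsky (tight $\times$ vanishing deterministic $\times$ $\to\!1$ in probability) gives that the remainder tends to $0$ in probability. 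Rephrasing the parenthetical as ``$\to 1/2$ in probability'' makes the argument airtight; alternatively, a cleaner way to sidestep this entirely is the paper's device of applying Skorokhod's theorem first.
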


\begin{remark}\rm\label{rem:equivalent}
Hypothesis~\ref{ch2:ass:smalltime} can be easily checked when the characteristic
function of $X_t$ is known,
because, by L\'evy continuity theorem, 
the convergence in distribution \eqref{ch2:eq:scaling} is equivalent to
the pointwise convergence
$\E[e^{i u X_t/\gamma_t}] \to \E[e^{i u Y}]$ for every $u \in \R$.
We will see concrete examples in
Subsections~\ref{ch2:sec:CW} (Carr-Wu model)
and~\ref{ch2:sec:Merton} (Merton's model).

\smallskip

Another interesting case is that of \emph{diffusions}. Assume that $X_t = \log S_t$, where
$(S_t)_{t\ge 0}$ evolves according to the stochastic differential equation 
\begin{equation}\label{eq:sde}
	\begin{cases}
	\dd S_t = \sqrt{V_t} \, S_t \, \dd W_t \\
	S_0 = 1
	\end{cases} \,,
\end{equation}
where $W = (W_t)_{t\ge 0}$ is a Brownian motion and 
$V = (V_t)_{t\ge 0}$ is a positive adapted
process, representing the volatility, 
possibly correlated with $W$.
Under the mild assumption that $\lim_{t\to 0} V_t = \sigma_0^2$ a.s.,
where $\sigma_0 \in (0,\infty)$ is a \emph{constant},
one can show that Hypothesis~\ref{ch2:ass:smalltime} holds with $\gamma_t = \sqrt{t}$
and $Y \sim N(0,\sigma_0^2)$, see Appendix~\ref{app:diffusions}.

Interestingly, plugging $Y \sim N(0,\sigma_0^2)$ into \eqref{ch2:eq:sigma0} yields
$C_\pm(a) \equiv \sigma_0$ (see Appendix~\ref{app:diffusions}).
Consequently, if the moment condition \eqref{ch2:eq:assX} holds,
we can apply Theorem~\ref{ch2:th:main2a},
getting $\sigma_\imp(\pm\kappa, t) \sim \sigma_0$ along any parabolic curve
$\kappa \sim a \sqrt{t}$. This result is consistent
with recent results by Pagliarani
and Pascucci~\cite{cf:PP15}, who go beyond first-order asymptotics.
\end{remark}

The proof of Theorem~\ref{ch2:th:main2a} is given in \S\ref{ch2:sec:proofth:main2a} below.
The asymptotic behavior \eqref{ch2:eq:astypical} of option prices follows easily from the convergence
in distribution \eqref{ch2:eq:scaling}, because the needed uniform integrability is ensured
by the moment condition \eqref{ch2:eq:assX}.
The asymptotic behavior \eqref{ch2:eq:sigma0} of implied volatility
can again by deduced from the option prices asymptotics in a model independent way,
that we now describe. 

\subsection{From option price to implied volatility}
\label{ch2:sec:main1}

Whenever the option prices $c(\kappa,t)$
or $p(-\kappa,t)$ vanish, they determine the asymptotic behavior
of the implied volatility through \emph{explicit universal formulas}.
These are summarized in the following theorem (of which we give in Section~\ref{ch2:sec:pricetovol}
a self-contained proof),
which gathers results from the recent literature.

\begin{theorem}[From option price to implied volatility]\label{ch2:th:main1}
Consider an arbitrary family of values of $(\kappa,t)$ with $\kappa\ge 0$ and $t>0$,
such that $c(\kappa, t) \to 0$, resp.\ $p(-\kappa, t) \to 0$.
\begin{itemize}
\item {\sf Case of $\kappa$ bounded away from zero
(i.e.\ $\liminf\kappa > 0$).}
\begin{equation} \label{ch2:eq:Vas>00}
\begin{split}
	\sigma_\imp(\kappa,t) & \sim
	\bigg( \sqrt{\frac{-\log c(\kappa,t)}{\kappa} + 1} 
	- \sqrt{\frac{-\log c(\kappa,t)}{\kappa}} \, \bigg)
	\sqrt{\frac{2\kappa}{t}} \,, \qquad \text{resp.} \\
	\sigma_\imp(-\kappa,t)
	& \sim \bigg( \sqrt{\frac{-\log p(-\kappa,t)}{\kappa}} 
	- \sqrt{\frac{-\log p(-\kappa,t)}{\kappa} - 1} \, \bigg)
	\sqrt{\frac{2\kappa}{t}}\,.
\end{split}
\end{equation}
\item {\sf Case of $\kappa\to 0$, with  $\kappa>0$.}
\begin{equation} \label{ch2:eq:Vas<infty0}
\begin{split}
	\sigma_\imp(\kappa,t) & \sim
	\frac{1}{\rule{0pt}{1.2em}D^{-1} \Big(\frac{c(\kappa,t)}{\kappa} \Big)}
	\, \frac{\kappa}{\sqrt{t}} \,, \qquad \text{resp.} \\
	\sigma_\imp(-\kappa,t) & \sim
	\frac{1}{\rule{0pt}{1.2em}D^{-1} \Big(\frac{p(-\kappa,t)}{\kappa} \Big)}
	\, \frac{\kappa}{\sqrt{t}} \,,
\end{split}
\end{equation}
where the function $D: (0,\infty) \to (0,\infty)$ is defined in \eqref{ch2:eq:D}-\eqref{ch2:eq:Das}.

\item {\sf Case of $\kappa = 0$.}
\begin{equation}\label{ch2:eq:Vas00}
	\sigma_\imp(0,t) \sim \sqrt{2\pi} \,\, \frac{c(0,t)}{\sqrt{t}}
	= \sqrt{2\pi} \,\, \frac{p(0,t)}{\sqrt{t}} \,.
\end{equation}
\end{itemize}
\end{theorem}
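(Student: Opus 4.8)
The plan is to prove Theorem~\ref{ch2:th:main1} by relating the Black\&Scholes price functions to the known expansions of the Gaussian distribution, and then inverting. Recall that if $\sigma_\imp = \sigma_\imp(\kappa,t)$ is the implied volatility at log-strike $\kappa$ and maturity $t$, then by definition $c(\kappa,t) = \CBS(\kappa, \sigma_\imp\sqrt{t})$, where $\CBS(\kappa, v) = \Phi(d_+) - e^\kappa \Phi(d_-)$ with $d_\pm = -\kappa/v \pm v/2$ is the Black\&Scholes call price as a function of log-strike $\kappa$ and \emph{total standard deviation} $v = \sigma\sqrt{t}$. The first step is to record the asymptotics of $\CBS(\kappa,v)$ (equivalently of the put, via the parity relation \eqref{ch2:eq:parity}, which holds for Black\&Scholes too) in the three regimes that mirror the three cases of the statement: (i) $\kappa$ bounded away from zero; (ii) $\kappa\to 0$ but $\kappa>0$; (iii) $\kappa = 0$. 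In regime (iii) this is immediate: $\CBS(0,v) = \Phi(v/2) - \Phi(-v/2) \sim v/\sqrt{2\pi}$ as $v\to 0$, which inverts to \eqref{ch2:eq:Vas00}. In regime (ii) one uses the function $D$ from \eqref{ch2:eq:D}: a short computation shows $\CBS(\kappa,v)/\kappa$ is, to leading order as $\kappa\downarrow 0$ and $v/\kappa$ in a suitable range, asymptotic to $D(\kappa/v)$ (using $\Phi(-z) = \phi(z)/z - D(z)$ and Taylor expanding $e^\kappa \approx 1+\kappa$), so that $\kappa/v \sim D^{-1}(c(\kappa,t)/\kappa)$ and \eqref{ch2:eq:Vas<infty0} follows. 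In regime (i), the relevant fact is the classical Gaussian tail estimate $\Phi(-z)\sim \phi(z)/z$ together with the observation that $\log\CBS(\kappa,v) \sim -\tfrac12 (\kappa/v - v/2)^2 + \kappa = -\tfrac{\kappa^2}{2v^2} + \text{l.o.t.}$; solving the resulting quadratic in $v = \sigma\sqrt t$ produces the two-term expression in \eqref{ch2:eq:Vas>00}.

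The core work, then, is an \emph{inversion} argument: from the asymptotic behavior of $\CBS(\kappa, v)$ as a function of $v$ one must deduce the behavior of $v$ given the asymptotics of the price $c(\kappa,t)\to 0$. The cleanest way is to exploit monotonicity: $v\mapsto \CBS(\kappa,v)$ is strictly increasing and continuous, hence invertible, so one can define implicitly the candidate expression $v^*(\kappa,t)$ appearing on the right-hand side of each formula, verify that $\CBS(\kappa, v^*(\kappa,t)) \sim c(\kappa,t)$ (and similarly $\CBS(\kappa, v^*(1+\epsilon)) $ and $\CBS(\kappa, v^*(1-\epsilon))$ straddle $c(\kappa,t)$ for every fixed $\epsilon>0$ eventually), and conclude $\sigma_\imp\sqrt t / v^* \to 1$ by a squeezing argument. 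Here the subsequence device mentioned in the Notation paragraph is very convenient: one may assume $\kappa\to\bar\kappa\in[0,\infty]$ and $t\to\bar t\in[0,\infty)$, and one may also assume the relevant normalized quantities ($-\log c(\kappa,t)/\kappa$, or $c(\kappa,t)/\kappa$) converge in $[0,\infty]$, which splits each regime into finitely many sub-cases that can be handled with a single unified estimate for $\CBS$.

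I would organize the proof by first isolating, as a lemma, the needed two-sided bounds on $\CBS(\kappa,v)$ and $\VBS(\kappa,v)$ (the macros \verb|\CBS|, \verb|\VBS| are already defined in the preamble, suggesting the authors do exactly this): explicitly, sharp upper and lower bounds of the form $\phi(d_+)/|d_+| \cdot(1 + o(1))$-type estimates uniform in the relevant range of $(\kappa, v)$. The Gaussian tail bound $\frac{z}{1+z^2}\phi(z) \le \Phi(-z)\le \frac1z\phi(z)$ (for $z>0$) and its refinements are the only analytic inputs; everything else is algebra. Then the three displayed formulas follow by plugging the candidate $v^*$ into these bounds and checking the leading-order terms match. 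The regime (i) formula requires a little care because $-\log c(\kappa,t)/\kappa$ can converge to any value in $[1,\infty]$ (it is $\ge 1$ asymptotically since $c \le 1$ forces... actually since $c(\kappa,t) \le \E[e^{X_t}] e^{-\kappa}\wedge 1$... in any case) — one treats the cases ``limit $=\infty$'' and ``limit finite'' separately, the former giving the simplified one-term formula.

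The main obstacle I anticipate is \emph{uniformity}: because $(\kappa,t)$ ranges over an arbitrary path and not a fixed sequence, the asymptotic estimates for $\CBS(\kappa,v)$ must be genuinely uniform over the joint range of $(\kappa, v^*(\kappa,t))$ that can occur, rather than pointwise in $\kappa$. In regime (i) with $\kappa\to\infty$ one needs the error terms in the Gaussian tail expansion to be controlled uniformly as $d_+ \to \infty$, which is fine; the delicate point is the crossover regime $\kappa\to 0$, where $v^*$ and $\kappa$ vanish together and the argument $z = \kappa/v^*$ of $D$ may itself tend to $0$, to a finite limit, or to $\infty$, forcing one to use both asymptotics in \eqref{ch2:eq:Das} and the global monotonicity/bijectivity of $D$. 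Managing these sub-regimes uniformly — ideally by reducing, via the subsequence argument, to the case where all normalized quantities converge, and then writing one estimate per limit value — is where the bookkeeping concentrates. This is also, as the authors note, the place where Gao--Lee's analysis needs the small supplement (the case $\kappa\to 0$, $\kappa>0$ with $c(\kappa,t)/\kappa$ not tending to $0$), so the $D^{-1}$ formula \eqref{ch2:eq:Vas<infty0} deserves the most attention.
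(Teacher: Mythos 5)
Your proposal follows essentially the same route as the paper. Like the paper, you reduce to the Black\&Scholes price viewed through the total standard deviation $v = \sigma_\imp\sqrt{t}$, derive sharp asymptotics for $\CBS(\kappa,v)$ in each regime from Gaussian tail estimates (the paper isolates these as Lemma~\ref{ch2:th:Mills} and Proposition~\ref{ch2:th:BS}, with the regimes indexed by $d_1\to-\infty$ versus $v\to 0$ rather than by $\kappa$ directly), handle the $\kappa\le 0$ case by the put--call symmetry $\VBS(\kappa,c)=\VBS(-\kappa,e^{-\kappa}p)$, and use the subsequence device to normalize the various limits. You also correctly identify the key identity that rewrites $\CBS$ near the money in terms of $D$ (the paper arrives at it via $-U'(z)\phi(z)=zD(z)$, with $U$ the Mills ratio, rather than by a direct Taylor expansion), and correctly locate the gap in Gao--Lee that \eqref{ch2:eq:Vas<infty0} fills. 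The one place you diverge is the inversion step: you propose an $\epsilon$-straddling/squeezing argument ($\CBS(\kappa,v^*(1\pm\epsilon))$ bracket $c$), whereas the paper instead writes the exact relation with a vanishing multiplicative error $(1+\gamma)$, solves explicitly for $v$, and then proves stability of the explicit solution (for \eqref{ch2:eq:Vas>00} via a uniform bound on an auxiliary function $f_\gamma(x)\to 1$; for \eqref{ch2:eq:Vas<infty0} via a case split on the limit of $c/\kappa$). Both are valid; the paper's exact-solve variant sidesteps having to re-expand $\CBS$ at the perturbed volatilities $v^*(1\pm\epsilon)$, which in the cross-over regime $\kappa\to 0$ (where $\kappa/v$ may go to $0$, a constant, or $\infty$) would otherwise require the same amount of case analysis twice. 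If you do pursue the squeezing route, make sure the verification is stated on the logarithmic scale in regime (i): one only gets $\log\CBS(\kappa,v^*)\sim\log c$, not $\CBS(\kappa,v^*)\sim c$, and the straddling inequalities must be read off from the former.
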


We stress that Theorem~\ref{ch2:th:main1} allows to derive
immediately all the asymptotic relations for the implied volatility
$\sigma_\imp(\pm\kappa,t)$ appearing in Theorems~\ref{ch2:th:main2b}, \ref{ch2:th:main2bl}
and~\ref{ch2:th:main2a} from the corresponding relations for the option
prices $c(\kappa,t)$ and $p(-\kappa,t)$.

The main part of Theorem~\ref{ch2:th:main1} is equation \eqref{ch2:eq:Vas>00},
which was recently proved
by Gao and Lee \cite{cf:GL} extending previous results of Lee \cite{cf:L},
Roper and Rutkowski~\cite{cf:RR},
Benaim and Friz~\cite{cf:BF09} and Gulisashvili~\cite{cf:G}.
As a matter of fact, Gao and Lee prove much more than \eqref{ch2:eq:Vas>00},
providing explicit estimates for the error beyond
first order asymptotics. 

Equation \eqref{ch2:eq:Vas<infty0} is a new contribution of the present paper.
In fact,  \cite{cf:GL} assume that $-\log \kappa = o( -\log c(\kappa,t) )$
(cf.\ equation (4.2) therein), which excludes the regimes with
$\kappa \to 0$ ``fast enough''. The relevance of such regimes
has been recently shown in \cite{cf:MT12}, where the special case 
$\kappa \propto \sqrt{t \log(1/t)}$ is considered
(see \cite[Theorem~3.1]{cf:MT12}).

\begin{remark}\rm\label{ch2:rem:simple}
Relation \eqref{ch2:eq:Vas<infty0} provides an \emph{interpolation
between the at-the-money and out-of-the-money regimes},
described by \eqref{ch2:eq:Vas00} 
and \eqref{ch2:eq:Vas>00}. Let us be more explicit.

Using \eqref{ch2:eq:Das},
formula \eqref{ch2:eq:Vas<infty0} can be rewritten as follows:
\begin{align}\label{ch2:eq:Vas<infty0simple}
	& \sigma_\imp(\kappa,t) \sim \begin{cases}
	\rule{0em}{1.6em}\displaystyle
	\frac{\kappa}{\sqrt{2t\, ( - \log ( c(\kappa,t) / \kappa ) )}} & 
	\text{ if \ 
		$\displaystyle\frac{c(\kappa,t)}{\kappa} \to 0$} \,; \\
	\rule[-1.4em]{0em}{3.5em}\displaystyle
	\frac{\kappa}{D^{-1}(a) \, \sqrt{t}}  & 
	\text{ if \ 
	$\displaystyle\frac{c(\kappa,t)}{\kappa} \to a \in (0,\infty)$} \,; \\
	\rule[-1.4em]{0em}{3.5em}\displaystyle
	\sqrt{2\pi} \, \frac{c(\kappa,t)}{\sqrt{t}}  & 
	\text{ if \ 
	$\displaystyle\frac{c(\kappa,t)}{\kappa} \to \infty$, \ or
	if \ $\kappa = 0$} \,,
	\end{cases}
\end{align}
and analogously for $\sigma_\imp(-\kappa,t)$, just replacing
$c(\kappa,t)$ by $p(-\kappa,t)$.

Note that the last line in \eqref{ch2:eq:Vas<infty0simple} matches
with the at-the-money regime \eqref{ch2:eq:Vas00}.
In order to see how \eqref{ch2:eq:Vas<infty0simple} matches
with the out-of-the-money regime \eqref{ch2:eq:Vas>00}, it suffices to
note that whenever $\frac{-\log c(\kappa,t)}{\kappa} \to \infty$, 
resp.\ $\frac{-\log p(-\kappa,t)}{\kappa} \to \infty$,
formula \eqref{ch2:eq:Vas>00} can be rewritten as
\begin{equation}\label{ch2:eq:Vas>00simple}
	\sigma_\imp(\kappa,t) \sim \frac{\kappa}{\sqrt{2t\, (-\log  c(\kappa,t) )}} \,,
	\quad \ \ \text{resp.} \quad \ \
	\sigma_\imp(-\kappa,t) \sim \frac{\kappa}{\sqrt{2t\, (-\log  p(-\kappa,t) )}} \,,
\end{equation}
and this coincides with the first line of 
\eqref{ch2:eq:Vas<infty0simple} when $\kappa \to 0$ slowly enough, namely
\begin{equation}\label{eq:BL}
	-\log \kappa = o\big( -\log c(\kappa,t) \big)\,.
\end{equation}
\end{remark}

\subsection{Discussion}

Theorems~\ref{ch2:th:main2b}, \ref{ch2:th:main2bl}
and \ref{ch2:th:main2a} are useful because their assumptions,
involving asymptotics for the tail probabilities
$\F_t(\kappa)$ and $F_t(-\kappa)$,
can be directly verified for many concrete models
(see Section~\ref{ch2:sec:examples} for some examples).
The difference between the regimes of typical and atypical deviations
can be described as follows:
\begin{itemize}
\item for typical deviations, the key assumption is
Hypothesis~\ref{ch2:ass:smalltime}, which
concerns the \emph{weak convergence}
of $X_t$, cf.\ \eqref{ch2:eq:scaling}-\eqref{ch2:eq:weakconv};

\item for atypical deviations,
the key assumption is Hypothesis~\ref{ch2:ass:rv}, which
concerns the \emph{large deviations} properties of $X_t$,
cf.\ \eqref{ch2:eq:rv}-\eqref{ch2:eq:cont1}.
\end{itemize}

In particular, it is worth stressing that 
Hypothesis~\ref{ch2:ass:rv}
requires sharp asymptotics only for \emph{the logarithm of the tail probabilities
$\log\F_t(\kappa)$ and $\log F_t(-\kappa)$}, and not for the tail
probabilities themselves, which would be a considerably harder task
(out of reach for many models). As a consequence,
Hypothesis~\ref{ch2:ass:rv} can often be checked
through the celebrated \emph{G\"artner-Ellis Theorem} \cite[Theorem 2.3.6]{cf:DZ},
which yields sharp asymptotics on $\log\F_t(\kappa)$ and $\log F_t(-\kappa)$
under suitable conditions on the moment generating function of $X_t$.

\section{Applications}
\label{ch2:sec:examples}

In this section we show the relevance of our main
theoretical results, deriving asymptotic expansions
of the implied volatility for Carr-Wu finite moment logstable model (\S\ref{ch2:sec:CW}) 
and Merton's jump diffusion model (\S\ref{ch2:sec:Merton}).
The case of Heston's model is briefly discussed in~\S\ref{ch2:sec:Heston}.

Our results can also be applied to a stochastic volatility model, 
recently introduced in \cite{cf:ACDP}, which exhibits multiscaling of moments.
Even though no closed expression is available for the moment generating 
function of the log-price, the tail probabilities can be estimated
\emph{explicitly}, as we show in a separate paper \cite{cf:CC}.
This leads to precise asymptotics for the implied volatility,
thanks to Theorems~\ref{ch2:th:main2b}, \ref{ch2:th:main2bl} and~\ref{ch2:th:main2a}.

\subsection{Carr-Wu Finite Moment Logstable Model}
\label{ch2:sec:CW}

Carr and Wu \cite{cf:CW04} consider a model
where the log-strike $X_t$ has characteristic function
\begin{equation} \label{ch2:eq:charX}
	  \E\big[e^{iuX_t}\big]=
	 e^{t\left[iu\mu - |u|^{\alpha}\sigma^{\alpha}
	\left(1 + i \, \sign (u) \tan ( \frac{\pi \alpha}{2}) \right) \right]} \,,
\end{equation}
where $\sigma \in (0,\infty)$, $\alpha \in (1,2]$
and we fix $\mu := \sigma^\alpha / \cos(\frac{\pi\alpha}{2})$
to work in the risk-neutral measure,
cf.\ \cite[Proposition 1]{cf:CW04}. 
The moment generating function
of $X_t$ is
\begin{equation} \label{ch2:eq:momgenCW}
	\E\big[ e^{\lambda X_t} \big] =
	\begin{cases}
	\displaystyle
	e^{[\lambda \mu -
	\frac{(\lambda\sigma)^{\alpha}}{\cos (\frac{\pi \alpha}{2} )} ] \, t} &
	\text{if } \lambda \ge 0 \,, \\
	+\infty & \text{if } \lambda < 0 \,.
	\end{cases} 
\end{equation}
Note that as $\alpha \to 2$ one recovers Black\&Scholes model 
with volatility $\sqrt{2}\sigma$, cf.\ \S\ref{ch2:sec:BS} below.

Applying Theorems~\ref{ch2:th:main2b}, \ref{ch2:th:main2bl} and \ref{ch2:th:main2a},
we give a \emph{complete characterization} of the volatility smile
asymptotics with bounded maturity. 
This
includes, in particular, the regimes
of extreme strike ($\kappa \to \pm\infty$ with fixed $t > 0$) and
of small maturity ($t \to 0$ with fixed $\kappa$).

\begin{theorem}[Smile asymptotics of Carr-Wu model]\label{th:CW}
The following asymptotics hold.
\begin{itemize}
\item {\sf Atypical deviations.} Consider any family of $(\kappa,t)$ with
$\kappa \ge 0$, $t > 0$ such 
that
\begin{equation} \label{ch2:eq:CWaty}
	\text{either} \qquad t \to 0 \ \ \text{and} \ \ \kappa \gg t^{1/\alpha} \,, \qquad
	\text{or} \qquad t \to \bar t \in (0,\infty) \ \
	\text{and} \ \ \kappa \to \infty \,.
\end{equation}
(This includes the regimes \eqref{ch2:it:a}, \eqref{ch2:it:b}, \eqref{ch2:it:c}
on page \pageref{ch2:it:a}, and part of regime \eqref{ch2:it:d}.)
Then one has the right-tail asymptotics
\begin{equation}\label{ch2:eq:CWinf}
	\sigma_\imp(\kappa,t) \sim 
	B_\alpha \bigg(\frac{\kappa}{t}\bigg)^{-\frac{2-\alpha}{2(\alpha-1)}} \,,
	\qquad \text{where} \ \
	B_\alpha :=
	\frac{(\alpha\sigma)^{\frac{\alpha/2}{\alpha-1}}}
	{\sqrt{2(\alpha-1)} \, |\cos(\frac{\pi \alpha}{2})|^{\frac{1/2}{\alpha-1}}}	\,.
\end{equation}
The corresponding left-tail asymptotics are given by
\begin{equation} \label{ch2:eq:qualef}
\begin{split}
	\sigma_\imp(-\kappa,t) & \sim
	\left( \sqrt{\frac{\log \frac{\kappa^\alpha}{t}}{\kappa} + 1}
	- \sqrt{\frac{\log \frac{\kappa^\alpha}{t}}{\kappa}}\, \right)
	\sqrt{\frac{2\kappa}{t}} \,,
\end{split}
\end{equation}
which can be made more explicit distinguishing different regimes:
\begin{equation}\label{ch2:eq:CWinfl}
	\sigma_\imp(-\kappa,t) \sim
	\begin{cases}
	\displaystyle
	\frac{\kappa}{\sqrt{2 t \log \frac{\kappa^\alpha}{t}}}
	& \ \ \, \text{if } \ t \to 0 \ \text{ and } \
	\displaystyle
	\frac{\kappa}{\log \frac{1}{t}} \to 0 \,,\\
	\displaystyle
	\rule{0pt}{2.2em}
	\left(\frac{\sqrt{1+a} - 1}{\sqrt{a}}\right)
	\sqrt{\frac{2\kappa}{t}}
	& \ \ \, \text{if } \ t \to 0 \ \text{ and } \
	\displaystyle
	\frac{\kappa}{\log \frac{1}{t}} \to a \in (0,\infty) \,,\\
	\displaystyle
	\rule{0pt}{3.0em}
	\sqrt{\frac{2\kappa}{t}}
	& 
	\displaystyle
	\begin{cases}
	\text{if } \ 
	t \to 0 \ \text{ and } \ 
	\displaystyle
	\frac{\kappa}{\log \frac{1}{t}} \to \infty \,, \\
	\rule{0pt}{1.3em}
	\text{if } \ 
	t \to \bar t \in (0,\infty) \ \text{ and } \ \kappa \to \infty \,.
	\end{cases} \,.
	\end{cases}
\end{equation}

\item {\sf Typical deviations.} For any
family of $(\kappa,t)$ with
\begin{equation} \label{ch2:eq:CWty}
	t \to 0\,, \qquad
	\frac{\kappa}{t^{1/\alpha}} \to a \in [0,\infty) \,,
\end{equation}
one has
\begin{equation} \label{ch2:eq:CWzero}
	\sigma_\imp(\pm\kappa, t) \sim C_\pm(a) \, t^{\frac{2-\alpha}{2\alpha}} \,,
	\quad \text{with} \quad
	C_\pm(a) := \begin{cases}
	\displaystyle
	\frac{a}{D^{-1}\big( \frac{\E[(\sigma Y \mp a)^\pm]}{a} \big)}
	& \text{if } a > 0 \,, \\
	\displaystyle
	\rule{0pt}{1.6em}
	\sqrt{2\pi} \, \sigma \,\E[Y^\pm]
	& \text{if } a = 0 \,.
	\end{cases}
\end{equation}
\end{itemize}
\end{theorem}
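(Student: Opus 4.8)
The plan is to feed the large--deviation and weak--convergence structure of $X_t$ into Theorems~\ref{ch2:th:main2b}, \ref{ch2:th:main2bl} and~\ref{ch2:th:main2a}, with a single corner of the left--tail analysis handled instead by a direct option--price estimate plus Theorem~\ref{ch2:th:main1}. The basic tool is the self--similar identity $X_t \stackrel{d}{=} \mu t + t^{1/\alpha} Z$, immediate from \eqref{ch2:eq:charX}, where $Z := X_1 - \mu \stackrel{d}{=} \sigma Y$ is a maximally (negatively) skewed $\alpha$--stable law with $\E[Z] = 0$, light right tail and polynomial left tail $\P(Z \le -v) \sim K v^{-\alpha}$ as $v \to \infty$ ($Y$ being the unit--scale version appearing in \eqref{ch2:eq:CWzero}).

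\emph{Right tail.} By \eqref{ch2:eq:momgenCW}, $\log\E[e^{\l Z}] = \sigma^\alpha \l^\alpha / |\cos(\tfrac{\pi\alpha}{2})|$ for $\l \ge 0$ (recall $\cos(\tfrac{\pi\alpha}{2}) < 0$), with Legendre transform $\Lambda_Z^\ast(z) = c_\alpha\, z^{\alpha/(\alpha-1)}$, $c_\alpha := \tfrac{\alpha-1}{\alpha}\bigl(\tfrac{|\cos(\pi\alpha/2)|}{\alpha\sigma^\alpha}\bigr)^{1/(\alpha-1)}$; the classical tail asymptotics of a maximally skewed stable (equivalently, a Gärtner--Ellis argument, cf.\ \cite[Thm~2.3.6]{cf:DZ}) give $\log\P(Z > z) \sim -\Lambda_Z^\ast(z)$ as $z \to \infty$. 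On \eqref{ch2:eq:CWaty} the argument $(\kappa-\mu t)/t^{1/\alpha} \to \infty$ and is $\sim \kappa/t^{1/\alpha}$ (the drift negligible since $\kappa \gg t^{1/\alpha} \gg |\mu|t$ when $t \to 0$, and $\kappa \to \infty$ when $t \to \bar t$), whence
\[
	-\log\F_t(\kappa) \;\sim\; c_\alpha\, \kappa^{\frac{\alpha}{\alpha-1}}\, t^{-\frac{1}{\alpha-1}} \,, \qquad \text{so} \quad -\frac{\log\F_t(\kappa)}{\kappa} \to \infty \,, \qquad I_+(\rho) = \rho^{\alpha/(\alpha-1)} \,.
\]
Thus Hypothesis~\ref{ch2:ass:rv} and \eqref{ch2:eq:I+infty} hold, and the moment conditions are immediate (boundedness of $\E[e^{(1+\eta)X_t}]$ for all $\eta$ because $t$ is bounded, and \eqref{ch2:eq:moment0p} for any $\eta < \alpha-1$). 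Theorem~\ref{ch2:th:main2b}(i) in the regimes \eqref{ch2:it:a}--\eqref{ch2:it:c} and Theorem~\ref{ch2:th:main2b}(ii) in the part of~\eqref{ch2:it:d} with $\kappa \gg t^{1/\alpha}$ --- in both cases the ``special case'' --- give $\sigma_\imp(\kappa,t) \sim \kappa/\sqrt{2t\,(-\log\F_t(\kappa))}$; inserting these asymptotics and simplifying produces \eqref{ch2:eq:CWinf}, the only subtlety being the bookkeeping that the constant equals $B_\alpha$.

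\emph{Left tail and typical deviations.} From $F_t(-\kappa) = \P\bigl(Z \le -(\kappa-\mu t)/t^{1/\alpha}\bigr)$ and $\P(Z \le -v) \sim K v^{-\alpha}$ one gets $F_t(-\kappa) \sim K t\, \kappa^{-\alpha}$ on \eqref{ch2:eq:CWaty}, hence $-\log F_t(-\kappa) \sim \log(\kappa^\alpha/t) \to \infty$ and $I_-(\rho) \equiv 1$, so Hypothesis~\ref{ch2:ass:rv} holds. When $\liminf\kappa > 0$, Theorem~\ref{ch2:th:main2bl} (first bullet, no moment condition needed) gives \eqref{ch2:eq:sigmainfl}, i.e.\ \eqref{ch2:eq:qualef}. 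When $\kappa \to 0$ (so necessarily $\kappa/\log\tfrac1t \to 0$), I would estimate the put price directly: $p(-\kappa,t) = \int_{-\infty}^{-\kappa} F_t(x)\, e^x\, \dd x \asymp t\,\kappa^{1-\alpha}$ (the integrand dominated near $x=-\kappa$), so $p(-\kappa,t)/\kappa \to 0$ with $-\log(p(-\kappa,t)/\kappa) \sim \log(\kappa^\alpha/t)$, and Theorem~\ref{ch2:th:main1} (case $\kappa \to 0$, using $D^{-1}(y)\sim\sqrt{2(-\log y)}$) again yields $\sigma_\imp(-\kappa,t) \sim \kappa/\sqrt{2t\log(\kappa^\alpha/t)}$, which matches \eqref{ch2:eq:qualef}. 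The explicit forms \eqref{ch2:eq:CWinfl} then follow by computing $\lim \frac{\log(\kappa^\alpha/t)}{\kappa} \in \{\infty,\, 1/a,\, 0\}$ in the three cases. Finally, for typical deviations \eqref{ch2:eq:CWty}, the self--similar identity gives Hypothesis~\ref{ch2:ass:smalltime} with $\gamma_t = t^{1/\alpha}$ and limit $\sigma Y$ (so $\E[\sigma Y] = 0$), while \eqref{ch2:eq:assX} holds for any $\eta < \alpha-1$ by uniform integrability ($e^x-1 \sim x$ near $0$ and light positive exponential moments of $X_t$); Theorem~\ref{ch2:th:main2a} then gives \eqref{ch2:eq:astypical}--\eqref{ch2:eq:CWzero}, with $\gamma_t/\sqrt t = t^{(2-\alpha)/(2\alpha)}$.

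\emph{Main obstacle.} The delicate point is exactly the left tail in the corner $t \to 0$, $\kappa \to 0$, $\kappa \gg t^{1/\alpha}$: there neither alternative of Theorem~\ref{ch2:th:main2bl}(ii) applies --- the polynomial left tail makes $\E[|(e^{X_t}-1)/\kappa|^{1+\eta}]$ diverge for $\eta > \alpha-1$ when $\kappa^2 \ll t$, while $I_- \equiv 1$ rules out \eqref{ch2:eq:I-infty} --- so one cannot simply quote the theorem and must run the direct put--price computation above. The remaining ingredients (the Legendre/stable--tail identification of $-\log\F_t(\kappa)$, with care about uniformity in the drift, and the algebra isolating $B_\alpha$) are routine.
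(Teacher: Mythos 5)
Your proposal is correct and follows essentially the same route as the paper: identify the self-similar structure $X_t \stackrel{d}{=} \mu t + \sigma t^{1/\alpha} Y$, transfer the stable right/left tail estimates to $\F_t,\,F_t$ so that Hypothesis~\ref{ch2:ass:rv} holds with $I_+(\rho)=\rho^{\alpha/(\alpha-1)}$ and $I_-\equiv 1$, verify the moment conditions (the paper does this via the explicit bound~\eqref{ch2:eq:superest}), and invoke Theorems~\ref{ch2:th:main2b}, \ref{ch2:th:main2bl} and~\ref{ch2:th:main2a}. You also correctly isolated the one corner that Theorem~\ref{ch2:th:main2bl}(ii) leaves uncovered (left tail, $\kappa\to 0$, since $I_-\equiv 1$ and \eqref{ch2:eq:moment0p} holds only for $\eta<\alpha-1$) and handled it exactly as the paper does, by bounding $p(-\kappa,t)$ directly via $\int_{-\infty}^{-\kappa} e^x F_t(x)\,\dd x \asymp t\,\kappa^{1-\alpha}$ and then applying Theorem~\ref{ch2:th:main1}.
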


\begin{remark}[Surface asymptotics for the Carr-Wu model]\rm\label{ch2:rem:joint}
The fact that relations \eqref{ch2:eq:CWinf} and \eqref{ch2:eq:qualef}
hold for \emph{any} family of $(\kappa,t)$ satisfying \eqref{ch2:eq:CWaty}
yields interesting consequences. We claim that, for any $T \in (0,\infty)$ and
$\epsilon > 0$, there exists $M = M(\epsilon,T) \in (0, \infty)$ such that
the following inequalities hold
\emph{for all $(\kappa,t)$ in the region $\cA_{T,M} := \{0 < t \le T, \ \kappa > M t^{1/\alpha}\}$}:
\begin{equation} \label{ch2:eq:jointsurface}
	\big(1-\epsilon\big)
	B_\alpha \bigg(\frac{\kappa}{t}\bigg)^{-\frac{2-\alpha}{2(\alpha-1)}} \le
	\sigma_\imp(\kappa,t) \le \big(1+\epsilon\big)
	B_\alpha \bigg(\frac{\kappa}{t}\bigg)^{-\frac{2-\alpha}{2(\alpha-1)}} \,.
\end{equation}
Similar inequalities can be deduced from
\eqref{ch2:eq:qualef}-\eqref{ch2:eq:CWinfl} and  \eqref{ch2:eq:CWzero}.
Relation \eqref{ch2:eq:jointsurface} 
gives a \emph{uniform approximations} of the volatility surface $\sigma_\imp(\kappa,t)$
in open regions of the plane $(\kappa,t)$.

The proof of \eqref{ch2:eq:jointsurface} is simple:
assume by contradiction that there exist $T, \epsilon \in (0,\infty)$
such that for every $M \in (0,\infty)$ relation \eqref{ch2:eq:jointsurface}
\emph{fails} for some $(\kappa_M, t_M) \in \cA_{T,M}$. Extracting a subsequence, the family
$(\kappa_M, t_M)$ satisfies \eqref{ch2:eq:CWaty} but not
\eqref{ch2:eq:CWinf}, contradicting Theorem~\ref{th:CW}.
\end{remark}

\begin{proof}[Proof of Theorem~\ref{th:CW}]
Let $Y$ denote a random variable with characteristic function
\begin{equation} \label{ch2:eq:cfY}
	\E[e^{i u Y}] = e^{-|u|^\alpha\left(1+i \, \sign (u) \tan ( \frac{\pi \alpha}{2}) \right)} \,,
\end{equation}
i.e.\ $Y$ has a strictly stable law with index $\alpha$ 
and skewness parameter $\beta = -1$, and $\E[Y] = 0$.

If we set
\begin{equation}\label{ch2:eq:Yt}
	Y_t := \frac{X_t - \mu t}{\sigma t^{1/\alpha}} \,,
\end{equation}
it follows by \eqref{ch2:eq:charX} that $Y_t$ has
the same distribution as $Y$, because
\begin{equation} \label{ch2:eq:cfYt}
	\E[e^{i u Y_t}] =
	\E[e^{i u Y}] = e^{-|u|^\alpha\left(1+i \, \sign (u) \tan ( \frac{\pi \alpha}{2}) \right)} \,.
\end{equation}
It follows by \eqref{ch2:eq:Yt} that
\begin{equation} \label{ch2:eq:convXresc}
	\frac{X_t}{t^{1/\alpha}} 
	\xrightarrow[t\downarrow 0]{d} \sigma Y \,,
\end{equation}
hence Hypothesis~\ref{ch2:ass:smalltime} is satisfied
with $\gamma_t := t^{1/\alpha}$. 

Note that $\P(Y > a) > 0$ and $\P(Y < -a) > 0$ for all $a\in\R$, because
the density of $Y$ is strictly positive everywhere.
The right tail of $Y$ has a super-exponential decay: as $\kappa\to\infty$
\begin{equation} \label{ch2:eq:Y}
	\log \P(Y > \kappa) \sim -
	\tilde B_\alpha \, \kappa^{\alpha/(\alpha-1)}
	\qquad \text{where} \qquad
	\tilde B_\alpha := \frac{\alpha-1}{\alpha}
	\bigg( \frac{|\cos(\frac{\pi \alpha}{2})|}{\alpha} \bigg)^{1/(\alpha-1)} \,,
\end{equation}
cf.\ \cite[Property 1 and references therein]{cf:CW04}.
On the other hand the left tail is polynomial:
there exists $c = c_\alpha \in (0,\infty)$ such that
\begin{equation}\label{ch2:eq:Yleft}
       \P(Y \le -\kappa) \sim \frac{c}{\kappa^\alpha} \,,
\qquad \text{hence} \qquad
	\log \P(Y \le -\kappa) \sim -\alpha \log\kappa \,.
\end{equation}
Recalling that $\F_t(\kappa) := \P(X_t > \kappa)$ and $F_t(-\kappa) := \P(X_t \le \kappa)$,
by \eqref{ch2:eq:Yt} we can write
\begin{equation} \label{ch2:eq:FXY}
	\F_t(\kappa) = \P\bigg( Y > \frac{\kappa-\mu t}{\sigma t^{1/\alpha}} \bigg) \,,
	\qquad
	F_t(-\kappa) = \P\bigg( Y \le \frac{-\kappa-\mu t}{\sigma t^{1/\alpha}} \bigg) \,,
\end{equation}
hence we can transfer the estimates \eqref{ch2:eq:Y} and \eqref{ch2:eq:Yleft} to $X_t$.

Henceforth we consider separately the regimes of atypical deviations
\eqref{ch2:eq:CWaty}, and that of typical deviations \eqref{ch2:eq:CWty}.
Note that it is easy to check that \eqref{ch2:eq:qualef}
is equivalent to \eqref{ch2:eq:CWinfl}.

\subsubsection{Atypical deviations}

Let us fix an arbitrary family of values of
$(\kappa,t)$ satisfying \eqref{ch2:eq:CWaty}.
Then $\kappa / t \to \infty$ (because $\alpha > 1$), hence
\begin{equation*}
	\frac{\kappa-\mu t}{\sigma t^{1/\alpha}} \sim
	\frac{\kappa}{\sigma t^{1/\alpha}} \to \infty \,,
	\qquad
	\frac{-\kappa-\mu t}{\sigma t^{1/\alpha}} \sim
	\frac{-\kappa}{\sigma t^{1/\alpha}} \to -\infty \,.
\end{equation*}
By \eqref{ch2:eq:Y}, \eqref{ch2:eq:Yleft} and \eqref{ch2:eq:FXY} we then obtain
\begin{equation} \label{ch2:eq:CWF}
	\log\F_t(\kappa) \sim - \tilde B_\alpha \,
	\bigg( \frac{\kappa}{\sigma t^{1/\alpha}}
	\bigg)^{\alpha/(\alpha-1)} \,,
	\qquad
	\log F_t(-\kappa) \sim 
	- \log \frac{\kappa^\alpha}{t} \,.
\end{equation}

Let us now check the assumptions of Theorem~\ref{ch2:th:main2b}.
\begin{itemize}
\item The first relation in \eqref{ch2:eq:CWF} shows that Hypothesis~\ref{ch2:ass:rv} is satisfied
by the right tail $\F_t(\kappa)$, with $I_+(\rho) = \rho^{\alpha/(\alpha-1)}$.
Note that $I_+(\rho) \ge \rho$ for all $\rho\ge 1$, since $\alpha > 1$,
hence also condition \eqref{ch2:eq:Iplus} is satisfied.

\item Condition \eqref{ch2:eq:moment} is satisfied because
\eqref{ch2:eq:momentsimple} holds for all $T > 0$ and $\eta > 0$,
by \eqref{ch2:eq:momgenCW}.

\item It remains to check  condition \eqref{ch2:eq:moment0p}.
As we prove below, for all $\eta \in (0, \alpha-1)$ and
$T>0$ there are constants $A, B, C \in (0, \infty)$,
depending on $\eta, T$
and on the parameters $\alpha, \sigma$, such that for all $0 < t \le T$
and $\kappa \ge 0$ the following inequality holds:
\begin{equation} \label{ch2:eq:superest}
	\E \bigg[\bigg|\frac{e^{X_t} - 1}{\kappa} \bigg|^{1+\eta} \bigg] \le
        A \bigg( \bigg( \frac{t^{1/\alpha}}{\kappa} \bigg)^{B} + C \bigg) \,.
\end{equation}
In particular, since $\kappa / t^{1/\alpha} \to \infty$ by assumption \eqref{ch2:eq:CWaty},
condition \eqref{ch2:eq:moment0p} is satisfied.
\end{itemize}
Applying Theorem~\ref{ch2:th:main2b}, since
$-\log \F_t(\kappa)/\kappa \to \infty$ by the first relation in \eqref{ch2:eq:CWF},
the asymptotic behavior of $\sigma_\imp(\kappa,t)$
is given by \eqref{ch2:eq:sigmafin}, which
by \eqref{ch2:eq:CWF} coincides with \eqref{ch2:eq:CWinf}.

\smallskip

Next we want to apply Theorem~\ref{ch2:th:main2bl}.
By the second relation in \eqref{ch2:eq:CWF}, Hypothesis~\ref{ch2:ass:rv} is satisfied
by the left tail $F_t(-\kappa)$, with $I_-(\rho) \equiv 1$.
If $\kappa$ is bounded away from zero, the asymptotic behavior of $\sigma_\imp(\kappa,t)$
is given by \eqref{ch2:eq:sigmainfl}, which by \eqref{ch2:eq:CWF} yields
precisely \eqref{ch2:eq:qualef}.

If $\kappa \to 0$ we cannot apply directly Theorem~\ref{ch2:th:main2bl},
because the moment condition \eqref{ch2:eq:moment0p} is satisfied only for
some $\eta > 0$, and condition \eqref{ch2:eq:I-infty} is not satisfied, since
$I_-(\rho) \equiv 1$. However, we can show that \eqref{ch2:eq:ma2p} still holds by direct
estimates. By \eqref{ch2:eq:cp}
\begin{equation*}
	p(-\kappa,t) = \E[(e^{-\kappa} - e^{X_t}) \ind_{\{X_t < -\kappa\}}]
	\ge \E[(e^{-\kappa} - e^{X_t}) \ind_{\{X_t < -2\kappa\}}]
	\ge (e^{-\kappa} - e^{-2\kappa}) F_t(-2\kappa) \,,
\end{equation*}
and since $(e^{-\kappa} - e^{-2\kappa}) = e^{-2\kappa}(e^\kappa-1) \ge e^{-2\kappa}\kappa$,
we can write by \eqref{ch2:eq:CWF} (recall that $\kappa\to 0$)
\begin{equation} \label{ch2:eq:lbpCW}
	\log \big( p(-\kappa,t) / \kappa \big) \ge -2\kappa -
     \log \frac{(2\kappa)^\alpha}{t} \sim - \log \frac{\kappa^\alpha}{t}  \,.
\end{equation}
Next we give a matching upper bond on
$p(-\kappa,t)$. Since $\mu t \le \kappa$ eventually
(recall that $\kappa/t^{1/\alpha}\to\infty$, hence $\kappa/t\to \infty$), by \eqref{ch2:eq:FXY}
and \eqref{ch2:eq:Yleft} we obtain, for all $y\ge 1$
\begin{equation*}
	F_t ( -\kappa y) \le \P \bigg( Y \le
      - \frac{2\kappa y}{\sigma t^{1/\alpha}} \bigg)
      \le c' \frac{t}{\kappa^\alpha y^\alpha}\,,
\end{equation*}
for some $c' = c'_{\alpha,\sigma,\mu} \in (0,\infty)$. Then by Fubini's theorem
\begin{equation*}
\begin{split}
	p(-\kappa,t)& = \E[(e^{-\kappa} - e^{X_t}) \ind_{\{X_t < -\kappa\}}]
       = \E\bigg[ \int_\kappa^\infty
       e^{-x} \ind_{\{x < -X_t\}} \dd x \bigg] =
       \int_\kappa^\infty e^{-x} \, F_t (-x) \, \dd x \\
      & = \kappa \int_1^\infty e^{-\kappa y} \, F_t(-\kappa y) \, \dd y
       \le c' \kappa \,   \frac{t}{\kappa^\alpha}
       \int_1^\infty   \frac{1}{y^\alpha} \dd y =:
        c'' \kappa \, \frac{t}{\kappa^\alpha} \,,
\end{split}
\end{equation*}
hence
\begin{equation*}
	\log \big( p(-\kappa,t) / \kappa \big) \le \log c'' -
     \log \frac{\kappa^\alpha}{t} \sim - \log \frac{\kappa^\alpha}{t}  \,.
\end{equation*}
This relation, together with \eqref{ch2:eq:lbpCW}, yields
\begin{equation*}
	\log \big( p(-\kappa,t) / \kappa \big) 
       \sim - \log \frac{\kappa^\alpha}{t}  \,.
\end{equation*}
Since $\kappa / t^{1/\alpha} \to \infty$, this shows that
we are in the regime when $\kappa \to 0$ and $p(-\kappa,t) / \kappa \to 0$.
We can thus apply equation \eqref{ch2:eq:Vas<infty0} in
Theorem~\ref{ch2:th:main1}, which recalling Remark~\ref{ch2:rem:simple}
simplifies as the first line in \eqref{ch2:eq:Vas<infty0simple}
(with $p(-\kappa,t)$ instead of $c(\kappa,t)$), yielding
\begin{equation*}
	\sigma_\imp(-\kappa,t) \sim 
	\frac{\kappa}{\sqrt{2t\, ( - \log ( p(-\kappa,t) / \kappa ) )}} \sim
	\frac{\kappa}{\sqrt{2t\, \log \frac{\kappa^\alpha}{t}}} \,,
\end{equation*}
hence \eqref{ch2:eq:qualef} is proved also when $\kappa\to 0$
(cf.\ \eqref{ch2:eq:CWinfl}).

\subsubsection{Typical deviations.} 

Let us fix an arbitrary family of values of
$(\kappa,t)$ satisfying \eqref{ch2:eq:CWty}.
Relation \eqref{ch2:eq:superest} for $\kappa = \gamma_t = t^{1/\alpha}$
shows that condition \eqref{ch2:eq:assX} is satisfied, and Hypothesis~\ref{ch2:ass:smalltime}
holds by \eqref{ch2:eq:convXresc}. We can then apply Theorem~\ref{ch2:th:main2a},
and relation \eqref{ch2:eq:sigma0} gives precisely \eqref{ch2:eq:CWzero}.
\end{proof}

\begin{proof}[Proof of \eqref{ch2:eq:superest}]
Since $|\frac{e^x-1}{x}| \le 1$ if $x < 0$
and $|\frac{e^x-1}{x}| \le e^x$ if $x \ge 0$, we have
$|\frac{e^x-1}{x}| \le 1 + e^x$ for all $x\in\R$.
If $p, q > 1$ are such that $\frac{1}{p} + \frac{1}{q} = 1$,
Young's inequality $ab \le \frac{1}{p}a^p + \frac{1}{q} b^q$ yields
\begin{equation*}
\begin{split}
	\bigg|\frac{e^{X_t} - 1}{\kappa} \bigg| &=
	\bigg| \frac{X_t}{\kappa} \bigg| \, \bigg|\frac{e^{X_t} - 1}{X_t} \bigg|
	\le \frac{1}{p} \bigg( \frac{|X_t|}{\kappa} \bigg)^{p} +
	\frac{1}{q} \big( 1+ e^{X_t} \big)^{q} \,.
\end{split}
\end{equation*}
Noting that $(a+b)^r \le 2^{r-1}(a^r + b^r)$ for $r \ge 1$, by H\"older's inequality,
and denoting by $c = c_{p,\eta}$ a suitable constant depending only on $p, \eta$, we can write
\begin{equation*}
	\bigg|\frac{e^{X_t} - 1}{\kappa} \bigg|^{1+\eta} \le
        c \bigg( \frac{|X_t|^{p(1+\eta)}}{\kappa^{p(1+\eta)}} + 
        1 + e^{q(1+\eta)X_t} \bigg) \,.
\end{equation*}
Given $0 < \eta < \alpha - 1$, we fix 
$p = p_{\eta,\alpha} >1$ such that $B := p(1+\eta) < \alpha$.
(Note that $B$ depends only on $\eta, \alpha$.)
Moreover, it follows by \eqref{ch2:eq:Yt} that
\begin{equation*}
	\E[|X_t|^B] = (\sigma t^{1/\alpha})^B \, \E[|Y|^B] \, \big( 1+\cO(t^{B(1-1/\alpha)}) \big) \,,
\end{equation*}
and note that $\E[|Y|^B] < \infty$,
because $Y$ has finite moments of all orders strictly less than $\alpha$,
cf.\ \cite[Property 1]{cf:CW04}.
Since for $t \le T$ one has 
$\E[ e^{q(1+\eta)X_t} ] \le \E[ e^{q(1+\eta)X_T} ] < \infty$, 
by \eqref{ch2:eq:momgenCW}, relation \eqref{ch2:eq:superest} is proved.
\end{proof}

\subsection{Merton's Jump Diffusion Model} 
\label{ch2:sec:Merton}

Consider a model \cite{cf:M} where the log-return $X_t$ has an infinitely divisible distribution,
whose moment generating function is given by
\begin{equation}\label{ch2:eq:MertonGeneratingFunction}
 \E\left[\exp(zX_t)\right] = \exp \bigg( t \left\{z\mu+\frac{1}{2}z^2\sigma^2+
 \lambda\left(e^{z\alpha+z^2\frac{\delta^2}{2}}-1\right)\right\} \bigg) \,,
 \qquad \forall z \in \C \,,
\end{equation}
where $\mu, \alpha \in \R$ and $\sigma, \lambda, \delta \in (0,\infty)$
are fixed parameters.

The asymptotic behavior of $\sigma_\imp(\kappa,t)$ has
been studied by many authors.
The case of fixed $t>0$ and $\kappa \to \infty$
was derived by Benaim and Friz~\cite{cf:BF09}
using saddle point methods (for the detailed computation see \cite{cf:FGY14}, \cite{cf:GMZ14}).
The case of fixed $\kappa > 0$ and $t \to 0$
follows by \cite{cf:FF}, while the mixed regime of $t \to 0$, $\kappa \to 0$
with $\kappa \propto \sqrt{t \log(1/t)}$ was considered in \cite{cf:MT12}.
Applying our results, we can complete the picture, providing general formulas
which interpolate between all these regimes, cf.\ Theorem~\ref{ch2:th:mertonsmile}.

Let us define two functions $\bkappa_1(t) \to 0$ and $\bkappa_2(t) \to \infty$
as $t \to 0$ as follows:
\begin{equation}\label{eq:k12}
	\bkappa_1(t) := \sqrt{t \, \log \tfrac{1}{t}} \,, \qquad
	\bkappa_2(t) := \sqrt{\log \, \tfrac{1}{t}} \,,
\end{equation}
which will separate different behaviors as $t \to 0$. 
We stress that $\bkappa_1(t)$ is precisely the scaling considered in \cite{cf:MT12}.
Let us also define $f:[0,\infty) \to (0,\infty)$ by
\begin{equation}\label{eq:fa}
 f(a):= \min_{n \in \N} \left(n+\frac{a^2}{2n \delta^2}\right) 
\,.
\end{equation}
Note that $f$ is continuous and piecewise quadratic: 
more precisely, by explicit computation,
$f(a) = n + \frac{a^2}{2n\delta^2}$ for all 
$a \in [\sqrt{2(n-1)n} \,\delta, 
\sqrt{2n(n+1)} \, \delta)$, with $n\in\N$. It follows that
\begin{equation}\label{eq:asf}
	f(0) = 1 \,, \qquad
	f(a) \underset{\,a\to\infty\,}\sim \frac{\sqrt{2}}{\delta} \, a \,.
\end{equation}
The role of the function $f$ is explained by the following Lemma,
proved in Appendix~\ref{sec:lemma}.
\begin{lemma} \label{th:lemma}
For every fixed $a \in (0,\infty)$, as $t \to 0$ one has
\begin{equation}\label{eq:antici}
	\log \P(X_t > a \, \bkappa_2(t)) \,\sim\,
	-f(a) \log \frac 1t  \,.
\end{equation}
Moreover, if either $t \to 0$ and $\kappa \gg \bkappa_2(t)$, or if $t \to \bar t \in (0,\infty)$
and $\kappa \to \infty$, one has
\begin{equation}\label{eq:antici2}
	\log \P(X_t > \kappa) \sim - \frac{\kappa}{\delta}\sqrt{2\log \frac{\kappa}{t}} \,.
\end{equation}
\end{lemma}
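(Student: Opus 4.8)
\textbf{Proof proposal for Lemma~\ref{th:lemma}.}

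The plan is to extract both asymptotics from the moment generating function \eqref{ch2:eq:MertonGeneratingFunction} via the G\"artner--Ellis theorem, exactly as advertised in the Discussion. Write $\Lambda_t(z) := \log \E[e^{zX_t}]$; from \eqref{ch2:eq:MertonGeneratingFunction} one has $\Lambda_t(z) = t\{z\mu + \tfrac12 z^2\sigma^2 + \lambda(e^{z\alpha + z^2\delta^2/2}-1)\}$, which is finite and smooth for all real $z$. As $z\to\infty$ the dominant term is $t\lambda e^{z\alpha + z^2\delta^2/2}$, a doubly-exponential growth governed by the Gaussian jump size $\delta$; this is the source of the $\sqrt{\log}$ factors. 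The key computation is the Legendre transform: for a level $x>0$ we must evaluate $\Lambda_t^*(x) = \sup_{z\ge 0}\{zx - \Lambda_t(z)\}$ and identify its leading order in the relevant regime.

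First I would treat \eqref{eq:antici2}, the ``extreme strike'' regime. Set $x=\kappa$ and look for the critical point $z=z^*$ solving $\kappa = \Lambda_t'(z) = t\{\mu + z\sigma^2 + \lambda(\alpha + z\delta^2)e^{z\alpha+z^2\delta^2/2}\}$. In the regime $\kappa\to\infty$ with $\kappa/t\to\infty$ (which holds both when $t\to 0$, $\kappa\gg\bkappa_2(t)$ — since $\bkappa_2(t)/t\to\infty$ — and when $t\to\bar t$, $\kappa\to\infty$), the jump term dominates, so to leading order $\kappa/t \approx \lambda z\delta^2 e^{z^2\delta^2/2}$, whence $z^*\delta^2/2 \sim \log(\kappa/t)$ more precisely $z^* \sim \delta^{-1}\sqrt{2\log(\kappa/t)}$ after taking logarithms twice and checking the lower-order terms are negligible. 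Then $\Lambda_t^*(\kappa) = z^*\kappa - \Lambda_t(z^*)$; since $\Lambda_t(z^*)$ is of order $\kappa z^*\delta^2/(2\cdot\text{something})$... more carefully, $\Lambda_t(z^*)\approx t\lambda e^{(z^*)^2\delta^2/2} \approx \kappa/(z^*\delta^2)$, which is $o(z^*\kappa)$, so $\Lambda_t^*(\kappa)\sim z^*\kappa \sim \frac{\kappa}{\delta}\sqrt{2\log(\kappa/t)}$. The G\"artner--Ellis upper and lower bounds, applied to the family $X_t$ with the normalization implicit in the statement, then give $\log\P(X_t>\kappa)\sim -\Lambda_t^*(\kappa)$, which is \eqref{eq:antici2}. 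The steepness/essential-smoothness hypotheses of G\"artner--Ellis hold because $\Lambda_t'(z)\to\infty$ as $z\to\infty$.

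For \eqref{eq:antici}, the ``moderate'' regime $\kappa = a\,\bkappa_2(t) = a\sqrt{\log(1/t)}$ with $a$ fixed and $t\to 0$: here $\kappa\to\infty$ but $\kappa/t$ and $\kappa$ are of a specific comparable order to $\log(1/t)$, and crucially $z^*$ will be of order $\sqrt{\log(1/t)}$, comparable to $\kappa$ itself, so the linear-in-$z$ terms $t z\sigma^2$ are genuinely negligible but the discreteness hidden in the jump structure matters — this is why $f(a) = \min_{n}(n + a^2/(2n\delta^2))$, a minimization over the number $n$ of jumps, appears rather than a smooth transform. I expect the cleanest route is \emph{not} pure G\"artner--Ellis on the MGF but a direct decomposition: condition on the Poisson number of jumps $N_t\sim\mathrm{Poisson}(\lambda t)$, so $X_t \overset{d}{=} \mu t + \sigma W_t + \sum_{i=1}^{N_t}(\alpha + \delta Z_i)$ with $Z_i$ standard Gaussian. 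Then $\P(X_t > a\bkappa_2(t)) = \sum_{n\ge 0}\P(N_t=n)\,\P(\mu t + \sigma W_t + n\alpha + \delta\sqrt{n}\,Z > a\bkappa_2(t))$. Since $\P(N_t=n)\sim e^{-\lambda t}(\lambda t)^n/n! \approx t^n$ on the log scale (for fixed $n$, as $t\to 0$: $\log\P(N_t=n)\sim n\log t$), and the conditional Gaussian tail contributes $\log\P(\delta\sqrt n\,Z > a\bkappa_2(t)(1+o(1)))\sim -\tfrac{a^2\bkappa_2(t)^2}{2n\delta^2} = -\tfrac{a^2}{2n\delta^2}\log\tfrac1t$, the $n$-th term is $t^{n + a^2/(2n\delta^2) + o(1)}$ on the log scale. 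Summing, the dominant term is $n = \arg\min_n(n + a^2/(2n\delta^2))$, giving $\log\P(X_t>a\bkappa_2(t))\sim -f(a)\log(1/t)$, which is \eqref{eq:antici}. One must check uniformity to justify both the upper bound (tail of the sum over large $n$: for $n$ large the exponent $n$ dominates, so those terms are super-polynomially small, e.g.\ by a crude bound $\P(N_t=n)\le (\lambda t)^n/n!$) and the lower bound (a single term suffices).

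\textbf{Main obstacle.} The routine part is the Legendre/saddle computation for \eqref{eq:antici2}; the delicate part is \eqref{eq:antici}, where one must make the heuristic ``$\log\P(N_t=n)\sim n\log t$ times Gaussian tail'' rigorous \emph{uniformly in $n$} — in particular controlling the contribution of the terms with $n$ near the optimum (where one needs matching upper and lower bounds, including the $\mu t + \sigma W_t$ perturbation which is of lower order but must be absorbed) and the tail $n\to\infty$ (where one needs a quantitative bound on $\P(N_t=n)$ to dominate the sum). I would isolate this as the appendix computation (Appendix~\ref{sec:lemma}), using a finite-$n$ truncation at, say, $n \le K = \lceil a^2/(2\delta^2)\rceil + 1$ plus the observation that $\sum_{n>K}\P(N_t=n)\le \P(N_t>K) = O(t^{K+1})$, which is already smaller than the main term $t^{f(a)}$ since $f(a)\le f(0)\cdot\text{(no)}$... one checks $f(a) < K+1$ by its explicit piecewise form.
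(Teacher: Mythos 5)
Your treatment of \eqref{eq:antici} is essentially the paper's proof: the paper also conditions on $N_t \sim \mathrm{Pois}(\lambda t)$, writes $\P(X_t>\kappa) = e^{-\lambda t}\sum_{n=0}^{M}\P\big(N(\mu t+n\alpha,\sigma^2 t+n\delta^2)>\kappa\big)\tfrac{(\lambda t)^n}{n!} + \cO\big((e\lambda t/M)^M\big)$, uses the Gaussian tail estimate $\log\P(N(0,1)>x)\sim -x^2/2$, and optimizes $n + a^2/(2n\delta^2)$ over the finitely many $n\le M$; your truncation-at-$K$ argument plays the same role as the paper's Chernoff bound on $\P(N_t>M)$, and a single term gives the matching lower bound. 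So on that half you have reconstructed the argument, including the key point that discreteness of $n$ produces the piecewise-quadratic $f(a)$ rather than a smooth Legendre transform.

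For \eqref{eq:antici2} you take a genuinely different route, and there is a real gap in it. You invoke G\"artner--Ellis to pass from $\Lambda_t^*(\kappa)\sim\frac{\kappa}{\delta}\sqrt{2\log(\kappa/t)}$ to $\log\P(X_t>\kappa)\sim-\Lambda_t^*(\kappa)$, but the cited theorem (\cite[Thm.\ 2.3.6]{cf:DZ}) concerns a \emph{sequence} of measures with a single speed $a_n\to\infty$ and a \emph{fixed} limiting cumulant function; here both $t$ and $\kappa$ vary simultaneously, and in the sub-regime $t\to\bar t\in(0,\infty)$, $\kappa\to\infty$ there is no large-deviations speed at all --- the question is tail asymptotics of one fixed distribution $X_{\bar t}$. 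The Markov upper bound $\log\P(X_t>\kappa)\le-\Lambda_t^*(\kappa)$ is free, but the matching \emph{lower} bound is precisely the part that needs work and is not supplied by citing essential smoothness. The paper avoids this entirely by running the \emph{same} Poisson decomposition it used for \eqref{eq:antici}: it bounds $\log n!\ge n\log(n/e)$, optimizes $\frac{\kappa^2}{2\delta^2 x}+x\log\frac{x}{e\lambda t}$ over continuous $x>0$ (a computation dual to, and agreeing with, your Legendre transform: $\bar x\sim\kappa/(\delta\sqrt{2\log(\kappa/t)})$ is exactly your $z^*$), then truncates the sum at $M=\lfloor 3\bar x\rfloor$ with the Chernoff bound and takes $n=\lfloor\bar x\rfloor$ for the lower bound. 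In other words, your saddle-point value is correct, but to make the lower bound rigorous you should drop the G\"artner--Ellis citation and instead push the conditioning argument from part one into the regime where the optimizing $n$ grows with $\kappa$ and $t$; once you do that, the two halves of the lemma have a single uniform proof, which is the paper's point.
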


We are now ready to state our main result for Merton's model (see Figure~\ref{ch2:fig:Merton}).

\begin{figure}[t]
\centering
\includegraphics[width=.5\columnwidth]{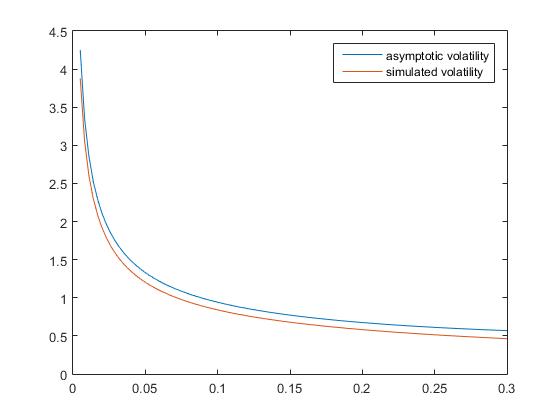}
\caption{Implied volatility $\sigma_\imp(\kappa,t)$
for Merton's model 
(with parameters $\lambda=0.01$, $\delta=0.3$, $\sigma = 0.2$, $\alpha=0.1$)
in the regime $\kappa = \bkappa_2(t)$ for $t \in (0,0.3)$.
The \emph{asymptotic volatility} is given by our formula \eqref{ch2:eq:mertontutto},
while the \emph{simulated volatility} is obtained using standard computational
packages.}
\label{ch2:fig:Merton}
\end{figure}

\begin{theorem}[Smile Asymptotics of Merton's model]\label{ch2:th:mertonsmile}
   Consider a family of values of $(\kappa,t)$ with $\kappa \geq 0$ and $t > 0$.
  \begin{enumerate}
   \item \label{svm:a} 
If $t \to 0$ and $\kappa = \cO(\bkappa_2(t))$, then
\begin{equation}\label{ch2:eq:mertontutto}
\sigma_\imp(\kappa,t) \sim \max \left\{ \sigma \,, \,
\frac{\kappa}{\sqrt{2t \, f\big( \frac{\kappa}{\bkappa_2(t)} \big) 
\log \frac{\kappa}{t}}} \right\}\,,
\end{equation}
which can be rewritten more explicitly as follows:
\begin{equation}\label{ch2:eq:mertontuttoexpl}
\displaystyle
\sigma_\imp(\kappa,t) \sim \begin{cases}
\sigma & \text{if } \ 0 \le \kappa \le \sigma \, \bkappa_1(t)\\
\displaystyle
\rule{0pt}{1.5em}
\frac{\kappa}{\sqrt{2t \log \frac{\kappa}{t}}} 
& \text{if } \ \sigma \, \bkappa_1(t) \le \kappa \ll \bkappa_2(t) \\
\displaystyle
\rule{0pt}{1.5em}
\frac{\kappa}{\sqrt{2t f(a) \log \frac{\kappa}{t}}}
& \text{if } \ \kappa \sim a\, \bkappa_2(t)
\ \text{ with } \ a \in (0,\infty) \\
\end{cases}
\,.
\end{equation}

\item\label{svm:e} If $\,t \to 0\,$ and $\,\kappa \gg \bkappa_2(t)\,$,
or if $\,\,t \to \bar t \in (0,\infty)\,$ and $\,\kappa \to \infty\,$, then
\begin{equation}\label{ch2:eq:mertonvole}
\sigma_\imp(\kappa,t) \sim 
\sqrt{\frac{\delta \, \kappa}{2t \sqrt{2\log\frac{\kappa}{t}}}} \,.
\end{equation}
  \end{enumerate}
\end{theorem}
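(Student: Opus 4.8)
To prove Theorem~\ref{ch2:th:mertonsmile} the plan is to feed the tail estimates of Lemma~\ref{th:lemma} (together with one elementary refinement) into the general Theorems~\ref{ch2:th:main2b} and~\ref{ch2:th:main2a}, and to treat by hand, via Theorem~\ref{ch2:th:main1}, the single regime in which the hypotheses of those theorems degenerate. Two facts are available from the outset. First, by \eqref{ch2:eq:MertonGeneratingFunction} the exponential moments $\E[e^{zX_t}]$ are finite for every real $z$, so the moment condition \eqref{ch2:eq:moment} holds for \emph{every} $\eta>0$ (boundedness of maturity, cf.~\eqref{ch2:eq:momentsimple}). Second, by the L\'evy continuity theorem one checks from \eqref{ch2:eq:MertonGeneratingFunction}, as in Remark~\ref{rem:equivalent}, that $X_t/\sqrt t\xrightarrow{d}Y$ with $Y\sim N(0,\sigma^2)$; thus Hypothesis~\ref{ch2:ass:smalltime} holds with $\gamma_t=\sqrt t$, and by the computation recorded in Remark~\ref{rem:equivalent} the constants in \eqref{ch2:eq:sigma0} satisfy $C_\pm(a)\equiv\sigma$. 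By the subsequence principle I may assume throughout that $\kappa/\sqrt t$, $\kappa/\bkappa_1(t)$ and $\kappa/\bkappa_2(t)$ converge in $[0,\infty]$, so the proof becomes a case analysis on these limits.

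Part~(\ref{svm:e}) is immediate: equation \eqref{eq:antici2} gives $\log\F_t(\kappa)\sim-\tfrac\kappa\delta\sqrt{2\log\tfrac\kappa t}$, and since $\log\tfrac\kappa t\to\infty$ this forces Hypothesis~\ref{ch2:ass:rv} with $I_+(\rho)=\rho$ (so \eqref{ch2:eq:cont1} and \eqref{ch2:eq:Iplus} hold), while $-\log\F_t(\kappa)/\kappa\sim\tfrac1\delta\sqrt{2\log\tfrac\kappa t}\to\infty$ with $\kappa\to\infty$ and $t$ bounded. Theorem~\ref{ch2:th:main2b}(i), special case, then yields \eqref{ch2:eq:sigmafin}, which after substituting the tail asymptotics is exactly \eqref{ch2:eq:mertonvole}.

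For Part~(\ref{svm:a}) I distinguish the size of $\kappa$. (a) If $\kappa\sim a\bkappa_2(t)$ with $a\in(0,\infty)$, then \eqref{eq:antici}, the monotonicity of $\F_t$ and the continuity of $f$ give $\log\F_t(\kappa)\sim-f(a)\log\tfrac1t$; Hypothesis~\ref{ch2:ass:rv} holds with $I_+(\rho)=f(\rho a)/f(a)$ (continuity of $f$ giving \eqref{ch2:eq:cont1}), and since $-\log\F_t(\kappa)/\kappa\sim\tfrac{f(a)}a\sqrt{\log\tfrac1t}\to\infty$ with $\kappa\to\infty$, Theorem~\ref{ch2:th:main2b}(i) (special case, using that \eqref{ch2:eq:moment} holds for every $\eta$) gives $\sigma_\imp(\kappa,t)\sim\kappa/\sqrt{2t\,f(a)\log\tfrac1t}$; using $\log\tfrac1t\sim\log\tfrac\kappa t$ and $f(a)=f(\kappa/\bkappa_2(t))(1+o(1))$ this is line~3 of \eqref{ch2:eq:mertontuttoexpl}, in agreement with \eqref{ch2:eq:mertontutto}. (b) If $\kappa$ is bounded away from $0$ but $\kappa\ll\bkappa_2(t)$, the same estimate with $a=\kappa/\bkappa_2(t)\to0$ gives $\log\F_t(\kappa)\sim-\log\tfrac1t$ (one jump dominates), so $-\log\F_t(\kappa)/\kappa=\log\tfrac1t/\kappa\to\infty$ and Theorem~\ref{ch2:th:main2b}(i) (again via \eqref{ch2:eq:moment} for every $\eta$) gives $\sigma_\imp(\kappa,t)\sim\kappa/\sqrt{2t\log\tfrac1t}\sim\kappa/\sqrt{2t\log\tfrac\kappa t}$ — line~2 of \eqref{ch2:eq:mertontuttoexpl}. (c) If $\kappa=\cO(\sqrt t)$, the moment condition \eqref{ch2:eq:assX} holds for small $\eta$ (Young's and Hölder's inequalities as in the proof of \eqref{ch2:eq:superest}, using that $\E[|X_t|^m]\asymp t^{m/2}$ for $m<2$ because the Brownian part dominates), so Theorem~\ref{ch2:th:main2a} applies and gives $\sigma_\imp(\kappa,t)\sim C_+(a)=\sigma$ — line~1 of \eqref{ch2:eq:mertontuttoexpl}.

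The remaining case, $\sqrt t\ll\kappa\to0$ (so $\kappa/\sqrt t\to\infty$ but $\kappa\ll\bkappa_2(t)$), is the delicate one. Here the general theorems are unavailable: when $\kappa\asymp\bkappa_1(t)$, $\F_t(\kappa)$ is a sum of a pure-diffusion piece $\asymp\sigma\sqrt t\,\kappa^{-1}e^{-\kappa^2/(2\sigma^2 t)}$ and a one-jump piece $\asymp\lambda t$, whose competition makes $I_+$ saturate at a finite value (so \eqref{ch2:eq:I+infty} fails), while a single jump forces $\E[|(e^{X_t}-1)/\kappa|^{1+\eta}]\gtrsim t/\kappa^{1+\eta}\to\infty$ for $\eta>1$ (so \eqref{ch2:eq:moment0p} fails for large $\eta$). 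Instead I would estimate the call price directly: conditioning on the Poisson count, $c(\kappa,t)=\sum_{n\ge0}\tfrac{e^{-\lambda t}(\lambda t)^n}{n!}\,\E\big[(e^{\mu t+\sigma W_t+J_1+\cdots+J_n}-e^\kappa)^+\big]$ with i.i.d.\ $J_i\sim N(\alpha,\delta^2)$, the $n=0$ term equals $\sigma\sqrt t\,\E[(Z-\kappa/(\sigma\sqrt t))^+]\asymp\sigma^3 t^{3/2}\kappa^{-2}e^{-\kappa^2/(2\sigma^2 t)}$, the $n=1$ term is $\sim\lambda t\,\E[(e^{J_1}-1)^+]\asymp t$ (as $\kappa\to0$), and the $n\ge2$ terms are of smaller order, so $c(\kappa,t)$ equals, up to a factor $1+o(1)$ in its logarithm, the larger of these two (and $c(\kappa,t)/\kappa\to0$). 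Crucially, the diffusion piece of the \emph{price} carries an extra factor $\sigma\sqrt t/\kappa$ relative to that of the \emph{tail} (since $\E[(Z-u)^+]\sim\phi(u)/u^2$ while $\bar\Phi(u)\sim\phi(u)/u$), so, comparing exponents at $\kappa\asymp\bkappa_1(t)$, the $n=0$ term wins precisely when $\kappa\lesssim\sigma\bkappa_1(t)$. Hence $-\log(c(\kappa,t)/\kappa)\sim\kappa^2/(2\sigma^2 t)$ when $\kappa\lesssim\sigma\bkappa_1(t)$ and $-\log(c(\kappa,t)/\kappa)\sim\log\tfrac1t+\log\kappa=\log\tfrac\kappa t$ when $\sigma\bkappa_1(t)\lesssim\kappa\ll\bkappa_2(t)$; plugging into the first line of \eqref{ch2:eq:Vas<infty0simple} (Theorem~\ref{ch2:th:main1}) yields $\sigma_\imp(\kappa,t)\sim\kappa/\sqrt{2t\,\kappa^2/(2\sigma^2 t)}=\sigma$ in the first sub-case and $\sigma_\imp(\kappa,t)\sim\kappa/\sqrt{2t\log\tfrac\kappa t}$ in the second, which reproduce lines~1 and~2 of \eqref{ch2:eq:mertontuttoexpl} and, since $\kappa/\sqrt{2t\log\tfrac\kappa t}=c(1+o(1))$ when $\kappa\sim c\,\bkappa_1(t)$, also the $\max$ in \eqref{ch2:eq:mertontutto}. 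The main obstacle is exactly this step: one must carry out the diffusion-versus-one-big-jump comparison at the level of the call price, tracking polynomial prefactors carefully enough to place the crossover at $\sigma\bkappa_1(t)$ rather than at the value $\sqrt2\,\sigma\bkappa_1(t)$ that the tail probability alone would suggest.
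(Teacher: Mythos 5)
Your proof follows the same architecture as the paper's: case analysis on $\kappa$ relative to $\sqrt t$, $\bkappa_1(t)$, $\bkappa_2(t)$, with typical deviations (Theorem~\ref{ch2:th:main2a}) near the money, atypical-deviation machinery (Theorem~\ref{ch2:th:main2b}) when $\kappa$ is bounded away from $0$ or $\kappa\gg\bkappa_2(t)$, and a direct call-price estimate fed into Theorem~\ref{ch2:th:main1} in the diffusion/one-jump competition zone $\sqrt t\ll\kappa\to 0$. Two substantive variants are worth flagging. First, for $\kappa$ bounded away from $0$ with $\kappa\ll\bkappa_2(t)$, the paper estimates $c(\kappa,t)$ directly (lower bound from a single large jump via \eqref{eq:antici}, upper bound via Cauchy--Schwarz plus \eqref{eq:showbe}), whereas you route through the tail asymptotic $\log\F_t(\kappa)\sim-\log\tfrac1t$ and then invoke Theorem~\ref{ch2:th:main2b}. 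This is cleaner, but note that \eqref{eq:antici} is stated for fixed $a\in(0,\infty)$; you need the sandwich $\F_t(a\bkappa_2(t))\le\F_t(\kappa)\le\F_t(\eta)$ together with \eqref{eq:antici} (for arbitrarily small $a$, using $f(a)\downarrow 1$) and \eqref{eq:showbe} to justify your ``$a\to 0$'' extension. Second, in the regime $\sqrt t\ll\kappa\to 0$ you condition on the Poisson count and estimate the $n=0,1,\geq 2$ contributions by hand, whereas the paper imports the decomposition from \cite[Prop.~2.3]{cf:MT12}; the content is the same and your version is more self-contained.

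One small slip in the crossover discussion: you write that the diffusion piece of the \emph{price} carries an extra factor $\sigma\sqrt t/\kappa$ relative to that of the tail. The ratio $\E[(Z-u)^+]/\bar\Phi(u)\sim 1/u=\sigma\sqrt t/\kappa$ is correct, but the diffusion piece of the call price is $\sigma\sqrt t\,\E[(Z-u)^+]$, so the actual ratio to the tail piece $\bar\Phi(u)$ is $\sigma\sqrt t\cdot(1/u)=\sigma^2 t/\kappa$. This matters for the bookkeeping: for $\kappa\asymp\bkappa_1(t)$, $\log(\sigma\sqrt t/\kappa)=O(\log\log\tfrac 1t)$ is negligible and would \emph{not} move the crossover away from $\sqrt2\,\sigma\bkappa_1(t)$, whereas $\log(\sigma^2 t/\kappa)\sim-\tfrac12\log\tfrac1t$ is exactly the $\tfrac12\log\tfrac1t$ shift in the $-\log$ of the diffusion piece that moves the crossover to $\sigma\bkappa_1(t)$. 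Your stated conclusion (crossover at $\sigma\bkappa_1(t)$) is correct, and agrees with the paper's direct computation of $-\log(c(\kappa,t)/\kappa)\sim\min\{\kappa^2/(2\sigma^2 t),\,\log\tfrac{\kappa}{t}\}$, but the parenthetical justification as written would not yield it.
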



\begin{proof}

We have to prove relation \eqref{ch2:eq:mertontuttoexpl}
(which is equivalent to \eqref{ch2:eq:mertontutto}, by extracting subsequences)
and relation \eqref{ch2:eq:mertonvole}.
We distinguish different subcases.

\medskip

Assume first that $t\to 0$ with $\kappa = \cO(\sqrt{t})$. By extracting sub-sequences, assume that
$\kappa /\sqrt{t} \to a \in [0,\infty)$.
Note that $X_t/\sqrt{t} \xrightarrow[]{\, d\, } N(0, \sigma^2)$, 
because $\E[e^{i u \frac{X_t}{\sqrt{t}}}] \to e^{-\frac{u^2}{2\sigma^2}}$ for every $u\in\R$,
as one checks by \eqref{ch2:eq:MertonGeneratingFunction}.
We then apply 
Theorem~\ref{ch2:th:main2a} with $\gamma_t = \sqrt{t}$, 
because the moment condition \eqref{ch2:eq:assX} for $\eta = 1$ 
follows by \eqref{ch2:eq:MertonGeneratingFunction} (see 
also \eqref{ch2:eq:moment0psimple}).
Relation \eqref{ch2:eq:sigma0} yields
\begin{equation*}
	\sigma(\kappa,t) \sim C_+(a) \, \frac{\sqrt{t}}{\sqrt{t}} = \sigma \,,
\end{equation*}
since $C_+(a) \equiv \sigma$ (see \eqref{eq:DD-1} in Appendix~\ref{app:diffusions}).
This matches with the first line of \eqref{ch2:eq:mertontuttoexpl}.

\medskip

Next we assume that $t\to 0$ with $\sqrt{t} \ll \kappa \ll 1$.
Applying \cite[Proposition 2.3]{cf:MT12}, we can write
\begin{equation*} 
 c(\kappa,t) \sim \E[(e^{\sigma W_t-\frac{\sigma^2 t}{2}}-e^\kappa)^+]
 \,+\, C \, t \,, 
\end{equation*}
with $0 < C := \int_0^{\infty}(e^x-1)\nu(\dd x) < \infty$, where
$\nu$ denotes the L\'evy measure of $X$. The first term 
is the usual Black\&Scholes price of a call option with $\kappa \gg \sqrt{t}$: applying
\eqref{ch2:eq:asv} below with
$v = \sigma \sqrt{t}$ and $d_1 = -\frac{\kappa}{v}+\frac{v}{2} \sim -\frac{\kappa}{\sigma\sqrt{t}}$,
together with \eqref{ch2:eq:phiPhi} and \eqref{ch2:eq:Millsas}, we get
\begin{equation*}
\begin{split}
 \frac{c(\kappa,t)}{\kappa} & \sim \frac{v}{\kappa} \, \frac{\phi(d_1)}{(d_1)^2}
 \,+\, C \, \frac{t}{\kappa} \,=\,
 \frac{e^{-\frac{d_1^2}{2}}}
 {\sqrt{2\pi} \, (\frac{\kappa^2}{\sigma^2 t})^{3/2}} \,+\, C \, \frac{t}{\kappa}
\, \sim \,  \frac{e^{-\frac{d_1^2}{2}}}{\sqrt{2\pi} \, d_1^3}
 \,+\, C \, \frac{t}{\kappa} \,.
\end{split}
\end{equation*}
Writing $\frac{e^{-\frac{z^2}{2}}}{\sqrt{2\pi} z^3} = 
e^{-\frac{z^2}{2} - \log(\sqrt{2\pi}z^3)}
= e^{-\frac{z^2}{2}(1+o(1))}$
as $z \to \infty$, we get
\begin{equation*}
 \frac{c(\kappa,t)}{\kappa} \sim
 e^{-\frac{\kappa^2}{2\sigma^2 t} (1+o(1))}
 \,+\, C \, \frac{t}{\kappa} = a + b \quad \text{(say)} \,.
\end{equation*}
The inequalities $\max\{a,b\} \le a+b \le 2 \max\{a,b\}$ yield
$\log(a+b) \sim \max\{\log a, \log b\}$ (the additive constant $\log 2$ is irrelevant,
since $a,b \to 0$) hence
\begin{equation*}
 - \log \frac{c(\kappa,t)}{\kappa} \sim
 -\max \bigg\{ -\frac{\kappa^2}{2\sigma^2 t}(1+o(1)), \ \log \bigg( C \frac{t}{\kappa}
 \bigg) \bigg\}
 \sim
 \min \bigg\{ \frac{\kappa^2}{2\sigma^2 t}, \ \log \frac{\kappa}{t} \bigg\} \,.
\end{equation*}
It is easy to check that 
the asymptotic equality $\frac{\kappa^2}{2\sigma^2 t} \sim \log \frac{\kappa}{t}$ 
holds when $\kappa \sim \sigma \bkappa_1(t)$.
It follows that:
\begin{itemize}
\item in case $\kappa \le \sigma \bkappa_1(t)$ we have
\begin{equation} \label{eq:plu1}
	-\log \frac{c(\kappa,t)}{\kappa} \sim \frac{\kappa^2}{2\sigma^2 t} \,;
\end{equation}

\item in case $\kappa \ge \sigma \bkappa_1(t)$ we have
\begin{equation} \label{eq:plu2}
	-\log \frac{c(\kappa,t)}{\kappa} \sim \log \frac{\kappa}{t} \,.
\end{equation}
\end{itemize}
(Note that when $\kappa \sim \sigma \bkappa_1(t)$
both relations \eqref{eq:plu1} and \eqref{eq:plu2} hold.)

We can deduce the asymptotic behavior of $\sigma_\imp(\kappa,t)$
applying relation \eqref{ch2:eq:Vas<infty0} 
(note that $c(\kappa,t)/\kappa \to 0$, since $\kappa \gg \sqrt{t}$)
which, by Remark~\ref{ch2:rem:simple},
reduces to the first line of \eqref{ch2:eq:Vas<infty0simple}, i.e.\
\begin{equation} \label{eq:sigmafina}
	\sigma_\imp(\kappa,t) \sim \frac{\kappa}{\sqrt{2t\, ( - \log ( c(\kappa,t) / \kappa ) )}} \,.
\end{equation}
Plugging \eqref{eq:plu1}-\eqref{eq:plu2} into this relation yields
the first and second line of \eqref{ch2:eq:mertontuttoexpl}.

\medskip

Next we assume that
$t \to 0$ with $\eta \le \kappa \ll \bkappa_2(t)$, for some fixed $\eta > 0$.
We claim that
\begin{equation}\label{eq:wecla}
	-\log \frac{c(\kappa,t)}{\kappa} \sim \log \frac{1}{t} \,,
\end{equation}
which plugged into \eqref{eq:sigmafina} proves the second line of \eqref{ch2:eq:mertontuttoexpl}
(since $\log \frac{1}{t} \sim \log \frac{\kappa}{t}$ in this regime).
When $\kappa > 0$ is fixed, (a stronger version of)
relation \eqref{eq:wecla} was proved by Figueroa--L\'{o}pez and Forde
in \cite{cf:FF}. For the general case, we fix $a \in (0,\infty)$
and we apply relation \eqref{eq:antici}, which yields the lower bound
\begin{equation} \label{eq:lobb}
\begin{split}
	c(\kappa,t) = \E[(e^{X_t}-e^\kappa)^+] 
	\ge (e^{a \, \bkappa_2(t)}-e^{\kappa}) \, \P(X_t >a \, \bkappa_2(t))
	& \sim e^{a \, \bkappa_2(t) - f(a) \log\frac{1}{t}(1+o(1))} \\
	& \sim e^{- f(a) \log\frac{1}{t}(1+o(1))} \,,
\end{split}
\end{equation}
because $\bkappa_2(t) = \sqrt{\log\frac{1}{t}} 
\ll \log \frac{1}{t}$. For an upper bound, we recall that $\kappa \ge \eta$
and \cite{cf:FH}
\begin{equation}\label{eq:showbe}
	\log \P(X_t > \eta)  \, \sim \, - \log \frac{1}{t}
	\qquad \text{as } t \to 0 \,.
\end{equation}
Then, for every fixed $b \in (0,\infty)$,
using \eqref{eq:showbe}, \eqref{eq:antici} and Caucy-Schwarz
we can write
\begin{equation*}
\begin{split}
	c(\kappa,t) & 
	\le e^{b \, \bkappa_2(t)} \, \P(\eta < X_t \le b \, \bkappa_2(t))
	+ \E[e^{X_t} \, \ind_{\{X_t > b \, \bkappa_2(t)\}}] \\
	& \le e^{b \, \bkappa_2(t) - \log \frac{1}{t}(1+o(1))}
	+ \E[e^{2X_t}]^{\frac{1}{2}} \, \P(X_t > b \, \bkappa_2(t))^{\frac{1}{2}} \\
	& \sim e^{- \log \frac{1}{t}(1+o(1))} +
	C_1 \, e^{- \frac{1}{2}f(b) \log \frac{1}{t}(1+o(1))}\,,
\end{split}
\end{equation*}
where in the last step we used $\E[e^{2X_t}]^{\frac{1}{2}} \le C_1$ for some 
constant $C_1$,
since $\E[e^{2X_t}] \to 1$ as $t \to 0$ by \eqref{ch2:eq:MertonGeneratingFunction}.
By \eqref{eq:asf}, we can fix $b$ large enough so that $f(b) > 2$,
hence $c(\kappa,t) \le e^{- \log \frac{1}{t}(1+o(1))}$,
and for every $\epsilon > 0$ we can choose $a > 0$ small enough
such that $f(a) < 1+\epsilon$,
hence $c(\kappa,t) \ge e^{-(1+\epsilon) \log \frac{1}{t}(1+o(1))}$ 
by \eqref{eq:lobb}. Altogether, relation \eqref{eq:wecla} is proved.

\medskip

Let us proceed with
the regime $t \to 0$ and $\kappa \sim a \, \bkappa_2(t)$ for some $a \in (0,\infty)$.
Relation \eqref{eq:antici} shows that
such a family of $(\kappa,t)$ 
satisfies Hypothesis~\ref{ch2:ass:rv}
with $I_+(\rho) = f(\rho a) / f(a)$ (we stress that $a \in (0,\infty)$ is fixed
throughout this argument, hence $I_+$ can depend on $a$).
Since the moment condition \eqref{ch2:eq:moment} is clearly
satisfied by \eqref{ch2:eq:MertonGeneratingFunction},
we can apply Theorem~\ref{ch2:th:main2b}:
relation \eqref{ch2:eq:sigmafin}, coupled with \eqref{eq:antici},
proves the third line of \eqref{ch2:eq:mertontuttoexpl}.

\medskip

Finally, it remains to prove \eqref{ch2:eq:mertonvole}, hence we assume that
either $t \to 0$ and $\kappa \gg \bkappa_2(t)$, or $t \to \bar t \in (0,\infty)$
and $\kappa \to \infty$. 
Relation \eqref{eq:antici2} shows that
Hypothesis~\ref{ch2:ass:rv} holds with
$I_+(\rho) =\rho$. By Theorem~\ref{ch2:th:main2b},
relation \eqref{ch2:eq:sigmafin} yields 
\eqref{ch2:eq:mertonvole}, completing
the proof of Theorem~\ref{ch2:th:mertonsmile}.
\end{proof}

\subsection{The Heston Model} 
\label{ch2:sec:Heston}

Given the parameters $\lambda, \theta, \eta, \sigma_0 \in (0,\infty)$
and $\rho \in [-1,1]$,
the Heston model \cite{cf:H} is a stochastic volatility 
model $(S_t)_{t\ge 0}$ defined by the following SDEs
\begin{equation*}
\begin{cases}
\dd S_t\, = \,S_t\,\sqrt{V_t}\, \dd W^1_t \,, \\
\dd V_t\, = \, -\lambda(V_t-\theta)\, \dd t \,+\, \eta \sqrt{V_t} \, \dd W_t^2 \,, \\
X_0=0 \,, \quad V_0=\sigma_0 \,,
\end{cases}
\end{equation*}
where $(W^1_t)_{t\ge 0}$ and $(W^2_t)_{t\ge 0}$
are standard Brownian motions with
$\langle \dd W^1_t\, , \dd W^2_t \rangle = \rho\, \dd t$.

Note that $S_t$  displays explosion of moments, i.e.\
$\E[S_T^p] = \infty$ for $p > 1$ large enough.
In general, for any fixed $t\geq 0$ one can define the explosion moment $p^*(t)$ as
\[
 p^*(t) := \sup\{p > 0: \ \E[S_t^p] < \infty\}\,,
\]
so that $\E[S_t^p] < \infty$ for $p < p^*(t)$ while
$\E[S_t^p] = \infty$ for $p > p^*(t)$
(in the case of Heston's model, one has $\E[S_t^p] = \infty$ also for $p = p^*(t)$).
The behavior of the explosion moment $p^*(t)$ is described
in the following Lemma, proved below.

\begin{lemma}\label{ch2:th:lemmaheston}
If $\rho=-1$, then $p^*(t)=+\infty$ for every $t \geq 0$.\\
If $\rho>-1$, then $p^*(t) \in (1,+\infty)$ for every $t >0$. 
Moreover, as $t \downarrow 0$
 \[
  p^*(t) \sim \frac{C}{t} \,,
 \]
where
\begin{equation}\label{ch2:eq:CHeston}
C = C(\rho, \eta ) :=\begin{cases}
\displaystyle
\frac{2}{\eta \sqrt{1-\rho^2}}
\left(\arctan \frac{\sqrt{1-\rho^2}}{\rho }+\pi 1_{\rho<0}\right)
& \text{ if } \rho<1\\
\displaystyle
\rule{0pt}{2.2em}
\frac{2}{\eta} & \text{ if } \rho=1           
\end{cases} \,.
\end{equation}
\end{lemma}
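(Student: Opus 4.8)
The plan is to use the affine transform formula for Heston's model to reduce the computation of $p^*(t)$ to the analysis of a single scalar Riccati ODE, and then to read off the small-time behaviour of $p^*(t)$ from the large-$p$ behaviour of the associated moment-explosion time. The starting point, which I would recall (quoting the theory of affine processes, or deriving it from the explicit Heston characteristic function by analytic continuation), is that for $p>1$ the moment $\E[S_t^p]=\E[e^{pX_t}]$ is finite if and only if the solution $\psi=\psi(\cdot\,;p)$ of
\[
  \dot\psi \,=\, Q_p(\psi) \,:=\, \tfrac12\eta^2\,\psi^2 + b_p\,\psi + c_p \,,
  \qquad \psi(0)=0 \,,
\]
with $b_p := \rho\eta p-\lambda$ and $c_p := \tfrac12 p(p-1)>0$, does not explode on $[0,t]$; in the Heston case the moment is moreover infinite at the explosion time. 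Since $Q_p(0)=c_p>0$, the solution $\psi$ increases from $0$; it stays bounded --- converging to the smallest nonnegative root of $Q_p$ --- precisely when $Q_p$ has a nonnegative root, which, the leading and constant coefficients of $Q_p$ being positive, happens if and only if the discriminant $\Delta_p := b_p^2-\eta^2 p(p-1)$ satisfies $\Delta_p\ge 0$ \emph{and} $b_p\le 0$. Otherwise $\psi$ explodes at the time $T^*(p) := \int_0^\infty \dd x/Q_p(x)<\infty$.

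The next step is the passage from $T^*(\cdot)$ to $p^*(\cdot)$. By Jensen's inequality the set $\{p\ge0:\E[S_t^p]<\infty\}$ is an interval, and boundedness of $\psi(\cdot\,;p)$ on $[0,t]$ trivially implies boundedness on every subinterval; hence, for $p>1$, one has $\E[S_t^p]<\infty \Leftrightarrow p<p^*(t) \Leftrightarrow t<T^*(p)$, so $t\mapsto p^*(t)$ is non-increasing and ``inverse'' to $p\mapsto T^*(p)$. In particular $p^*(t)\sim C/t$ as $t\downarrow0$ follows once $T^*(p)\sim C/p$ as $p\to\infty$ is proved. Before that, the two structural assertions are easy: if $\rho=-1$ then $b_p=-\eta p-\lambda<0$ and $\Delta_p=\eta p(\eta+2\lambda)+\lambda^2>0$ for all $p>0$, so $T^*(p)=+\infty$ and hence $p^*(t)=+\infty$; if $\rho>-1$ then, as $p\to\infty$, either $\Delta_p\to-\infty$ (when $|\rho|<1$) or $b_p\to+\infty$ (when $\rho=1$), so the non-explosion criterion eventually fails and $p^*(t)<\infty$, whereas as $p\downarrow1$ one has $c_p\to0$, which one checks forces $T^*(p)\to\infty$, so $p^*(t)>1$.

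It remains to evaluate $C=\lim_{p\to\infty}p\,T^*(p)$ by computing $\int_0^\infty\dd x/Q_p(x)$ explicitly: when $\Delta_p<0$, completing the square gives $T^*(p)=\frac{2}{\sqrt{-\Delta_p}}\bigl(\frac{\pi}{2}-\arctan\frac{b_p}{\sqrt{-\Delta_p}}\bigr)$, while when $\Delta_p\ge0$ and $b_p>0$ (so $Q_p$ has two negative roots) partial fractions give $T^*(p)=\frac{1}{\sqrt{\Delta_p}}\log\frac{b_p+\sqrt{\Delta_p}}{b_p-\sqrt{\Delta_p}}$. For $|\rho|<1$ one has $-\Delta_p\sim\eta^2(1-\rho^2)p^2$ and $b_p/\sqrt{-\Delta_p}\to\rho/\sqrt{1-\rho^2}$, so $p\,T^*(p)\to\frac{2}{\eta\sqrt{1-\rho^2}}\bigl(\frac{\pi}{2}-\arctan\frac{\rho}{\sqrt{1-\rho^2}}\bigr)$, which is exactly the constant in \eqref{ch2:eq:CHeston} after using $\frac{\pi}{2}-\arctan x=\arctan(1/x)$ for $x>0$ and $\frac{\pi}{2}-\arctan x=\pi+\arctan(1/x)$ for $x<0$. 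For $\rho=1$, where $b_p=\eta p-\lambda$ and $\Delta_p=\eta(\eta-2\lambda)p+\lambda^2$, one has $b_p/\sqrt{|\Delta_p|}\to\infty$ in each of the three cases according to whether $\eta$ exceeds, equals or is smaller than $2\lambda$; the bracketed (resp.\ logarithmic) factor above is then asymptotically $\sqrt{|\Delta_p|}/b_p$, so $T^*(p)\sim 2/b_p\sim2/(\eta p)$, i.e.\ $C=2/\eta$.

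The step I expect to be the main obstacle is making the first one fully rigorous --- establishing that $\E[S_t^p]$ is finite exactly up to the Riccati explosion time, and infinite at it --- together with the care needed near the degenerate parameter choices $\rho=\pm1$ and near $p=1$, where $Q_p$ loses its genericity (double or vanishing roots); once the case distinctions are in place, the explicit integrations and asymptotics in the last step are routine.
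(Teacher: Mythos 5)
Your proposal is correct and follows essentially the same route as the paper: both reduce the lemma to computing the large-$p$ asymptotics of the moment-explosion time $T^*(p)$, with identical case analysis on the signs of $\Delta(p)$ and $\chi(p)=b_p$ and the same arctangent/logarithm formulas. The only difference is that the paper quotes the explicit expression for $T^*(p)$ (and the equivalence $t<T^*(p)\Leftrightarrow p<p^*(t)$) from Andersen--Piterbarg and Friz--Keller-Ressel, while you re-derive it from the Riccati ODE via $T^*(p)=\int_0^\infty \dd x/Q_p(x)$; this makes your version more self-contained, at the cost of having to justify rigorously the affine-transform reduction and the blow-up of $\E[S_t^p]$ at the explosion time --- a point you rightly flag.
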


The asymptotic behavior of the implied volatility
$\sigma_\imp(\kappa,t)$ is known in the regimes
of large strike (with fixed maturity) and small maturity (with fixed strike).
\begin{itemize}
\item In \cite{cf:BF08}, Benaim and Friz show that
for fixed $t > 0$, when $\kappa \to +\infty$ 
\begin{equation}\label{ch2:eq:impvolHestonk}
	\sigma_\imp(\kappa, t)
	\underset{\kappa \uparrow \infty}{\sim} 
	\frac{\sqrt{2\kappa}}{\sqrt{ t}} \left(\sqrt{p^*(t)}-\sqrt{p^*(t)-1}\right)\,,
\end{equation}
based on the estimate (cf.\ also \cite{cf:AP07})
\begin{equation}\label{ch2:eq:estH1}
 -\log \P(X_t>\kappa)
	\underset{\kappa \uparrow \infty}{\sim} 
  p^*(t) \, \kappa\,.
\end{equation}

\item In \cite{cf:FJ09}, Forde and Jacquier have proved that
for any fixed $\kappa > 0$, as $t \downarrow 0$
\begin{equation}\label{ch2:eq:impvolHestont}
	\sigma_\imp(\kappa,t) 
	\underset{t \downarrow 0}{\sim} 
	\frac{\kappa}{\sqrt{2\, \Lambda^*(\kappa)}} \,,
\end{equation}
where $\Lambda^*(\cdot)$ is the 
Legendre transform of the function $\Lambda :\R^+\to \R^+ \cup\{\infty\}$ given by
\begin{equation}
\Lambda(p) :=
\begin{cases}
\displaystyle
\frac{\sigma_0 p}{\eta\left(\sqrt{1-\rho^2}\cot\left(\frac{1}{2}\eta 
p\sqrt{1-\rho^2}\right)-\rho \right)} & \text{if } p < C \,, \\
\rule{0pt}{1.4em}
\infty & \text{if } p \geq C \,,
\end{cases}
\end{equation}
where $C$ is the constant in \eqref{ch2:eq:CHeston}.
Their analysis is based on the estimate
\begin{equation} \label{ch2:eq:estH2}
	-\log \P(X_t \ge \kappa) 
	\underset{t \downarrow 0}{\sim} 
	 \frac{1}{t} \Lambda^*(\kappa) \,,
\end{equation}
obtained by showing that the log-price $(X_t)_{t\ge 0}$ in the Heston model satisfies 
a large deviations principle as $t \downarrow 0$, with rate $1/t$ and good 
rate function $\Lambda^*(\kappa)$.
\end{itemize}

We first note that the asymptotics \eqref{ch2:eq:impvolHestonk} and
\eqref{ch2:eq:impvolHestont} follow easily from our Theorem~\ref{ch2:th:main2b},
plugging the estimates \eqref{ch2:eq:estH1} and \eqref{ch2:eq:estH2}
into relations \eqref{ch2:eq:sigmainf} and \eqref{ch2:eq:sigmafin}, respectively.

\smallskip

We also observe that the estimates \eqref{ch2:eq:impvolHestonk} and
\eqref{ch2:eq:impvolHestont} match, in the following sense:
if we take the limit $t \to 0$ of the right hand side of \eqref{ch2:eq:impvolHestonk}
(i.e.\ we first let $\kappa \uparrow + \infty$ and then $t\downarrow 0$
in $\sigma_\imp(\kappa, t)$), we obtain
\begin{equation} \label{ch2:eq:coinci}
\eqref{ch2:eq:impvolHestonk}
\underset{t \downarrow 0}{\sim} 	
	\frac{\sqrt{2\kappa}}{\sqrt{ t}} \frac{1}{2 \sqrt{p^*(t)}}
\sim \frac{\sqrt{2\kappa}}{\sqrt{ t}} \frac{1}{2 \sqrt{\frac{C}{t}}}
= \frac{\sqrt{\kappa}}{\sqrt{2\, C}} \,.
\end{equation}
If, on the other hand, we take the limit $\kappa \uparrow \infty$ 
of the right hand side of \eqref{ch2:eq:impvolHestont}
(i.e.\ we first let $t\downarrow 0$ and then  $\kappa \uparrow + \infty$
in $\sigma_\imp(\kappa, t)$),
since $\Lambda^*(\kappa)\sim C \kappa$,\footnote{This is because 
$\Lambda(p) \uparrow +\infty$ as $p \uparrow C$, hence the slope of $\Lambda^*(\kappa)$
converges to $C$ as $\kappa \to \infty$.} we obtain
\begin{equation}
 \eqref{ch2:eq:impvolHestont}
 \underset{\kappa \uparrow +\infty}{\sim}
 \frac{\kappa}{\sqrt{2 C \kappa}}= \frac{\sqrt{\kappa}}{\sqrt{2 \,C}} \,,
\end{equation}
which coincides with \eqref{ch2:eq:coinci}.
Analogously, also the estimates \eqref{ch2:eq:estH1} and \eqref{ch2:eq:estH2} match.

It is then natural to conjecture that,
for \emph{any} family of values of $(\kappa,t)$ such
that $\kappa \uparrow +\infty$ and $t \downarrow 0$ jointly,
one should have
\begin{equation} \label{ch2:eq:conjH}
\log \P(X_t\geq \kappa) \sim -C \,\frac{\kappa}{t} \,,
\end{equation}
where $C$ is the constant in \eqref{ch2:eq:CHeston}.
If this holds, applying Theorem~\ref{ch2:th:main2b}, 
relation \eqref{ch2:eq:sigmafin} yields
\begin{equation} \label{ch2:eq:Hetint}
 \sigma_\imp(\kappa,t) \sim \frac{\sqrt{\kappa}}{\sqrt{2\,C}} \,,
\end{equation}
providing a smooth interpolation between \eqref{ch2:eq:impvolHestonk} and
\eqref{ch2:eq:impvolHestont}.

\begin{remark}[Surface asymptotics for the Heston model]\rm \label{ch2:rem:jointH}
If \eqref{ch2:eq:Hetint} holds for any family of values of $(\kappa,t)$
with $\kappa \to \infty$ and $t \to 0$, it follows that for every $\epsilon > 0$
there exists $M = M(\epsilon) \in (0,\infty)$ such that 
the following inequalities hold:
\begin{equation*}
	\big(1-\epsilon\big) \frac{\sqrt{\kappa}}{\sqrt{2\,C}} \le
	\sigma_\imp(\kappa,t) \le \big(1+\epsilon\big)\frac{\sqrt{\kappa}}{\sqrt{2\,C}} \,,
\end{equation*}
\emph{for all $(\kappa,t)$ in the region $\cA_{T,M} := \{0 < t \le \frac{1}{M}, \ \kappa > M\}$},
as it follows easily by contradiction (cf.\ Remark~\ref{ch2:rem:joint}
for a similar argument).
\end{remark}

\begin{proof}[Proof of Lemma \ref{ch2:th:lemmaheston}]
Given any number $p>1$ we define the explosion time $T^*(p)$ as
\begin{equation*}
	T^*(p) := \sup\{t > 0: \ \E[S_t^p] < \infty \} \,.
\end{equation*}
Note that if $T^*(p)=t \in (0,+\infty)$ then $p^*(t)=p$.
By \cite{cf:AP07} (see also \cite{cf:FK09})
\begin{equation}\label{ch2:eq:TstarHeston}
T^*(p)= \begin{cases} 
+\infty 
& \text{if } \
\Delta(p)\geq 0,\,\chi(p)<0 \,, \\
\rule{0pt}{1.8em}
\frac{1}{\sqrt{\Delta(p)}} \log \left(\frac{\chi(p)+\sqrt{\Delta (p)}}
{\chi(p)-\sqrt{\Delta (p)}}\right)
& \text{if } \
\Delta(p)\geq 0,\,\chi(p)>0 \,, \\
\rule{0pt}{2.0em}
\frac{2}{\sqrt{-\Delta (p)}}\left(\arctan \frac{\sqrt{-\Delta(p)}}{\chi(p)}
+\pi 1_{\chi(p)<0}\right) & \text{if } \ \Delta(p)<0 \,,
\end{cases}
\end{equation}
where
\begin{equation*}
	\chi(p) := \rho \eta p\,-\, \lambda\,,
	\qquad
	\Delta(p) := \chi^2(p)-\eta^2(p^2-p) \,,
\end{equation*}
Observe that if $\rho=-1$, then $\chi(p)=-\eta p-\lambda<0$ and 
$\Delta(p)=\lambda^2+p\left(2\eta\lambda+\eta^2\right)\geq0$, which implies 
$T^*(p)=+\infty $ for every $p>1$, or equivalently $p^*(t)=+\infty$ for every $t > 0$.

On the other hand, since
\[
\Delta(p)\,=\, \rho^2\eta^2p^2+\lambda^2-2\eta \rho\lambda p-\eta^2p^2+\eta^2 p \, =\,
\eta^2p^2(\rho^2-1)+p(\eta^2-2\eta\rho \lambda)+\lambda^2 \,,
\]
we observe that if $\rho \neq 1$, then $\Delta p<0$ as $p\to +\infty$, which implies
\begin{equation}
 \begin{split}
  T^*(p) &\underset{p \uparrow \infty}{\sim}\frac{2}{p(\eta\sqrt{1-\rho^2})}
\left(\arctan \frac{\eta p\sqrt{1-\rho^2}}{\rho \eta p}+\pi 1_{\rho<0}\right)\\
&=\, \frac{1}{p}\frac{2}{\eta \sqrt{1-\rho^2}}
\left(\arctan \frac{\sqrt{1-\rho^2}}{\rho }+\pi 1_{\rho<0}\right)\,.
 \end{split}
\end{equation}
In particular this leads to the conclusion that, if $|\rho|\neq1$, then
\[
p^*(t)\underset{t \downarrow 0}{\sim}\frac{C}{t}
\]
where $C$ was defined in \eqref{ch2:eq:CHeston}.

It remains to study the case $\rho =1$, in which $\chi(p)>0$ for every $p$. 
We have two possibilities:
if $\eta>2 \lambda$ then $\Delta(p)>0$ when $p \to +\infty$, and so by \eqref{ch2:eq:TstarHeston}
\begin{equation*}
 \begin{split}
 T^*(p) &\underset{p \uparrow \infty}{\sim} \frac{1}{\sqrt{p(\eta^2+2\eta \lambda)}}
 \log \left(1+2\frac{\sqrt{p(\eta^2+2\eta \lambda)}}{\eta p-\sqrt{p(\eta^2+2\eta \lambda)}}\right)
 \sim \frac{2}{\eta}\, \frac{1}{p} \,.
 \end{split}
\end{equation*}
On the other hand, if $\eta < 2 \lambda$,
then $\Delta(p) < 0$ when $p \to \infty$ and so
\begin{equation*}
 \begin{split}
  T^*(p) &\underset{p \uparrow \infty}{\sim}\frac{2}{\sqrt{p(2\eta\lambda-\eta^2)}}
\left(\arctan \frac{\sqrt{p(2\eta\lambda-\eta^2)}}{p\eta}\right)
\sim  \frac{2}{\eta}\, \frac{1}{p}\,.
 \end{split}
\end{equation*}
Finally if $\eta=2\lambda$, $\Delta(p)=\lambda^2$, and so
\[
T^*(p)=\frac{1}{\lambda}\log \left(1+\frac{2\lambda}{\eta p-2\lambda}\right) 
\underset{p\uparrow \infty}{\sim}  \frac{2}{\eta}\, \frac{1}{p} \,.
\]
In all the cases we obtain $p^*(t)\underset{t \downarrow 0}{\sim} \frac{2}{\eta} \, \frac{1}{t}$,
in agreement with \eqref{ch2:eq:CHeston}.
\end{proof}

\section{From option price to implied volatility}
\label{ch2:sec:pricetovol}

In this section we prove Theorem~\ref{ch2:th:main1}.
We start with some background on Black\&Scholes model
and on related quantities.
We let $Z$ be a standard Gaussian random variable and denote by
$\phi$ and $\Phi$ its density and distribution functions:
\begin{equation} \label{ch2:eq:phiPhi}
	\phi(z) := \frac{\P(Z \in \dd z)}{\dd z} = \frac{e^{-\frac{1}{2} z^2}}{\sqrt{2\pi}} \,, \qquad
	\Phi(z) := \P(Z \le z) = \int_{-\infty}^z \phi(t) \, \dd t \, .
\end{equation}

\subsection{Mills ratio}
\label{ch2:eq:backg}

The Mills ratio $U: \R \to (0,\infty)$ is defined by
\begin{equation}\label{ch2:eq:Mills}
	U(z) := \frac{1-\Phi(z)}{\phi(z)} = \frac{\Phi(-z)}{\phi(z)}, \qquad \forall z \in \R \,.
\end{equation}
The next lemma summarizes
the main properties of $U$ that will be used in the sequel.

\begin{lemma}\label{ch2:th:Mills}
The function $U$ is smooth, strictly decreasing, strictly convex and  satisfies
\begin{equation}\label{ch2:eq:Millsas}
	U'(z) \sim -\frac{1}{z^2} \qquad \text{as } z \uparrow \infty \,.
\end{equation}
\end{lemma}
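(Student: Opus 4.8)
The plan is to verify directly the three asserted properties of the Mills ratio $U(z) = \Phi(-z)/\phi(z)$ and then establish the asymptotics \eqref{ch2:eq:Millsas}. First I would compute $U'$. Using $\phi'(z) = -z\phi(z)$ and $\Phi(-z)' = -\phi(-z) = -\phi(z)$, the quotient rule gives
\begin{equation*}
	U'(z) = \frac{-\phi(z)\phi(z) - \Phi(-z)\cdot(-z\phi(z))}{\phi(z)^2}
	= z\,U(z) - 1 \,.
\end{equation*}
Smoothness is then immediate since $\phi > 0$ everywhere and $\phi, \Phi$ are smooth, so $U$ is a smooth ratio with nonvanishing denominator, and inductively all derivatives exist. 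To see $U$ is strictly decreasing, i.e.\ $U'(z) = zU(z) - 1 < 0$ for all $z$, note this is the classical Mills-ratio bound $\Phi(-z) < \phi(z)/z$ for $z > 0$; for $z \le 0$ one has $zU(z) \le 0 < 1$ trivially. The bound for $z>0$ follows from $\int_z^\infty \phi(t)\,\dd t < \int_z^\infty \frac{t}{z}\phi(t)\,\dd t = \phi(z)/z$.

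Next, strict convexity: differentiating $U'(z) = zU(z) - 1$ gives $U''(z) = U(z) + zU'(z) = U(z) + z(zU(z)-1) = (1+z^2)U(z) - z$. I would show this is strictly positive for all $z$. For $z \le 0$ it is clear since $U(z) > 0$ and $-z \ge 0$. For $z > 0$, this is again a standard refined Mills inequality, $\Phi(-z) > \frac{z}{1+z^2}\phi(z)$, which can be proved by noting that $g(z) := (1+z^2)U(z) - z$ satisfies $g(z) \to 0$ as $z \to \infty$ (by the asymptotics below) and $g'(z) = 2zU(z) + (1+z^2)U'(z) - 1 = 2zU(z) + (1+z^2)(zU(z)-1) - 1 = z(3+z^2)U(z) - (2+z^2)$; alternatively, and more cleanly, observe that $h(z) := \phi(z) - z\Phi(-z) \cdot$ (suitable factor) — actually the slickest route is: $\frac{\dd}{\dd z}\bigl[(1+z^2)U(z) - z\bigr]$ has a sign that can be tracked, but the cleanest is to integrate by parts twice, $\Phi(-z) = \int_z^\infty \phi = \frac{\phi(z)}{z} - \int_z^\infty \frac{\phi(t)}{t^2}\dd t$, giving $U(z) = \frac1z - \frac1z\int_z^\infty \frac{t\phi(t)}{t^2}\,\frac{\dd t}{\phi(z)}$, and then $(1+z^2)U(z) - z > 0 \iff$ an elementary estimate on the remainder integral. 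I would pick whichever of these bounds is cleanest to write out.

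Finally, the asymptotics \eqref{ch2:eq:Millsas}: from $U'(z) = zU(z) - 1$ it suffices to show $zU(z) - 1 \sim -1/z^2$ as $z \to \infty$, equivalently $U(z) \sim \frac1z - \frac1{z^3}$, i.e.\ $U(z) = \frac1z - \frac1{z^3} + o(z^{-3})$. This is the standard asymptotic expansion of the Mills ratio, obtained by iterating integration by parts: $\Phi(-z) = \int_z^\infty \phi(t)\,\dd t = \frac{\phi(z)}{z} - \frac{\phi(z)}{z^3} + O\bigl(\phi(z) z^{-5}\bigr)$, so dividing by $\phi(z)$ gives the claim and hence $U'(z) = zU(z) - 1 = -\frac1{z^2} + o(z^{-2}) \sim -\frac1{z^2}$.

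The main obstacle is strict convexity: establishing $U''(z) = (1+z^2)U(z) - z > 0$ uniformly in $z \in \R$ requires a sharp lower bound on $\Phi(-z)$ for large positive $z$ (the easy direction, $U > 0$, handles $z \le 0$). The cleanest self-contained argument is via the two-term integration-by-parts identity $\Phi(-z) = \frac{\phi(z)}{z} - \int_z^\infty \frac{\phi(t)}{t^2}\,\dd t$ together with $\int_z^\infty \frac{\phi(t)}{t^2}\,\dd t < \frac{1}{z^2}\int_z^\infty \phi(t)\,\dd t \cdot$ — actually $< \frac{\phi(z)}{z^3}$ by the same trick as before — yielding $\Phi(-z) > \frac{\phi(z)}{z} - \frac{\phi(z)}{z^3} = \frac{(z^2-1)\phi(z)}{z^3}$, from which $(1+z^2)U(z) - z > \frac{(1+z^2)(z^2-1)}{z^3} - z = \frac{z^4-1-z^4}{z^3} = -\frac{1}{z^3}$, which is not quite positive — so one needs the sharper three-term bound $\Phi(-z) > \frac{\phi(z)}{z} - \frac{\phi(z)}{z^3} + \frac{\phi(z)}{z^5}\cdot c$ or a direct ODE/monotonicity argument on $g(z) = (1+z^2)U(z)-z$ using $g(+\infty) = 0^+$ and sign analysis of $g'$. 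I expect to spend most of the effort making this bound airtight, and will likely present it as a short separate computation.
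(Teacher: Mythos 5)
Your computations of $U'(z) = zU(z)-1$, $U''(z) = (1+z^2)U(z)-z$, smoothness, strict monotonicity via $U(z) < 1/z$ for $z>0$, and the asymptotics \eqref{ch2:eq:Millsas} are all sound; the paper follows essentially the same structure. One cosmetic difference: the paper obtains \eqref{ch2:eq:Millsas} by sandwiching $U'$ between $-\frac{1}{1+z^2}$ and $-\frac{1}{3+z^2}$, whereas you use the asymptotic expansion of $\Phi(-z)$ directly; both are fine. A more substantive difference is that the paper simply cites Shenton [S54] and Pinelis [P01] for the two-sided bounds $\frac{z}{z^2+1} < U(z) < \frac{z^2+2}{z^3+3z}$, from which convexity and the derivative bounds drop out immediately, while you attempt to reprove the needed lower bound from scratch.

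That attempt is where the genuine gap sits, and you flag it honestly yourself: strict convexity for $z>0$ is equivalent to the lower Mills bound $U(z) > \frac{z}{z^2+1}$, and the two-term integration-by-parts bound $U(z) > \frac{1}{z}-\frac{1}{z^3}$ is strictly weaker, yielding only $U''(z) > -\frac{1}{z^3}$ as you compute. The alternative you sketch — tracking the sign of $g'$ where $g(z)=(1+z^2)U(z)-z$ — is close to circular, since $g'(z)<0$ is equivalent to the \emph{upper} Mills bound $U(z) < \frac{z^2+2}{z(z^2+3)}$, which you have not established (you only have $U<1/z$). So as written, the convexity claim is not proved.

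The cleanest self-contained fix is to target the lower bound directly. For $z\ge 0$ set
\begin{equation*}
	h(z) := \Phi(-z) - \frac{z}{z^2+1}\,\phi(z)\,.
\end{equation*}
Then $h(z)\to 0$ as $z\to\infty$, and using $\phi'(z)=-z\phi(z)$ one computes
\begin{equation*}
	h'(z) = -\phi(z) - \phi(z)\left(\frac{1-z^2}{(z^2+1)^2} - \frac{z^2}{z^2+1}\right)
	= -\frac{2\,\phi(z)}{(z^2+1)^2} < 0\,,
\end{equation*}
so $h$ is strictly decreasing with limit $0$, whence $h(z)>0$ for all $z\ge 0$. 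This is exactly $U(z)>\frac{z}{z^2+1}$, hence $U''(z)=(1+z^2)U(z)-z>0$ for $z>0$, closing the gap with a short elementary argument and no external citation.
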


\begin{proof}
Since $\Phi'(z) = \phi(z)$ and $\phi$ is an analytic function, $U$ is also analytic.
Since $\phi'(z) = -z\phi(z)$, one obtains
\begin{equation} \label{ch2:eq:U12}
	U'(z) = z U(z) - 1 \,, \qquad
	U''(z) = U(z) + zU'(z) = (1+ z^2) U(z) - z \,.
\end{equation}
Recalling that $U(z) > 0$, these relations already show that $U'(z) < 0$ and
$U''(z) > 0$ for all $z \le 0$. For $z > 0$, the following bounds hold
\cite[eq. (19)]{cf:S54}, \cite[Th. 1.5]{cf:P01}:
\begin{equation} \label{ch2:eq:Millsbounds}
	\frac{z}{z^2+1} =
	\frac{1}{z+\frac{1}{z}} < U(z) <
	\frac{1}{z+\frac{1}{z + \frac{2}{z}}} 
	= \frac{z^2 + 2}{z^3 + 3z} , 
	\qquad \forall z > 0 \,.
\end{equation}
Applying \eqref{ch2:eq:U12} yields $U''(z) > 0$ and
$-\frac{1}{1+z^2} < U'(z) < -\frac{1}{3+z^2}$ for all $z > 0$, hence \eqref{ch2:eq:Millsas}.
\end{proof}

We recall that the smooth function $D: (0,\infty) \to (0,\infty)$ was introduced in \eqref{ch2:eq:D}.
Since 
\begin{equation}\label{ch2:eq:D'}
D'(z) = -\frac{1}{z^2} \phi(z) < 0 \,,
\end{equation}
$D(\cdot)$ is a strictly decreasing bijection (note that 
$\lim_{z \downarrow 0} D(z) = \infty$ and
$\lim_{z \to\infty} D(z) = 0$). 
Its inverse $D^{-1}: (0,\infty) \to (0,\infty)$
is then smooth and strictly decreasing as well.
Writing
$D(z) = \phi(z) ( \frac{1}{z} - U(z) )$, it follows by \eqref{ch2:eq:Millsbounds} that
$\frac{1}{z} - U(z) \sim \frac{1}{z^3}$ as $z \uparrow \infty$, hence
\begin{equation*}
	D(z) \sim \frac{1}{z^3} \phi(z) \sim
	\frac{e^{-\frac{1}{2}z^2}}{\sqrt{2\pi} \, z^3}
	\quad \
	\text{as } z \uparrow \infty \,, \qquad \ 
	D(z) \sim \frac{1}{z} \phi(0) = \frac{1}{\sqrt{2\pi} z}
	\quad \
	\text{as } z \downarrow 0  \,.
\end{equation*}
It follows easily that $D^{-1}(\cdot)$ satisfies \eqref{ch2:eq:Das}.

\subsection{Black\&Scholes}
\label{ch2:sec:BS}

Let $(B_t)_{t\ge 0}$
be a standard Brownian motion.
The Black\&Scholes model is defined by a risk-neutral log-price
$(X_t := \sigma B_t - \frac{1}{2} \sigma^2 t)_{t\ge 0}$, where
the parameter $\sigma \in (0,\infty)$
represents the volatility.
The Black\&Scholes formula for the price of a normalized European call
is $\CBS(\kappa,\sigma \sqrt{t})$,
where $\kappa$ is the log-strike, $t$ is the maturity and we define
\begin{equation} \label{ch2:eq:BS1}
	\CBS(\kappa,v) :=  \E [ (e^{v Z - \frac{1}{2} v^2} - e^\kappa)^+ ]
	= \begin{cases}
	(1-e^\kappa)^+ & \text{if } v = 0 \,, \\
	\rule{0pt}{1.2em}\Phi(d_1) - e^\kappa \Phi(d_2)  & \text{if } v > 0 \,,
	\end{cases}
\end{equation}
where $\Phi$ is defined in \eqref{ch2:eq:phiPhi}, and we set
\begin{equation} \label{ch2:eq:BS2}
	\begin{cases}
	d_1 = d_1(\kappa,v) := - \frac{\kappa}{v} + \frac{v}{2} \,, \\
	d_2 = d_2(\kappa, v) := - \frac{\kappa}{v} - \frac{v}{2} \,,
	\end{cases}
	\quad \ \text{so that} \quad \
	\begin{cases}
	d_2 = d_1 - v \,, \\
	d_2^2 = d_1^2 + 2\kappa \,.
	\end{cases}
\end{equation}

Note that $\CBS(\kappa,v)$ is a continuous
function of $(\kappa,v)$. Since $e^\kappa \phi(d_2) = \phi(d_1)$,
for all $v > 0$ one easily computes
\begin{gather*}
	\frac{\partial \CBS(\kappa,v)}{\partial v}
	= \phi(d_1) > 0 \,, \qquad
	\frac{\partial \CBS(\kappa,v)}{\partial \kappa}
	= - e^\kappa \Phi(d_2) < 0 \,,
\end{gather*}
hence $\CBS(\kappa,v)$ is strictly increasing in $v$ and strictly
decreasing in $\kappa$ (see Figure~\ref{ch2:fig:BS}).
It is also directly checked that for all $\kappa \in \R$ and $v \ge 0$ one has
\begin{equation} \label{ch2:eq:CBS-}
	\CBS(\kappa,v) =  1 - e^\kappa + e^\kappa\CBS(-\kappa,v) \,.
\end{equation}

\smallskip

In the following key proposition,
proved in Appendix~\ref{ch2:sec:app:BS},
we show that when $\kappa \ge 0$
the Black\&Scholes call price $\CBS(\kappa, v)$ vanishes
precisely when $v \to 0$ or $d_1 \to -\infty$ (or, more generally, in a combination of
these two regimes, when $\min\{d_1, \log v\} \to -\infty$).
We also provide
sharp estimates on $\CBS(\kappa, v)$ for each regime
(weaker estimates on $\log \CBS(\kappa, v)$
could be deduced from Theorems~\ref{ch2:th:main2b}
and~\ref{ch2:th:main2bl}).

\begin{proposition}\label{ch2:th:BS}
For any family of values of $(\kappa,v)$ with $\kappa \ge 0$, $v > 0$,
one has
\begin{equation}\label{ch2:eq:cto0}
\CBS(\kappa, v)\to 0 \qquad \text{if and only if}
\qquad \min\{d_1, \log v\} \to -\infty \,,
\end{equation}
that is, $\CBS(\kappa, v)\to 0$ if and only if from any subsequence of $(\kappa,v)$ one can extract 
a sub-subsequence along which either
$d_1 \to -\infty$ or $v \to 0$. Moreover:
\begin{itemize}
\item if $d_1 := - \frac{\kappa}{v} + \frac{v}{2} \to -\infty$, then 
\begin{equation}\label{ch2:eq:asd1}
	\CBS(\kappa, v) \sim \phi(d_1) \, \frac{v}{-d_1(-d_1+v)} \,;
\end{equation}

\item if $v \to 0$, then
\begin{equation}\label{ch2:eq:asv}
	\CBS(\kappa, v) \sim - U'(-d_1) \,\phi(d_1) \, v \,;
\end{equation}
\end{itemize}
where $\phi(\cdot)$ and $U(\cdot)$
are defined in \eqref{ch2:eq:phiPhi} and \eqref{ch2:eq:Mills}.
\end{proposition}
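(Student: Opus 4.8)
The plan is to reduce everything to a single representation of $\CBS(\kappa,v)$ via the Mills ratio $U$ of \eqref{ch2:eq:Mills}. Since $d_2^2 = d_1^2 + 2\kappa$ (see \eqref{ch2:eq:BS2}), one has $e^\kappa\phi(d_2) = \phi(d_1)$; combining this with the elementary identity $\Phi(z) = \phi(z)\,U(-z)$, valid for all $z \in \R$ because $\phi$ is even, applied to $z = d_1$ and $z = d_2$, the Black\&Scholes formula \eqref{ch2:eq:BS1} becomes
\begin{equation*}
	\CBS(\kappa,v) \,=\, \phi(d_1)\,\big[\, U(-d_1) - U(-d_2) \,\big]
	\,=\, \phi(d_1) \int_{-d_1}^{-d_1 + v} \big( -U'(s) \big) \, \dd s \,,
\end{equation*}
using $-d_2 = -d_1 + v$ in the last step. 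Since $U$ is strictly decreasing (Lemma~\ref{ch2:th:Mills}), $-U'(s) > 0$, so this makes positivity of $\CBS$ transparent, and it drives all three assertions.

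For the two asymptotic estimates I would plug Lemma~\ref{ch2:th:Mills} into this integral. If $d_1 \to -\infty$ then $-d_1 \to \infty$, and by \eqref{ch2:eq:Millsas} the ratio $s^2(-U'(s))$ tends to $1$ as $s \to \infty$; hence, once $-d_1$ is large, $-U'$ is squeezed between $(1\mp\epsilon)\,s^{-2}$ \emph{throughout the whole range of integration}, so the integral is asymptotic to $\int_{-d_1}^{-d_1+v}s^{-2}\,\dd s = \frac{v}{(-d_1)(-d_1+v)}$, which is \eqref{ch2:eq:asd1}. If instead $v \to 0$, then on the shrinking interval $[-d_1,-d_1+v]$ one has $-U'(s) = (-U'(-d_1))\,(1+o(1))$ uniformly — when $-d_1$ stays bounded this is continuity of $U'$ (which is finite and nonzero), and when $-d_1\to\infty$ it follows from $-U'(s) \sim s^{-2}$ together with $s/(-d_1) \to 1$ — so the integral equals $(-U'(-d_1))\,v\,(1+o(1))$, giving \eqref{ch2:eq:asv}. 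The only genuinely delicate point is precisely these two uniformity claims, namely that the $o(1)$-error from \eqref{ch2:eq:Millsas} is uniform over the (possibly long, resp.\ shrinking) interval of integration; both, however, follow immediately from $s^2(-U'(s)) \to 1$, and the rest is bookkeeping with the monotonicity of $U$.

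For the equivalence \eqref{ch2:eq:cto0} the direction ``$\Leftarrow$'' is immediate: along a sub-subsequence with $d_1 \to -\infty$ we have $\CBS(\kappa,v) \le \Phi(d_1) \to 0$, and along one with $v \to 0$ we have $\CBS(\kappa,v) \le \CBS(0,v) = 2\Phi(v/2) - 1 \to 0$, using that $\CBS$ is decreasing in $\kappa$ and $\kappa \ge 0$. For ``$\Rightarrow$'' I argue by contraposition: if no such sub-subsequence exists, then along some subsequence both $d_1$ and $\log v$ stay bounded below, and, passing to a further subsequence, $d_1 \to \bar d_1 \in (-\infty,\infty]$ and $v \to \bar v \in (0,\infty]$; I then read off from the master identity that $\liminf \CBS(\kappa,v) > 0$, contradicting $\CBS \to 0$. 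Concretely: if $\bar d_1 < \infty$ and $\bar v < \infty$ then $\CBS \to \phi(\bar d_1)\,(U(-\bar d_1) - U(-\bar d_1 + \bar v)) > 0$ by strict monotonicity of $U$; if $\bar d_1 < \infty$ and $\bar v = \infty$ then $-d_2 \to \infty$, so $U(-d_2) \to 0$ and $\CBS \to \phi(\bar d_1)\,U(-\bar d_1) = \Phi(\bar d_1) > 0$; and if $\bar d_1 = \infty$ (which forces $\bar v = \infty$, since $d_1 \le v/2$) then $\Phi(d_1) \to 1$ while $e^\kappa\Phi(d_2) = \phi(d_1)\,U(-d_2) \to 0$, so $\CBS \to 1$. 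In every case $\CBS$ stays bounded away from $0$, which is the desired contradiction.
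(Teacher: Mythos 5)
Your proposal is correct and rests on the same master identity $\CBS(\kappa,v)=\phi(d_1)\int_{-d_1}^{-d_1+v}(-U'(s))\,\dd s$ that the paper's own proof uses. The only divergences are stylistic: for \eqref{ch2:eq:asv} you argue via uniformity of the $o(1)$ in \eqref{ch2:eq:Millsas} (implicitly needing the same subsequence device the paper makes explicit to split the cases $-d_1$ bounded vs.\ $-d_1\to\infty$), where the paper instead uses the convexity of $U$ to squeeze the integral between $-U'(-d_1+v)\,v$ and $-U'(-d_1)\,v$; and for the ``$\Leftarrow$'' direction of \eqref{ch2:eq:cto0} you use the elementary bounds $\CBS\le\Phi(d_1)$ and $\CBS\le\CBS(0,v)$ rather than reading vanishing off the already-derived asymptotics~\eqref{ch2:eq:asd1}--\eqref{ch2:eq:asv} — both are sound.
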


\subsection{Proof of Theorem~\ref{ch2:th:main1}.}
\label{ch2:sec:th:mainmain}

Since the function $v \mapsto \CBS(\kappa, v)$ is a bijection from
$[0,\infty)$ to $[(1-e^\kappa)^+, 1)$,
it admits an inverse function $c \mapsto \VBS(\kappa, c)$, defined by
\begin{equation} \label{ch2:eq:VBS}
	\CBS(\kappa, \VBS(\kappa, c)) = c \,.
\end{equation}
By construction, $\VBS(\kappa, \cdot)$ is a strictly increasing bijection from
$[(1-e^\kappa)^+, 1)$ to $[0,\infty)$.
We will mainly focus on the case $\kappa \ge 0$, for which
$\VBS(\kappa, \cdot): [0,1) \to [0,\infty)$.

\begin{figure}[t]
\centering
\includegraphics[width=.5\columnwidth]{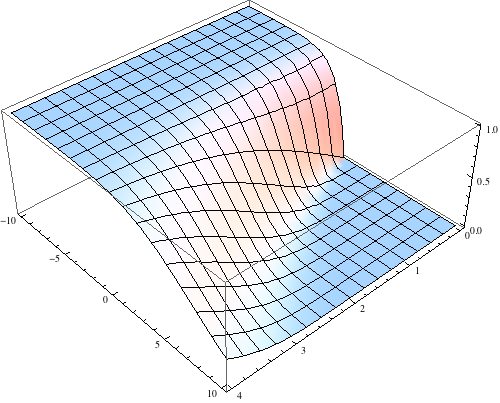}
\caption{A plot of $(\kappa, v) \mapsto \CBS(\kappa, v)$, for
$\kappa \in [-10,10]$ and $v \in [0,4]$.}
\label{ch2:fig:BS}
\end{figure}

\smallskip

Consider an arbitrary model, with a risk-neutral log-price $(X_t)_{t\ge 0}$,
and let $c(\kappa,t)$ be the corresponding
price of a normalized European call option, cf.\ \eqref{ch2:eq:cp}.
Since $z \mapsto (z-e^\kappa)^+$ is a convex function,
one has $c(\kappa,t) \ge (\E[e^{X_t}] - e^\kappa)^+ = (1-e^\kappa)^+$
by Jensen's inequality; since $(z-e^\kappa)^+ < z^+$, one has
$c(\kappa,t) < \E[e^{X_t}] = 1$. 
Consequently,
by \eqref{ch2:eq:VBS}, we have the following relation between the \textit{implied volatility}
$\sigma_\imp(\kappa,t)$ (defined in \S\ref{ch2:sec:setting})
and $\VBS(\kappa,c(\kappa,t))$:
\begin{equation}\label{ch2:eq:IV}
	\sigma_\imp(\kappa,t) := \frac{\VBS(\kappa,c(\kappa,t))}{\sqrt{t}} \,.
\end{equation}

Relation \eqref{ch2:eq:IV} allows to reformulate 
Theorem~\ref{ch2:th:main1} more transparently in terms
of the function $\VBS$. 
Inspired by \eqref{ch2:eq:parity}, we define $p = p(\kappa,c)$ by
\begin{equation} \label{ch2:eq:p}
	p := c - (1-e^\kappa) \,.
\end{equation}
Consider an arbitrary family of values of $(\kappa, c)$,
such that either $\kappa \ge 0$, $c \in (0,1)$ and $c \to 0$,
or alternatively $\kappa \le 0$, $p \in (0,1)$ and $p \to 0$
(with $p$ as in \eqref{ch2:eq:p}).
Then, in light of \eqref{ch2:eq:IV}, we can write the following:

\begin{itemize}
\item If $\kappa$ bounded away from zero $(\liminf |\kappa| > 0)$, 
relation \eqref{ch2:eq:Vas>00} is equivalent to
\begin{equation} \label{ch2:eq:Vas>0}
	\VBS(\kappa,c) \sim
	\begin{cases}
	\rule{0pt}{1.1em}\sqrt{2\, (-\log c + \kappa)} - \sqrt{2\,(- \log c)} & \text{if } \kappa > 0 \,, \\
	\rule{0pt}{1.5em}\sqrt{2\,(-\log p)} - \sqrt{2\, (-\log p + \kappa)} & \text{if } \kappa < 0 \,.
	\end{cases}
\end{equation}

\item If $\kappa$ is bounded away from infinity $(\limsup |\kappa| < \infty)$, 
relations \eqref{ch2:eq:Vas<infty0} and \eqref{ch2:eq:Vas00} are equivalent to
\begin{equation} \label{ch2:eq:Vas<infty}
	\VBS(\kappa,c) \sim 
	\begin{cases}
	\displaystyle\frac{\kappa}{\rule{0pt}{0.95em}D^{-1} (\frac{c}{\kappa})}
	& \text{if } \kappa > 0 \,, \\
	\displaystyle\rule{0pt}{1.7em}
	\sqrt{2\pi} \, c =
	\sqrt{2\pi} \, p & \text{if } \kappa = 0 \,, \\
	\displaystyle\rule{0pt}{1.7em}
	\frac{-\kappa}{\rule{0pt}{0.95em}D^{-1} (\frac{p}{-\kappa})}
	& \text{if } \kappa < 0 \,,
	\end{cases}
\end{equation}
where $D^{-1}(\cdot)$ is the inverse of the function $D(\cdot)$ defined in \eqref{ch2:eq:D},
and satisfies \eqref{ch2:eq:Das}.
\end{itemize}

The proof of Theorem~\ref{ch2:th:main1} is now reduced to
proving relations \eqref{ch2:eq:Vas>0} and \eqref{ch2:eq:Vas<infty}.
We first show that we can assume $\kappa \ge 0$, by a symmetry argument.

\begin{proof}[Deducing the case $\kappa \le 0$ from the case $\kappa \ge 0$]

Recalling \eqref{ch2:eq:CBS-} and \eqref{ch2:eq:VBS},
for all $\kappa\in\R$ and $c \in [(1-e^\kappa)^+,1)$ we have
\begin{equation*}
	\VBS(\kappa,c) = \VBS(-\kappa, 1-e^{-\kappa} + e^{-\kappa}c)
	= \VBS(-\kappa,  e^{-\kappa} p) \,,
\end{equation*}
where $p$ is defined in \eqref{ch2:eq:p}.
As a consequence, in the case $\kappa \le 0$, replacing $\kappa$ by $-\kappa$
and $c$ by $e^{-\kappa} p$ in the first line of \eqref{ch2:eq:Vas>0}, one
obtains the second line of \eqref{ch2:eq:Vas>0}.

Performing the same replacements in the first line of 
\eqref{ch2:eq:Vas<infty} yields
\begin{equation*}
	\VBS(\kappa,c) \sim 
	\frac{-\kappa}{D^{-1} (e^{-\kappa} \frac{p}{-\kappa})} \,,
\end{equation*}
which is slightly different with respect to the third line of \eqref{ch2:eq:Vas<infty}.
However, the discrepancy is only apparent, because we claim that 
$D^{-1} (e^{-\kappa} \frac{p}{-\kappa}) \sim D^{-1} (\frac{p}{-\kappa})$.
This is checked as follows: if $\kappa \to 0$, then
$e^{-\kappa} \frac{p}{-\kappa} \sim \frac{p}{-\kappa}$;
if $\kappa \to \bar\kappa\in (-\infty,0)$,
since $p \to 0$ by assumption, the first relation
in \eqref{ch2:eq:Das} yields
$D^{-1} (e^{-\kappa} \frac{p}{-\kappa}) \sim 
\sqrt{2(-\log (\frac{p}{-\bar\kappa}) + \bar\kappa)} \sim
\sqrt{2(-\log (\frac{p}{-\bar\kappa}))} \sim 
D^{-1} (\frac{p}{-\kappa})$, as required.
(See the lines following \eqref{ch2:eq:claim2} below for more details.)
\end{proof}

\begin{proof}[Proof of \eqref{ch2:eq:Vas>0} for $\kappa \ge 0$]
We fix a family of values of $(\kappa, c)$ with 
$c \to 0$ and $\kappa$ bounded away from zero,
say $\kappa \ge \delta$ for some
fixed $\delta > 0$. Our goal is to prove that relation \eqref{ch2:eq:Vas>0} holds.
If we set $v := \VBS(\kappa,c)$, by definition \eqref{ch2:eq:VBS} we have 
$\CBS(\kappa,v) = c \to 0$.

Let us first show that $d_1  := -\frac{\kappa}{v} + \frac{v}{2} \to -\infty$.
By Proposition~\ref{ch2:th:BS}, $\CBS(\kappa,v) \to 0$ implies
$\min\{d_1, \log v\} \to -\infty$, which means that every subsequence
of values of $(\kappa, c)$ admits a further
sub-subsequence along which either $d_1 \to \infty$ or $v \to 0$.
The key point is that $v \to 0$ implies $d_1 \to -\infty$,
because $d_1 \le -\frac{\delta}{v} + \frac{v}{2}$
(recall that $\kappa \ge \delta$). Thus $d_1 \to -\infty$
along every sub-subsequence, which means that $d_1 \to -\infty$ along
the whole family of values of $(\kappa,c)$.

Since $d_1 \to -\infty$, we can apply relation \eqref{ch2:eq:asd1}.
Taking $\log$ of both sides of that relation, recalling the definition
\eqref{ch2:eq:phiPhi} of $\phi$ and the fact that $\CBS(\kappa,v) = c$, we can write
\begin{equation}\label{ch2:eq:logc}
	\log c \sim -\frac{1}{2} d_1^2 - \log\sqrt{2\pi} + \log\frac{v}{-d_1(-d_1+v)} \,.
\end{equation}
We now show that the last term in the right hand side is $o(d_1^2)$ and can therefore
be neglected. Note that $-d_1 \ge 1$ eventually, because $d_1 \to -\infty$, hence
\begin{equation*}
	\log\frac{v}{-d_1(-d_1+v)} \le  \log\frac{v}{1+v} \le 0 \,.
\end{equation*}
Since $v \mapsto \frac{-d_1+v}{v}$ is decreasing
for $-d_1 > 0$, in case $v \ge -d_1$ one has
\begin{equation*}
	\bigg| \log\frac{v}{-d_1(-d_1+v)}\bigg|
	=  \log\frac{-d_1(-d_1+v)}{v}  \le
	\log (-2d_1) = o(d_1^2) \,.
\end{equation*}
On the other hand, recalling that $d_1 \le -\frac{\delta}{v} + \frac{v}{2}$,
in case $v < -d_1$ one has
$d_1 \le -\frac{\delta}{v} -\frac{d_1}{2}$, which can be rewritten as
$v \ge \frac{2\delta}{-3d_1}$ and together with $v < -d_1$ yields
\begin{equation*}
	\bigg| \log\frac{v}{-d_1(-d_1+v)}\bigg|
	= \log\frac{-d_1(-d_1+v)}{v}  \le \log\frac{-d_1(-d_1-d_1)}{\frac{2\delta}{-3d_1}}  
	= \log\bigg(\frac{3(-d_1)^3}{2\delta}\bigg) = o(d_1^2) \,.
\end{equation*}

In conclusion, \eqref{ch2:eq:logc} yields $\log c \sim -\frac{1}{2} d_1^2$, that is
there exists $\gamma = \gamma(\kappa,c) \to 0$ such that
$(1+\gamma) \log c = -\frac{1}{2} d_1^2$, and since $\log c \le 0$ we can write
\begin{equation*}
	 (1+\gamma) |\log c| = \frac{1}{2} d_1^2 =
	\frac{1}{2} \bigg( \frac{\kappa^2}{v^2} + \frac{v^2}{4}
	- \kappa \bigg)  \,.
\end{equation*}
This is a second degree equation in $v^2$, whose solutions (both positive) are
\begin{equation} \label{ch2:eq:vpm}
	v^2 = 2\kappa \Bigg[ 1 + 2\frac{ (1+\gamma)|\log c|}{\kappa}
	\pm 2\sqrt{\bigg(\frac{ (1+\gamma)|\log c|}{\kappa}\bigg)^2
	+ \frac{ (1+\gamma)|\log c|}{\kappa}} \,\Bigg] \,.
\end{equation}
Since $d_1 \to -\infty$, eventually one has $d_1 < 0$:
since $d_1  = -\frac{\kappa}{v} + \frac{v}{2} = -\frac{1}{2v}(\sqrt{2\kappa}-v)(\sqrt{2\kappa}+v)$,
it follows that $v^2 < 2\kappa$, which selects the ``$-$'' solution in
\eqref{ch2:eq:vpm}. Taking square roots of both sides of \eqref{ch2:eq:vpm} and recalling that
$v = \VBS(\kappa,c)$ yields the equality
\begin{equation} \label{ch2:eq:sqrtt}
	\VBS(\kappa,c) = \sqrt{2(1+\gamma)|\log c| + 2\kappa}
	- \sqrt{2(1+\gamma)|\log c|} \,,
\end{equation}
as one checks squaring both sides of \eqref{ch2:eq:sqrtt}. 

Finally, since $\gamma \to 0$, it is quite intuitive that relation \eqref{ch2:eq:sqrtt} yields
\eqref{ch2:eq:Vas>0}. To prove this fact, we observe that by \eqref{ch2:eq:sqrtt} we can write
\begin{equation} \label{ch2:eq:eheh}
	\frac{\VBS(\kappa,c)}
	{\sqrt{2|\log c| + 2\kappa} - \sqrt{2|\log c|}} =
	f_\gamma\bigg(\frac{\kappa}{|\log c|}\bigg) \,,
\end{equation}
where for fixed $\gamma > -1$ we define
the function $f_\gamma: [0,\infty) \to (0,\infty)$ by
\begin{equation*}
	f_\gamma(x) := \frac{\sqrt{1 + \gamma + x} - \sqrt{1+\gamma}}
	{\sqrt{1 + x} - 1} \quad \text{for } x > 0 \,, \qquad
	f_\gamma(0) := \lim_{x\downarrow 0} f_\gamma(x) = \frac{1}{\sqrt{1+\gamma}} \,.
\end{equation*}
By direct computation, when $\gamma > 0$
(resp.\ $\gamma < 0$) one has $\frac{\dd}{\dd x} f_\gamma(x) > 0$ (resp.\ $< 0$) for all $x > 0$. 
Since $\lim_{x\to+\infty} f_\gamma(x) = 1$, it follows that for every $x \ge 0$ one has
$f_\gamma(0) \le f_\gamma(x) \le 1$ if $\gamma > 0$, while
$1 \le f_\gamma(x) \le f_\gamma(0)$ if $\gamma < 0$; consequently, for any $\gamma$,
\begin{equation*}
	\frac{1}{\sqrt{1+|\gamma|}} \le f_\gamma(x) \le \frac{1}{\sqrt{1-|\gamma|}} \,,
	\qquad \forall x \ge 0 \,,
\end{equation*}
which yields $\lim_{\gamma\to 0}f_\gamma(x) = 1$
\emph{uniformly over $x \ge 0$}.
By \eqref{ch2:eq:eheh}, relation \eqref{ch2:eq:Vas>0} is proved.
\end{proof}

\begin{proof}[Proof of \eqref{ch2:eq:Vas<infty} for $\kappa \ge 0$]
We now fix a family of values of $(\kappa, c)$ with 
$c \to 0$ and $\kappa$ bounded away from infinity, say $0 \le \kappa \le M$ for some
fixed $M \in (0,\infty)$, and we prove relation \eqref{ch2:eq:Vas<infty}.

We set $v := \VBS(\kappa,c)$ so that $\CBS(\kappa,v) = c \to 0$,
cf.\ \eqref{ch2:eq:VBS}. (Note that $v > 0$, because $c > 0$ by assumption.)
Applying Proposition~\ref{ch2:th:BS} we have
$\min\{d_1, \log v\} \to -\infty$, i.e.\ either $d_1 \to -\infty$ or $v \to 0$
along sub-subsequences. However, this time $d_1 \to -\infty$ implies $v \to 0$,
because $d_1 \ge -\frac{M}{v} + \frac{v}{2}$ (recall that $\kappa \le M$),
which means that $v \to 0$ along
the whole given family of values of $(\kappa,c)$.

Since $v \to 0$, relation \eqref{ch2:eq:asv} yields
\begin{equation}\label{ch2:eq:step}
	c \sim - U'(-d_1) \,\phi(d_1) \, v \,.
\end{equation}
Let us focus on $U'(-d_1)$: recalling that
$d_1 = -\frac{\kappa}{v} + \frac{v}{2}$ and $v\to 0$, we first show that
\begin{equation}\label{ch2:eq:firstshow}
	U'(-d_1) \sim U'\bigg(\frac{\kappa}{v}\bigg) \,.
\end{equation}
By a subsequence argument,
we may assume that $\frac{\kappa}{v} \to \rho \in [0,\infty]$,
and we recall that $v\to 0$:
\begin{itemize}
\item  if $\rho < \infty$, $U'(-d_1)$ and $U'(\frac{\kappa}{v})$
converge to $U'(\rho) \ne 0$, hence
$U'(-d_1) / U'(\frac{\kappa}{v}) \to 1$;

\item if $\rho = \infty$, $-d_1$ and
$\frac{\kappa}{v}$ diverge to $\infty$ and \eqref{ch2:eq:Millsas} yields
$U'(-d_1) / U'(\frac{\kappa}{v}) \sim  (\frac{\kappa}{v}) / (-d_1) \to 1$.
\end{itemize}
The proof of \eqref{ch2:eq:firstshow} is completed.
Next we observe that, again by $v\to 0$,
\begin{equation*}
	\phi(-d_1) = \frac{1}{\sqrt{2\pi}} e^{-\frac{1}{2} d_1^2}
	= \frac{1}{\sqrt{2\pi}} e^{-\frac{1}{2} (\frac{\kappa^2}{v^2} + \frac{v^2}{2}
	- \kappa)} \sim 
	e^{\frac{1}{2}\kappa} \,
	\frac{1}{\sqrt{2\pi}} e^{-\frac{1}{2} \frac{\kappa^2}{v^2}}
	= e^{\frac{1}{2}\kappa} \phi\bigg(\frac{\kappa}{v}\bigg) \,.
\end{equation*}
We can thus rewrite \eqref{ch2:eq:step} as
\begin{equation}\label{ch2:eq:quaas}
	c \sim -	U'\bigg(\frac{\kappa}{v}\bigg) \, 
	\phi\bigg(\frac{\kappa}{v}\bigg) \, e^{\frac{1}{2}\kappa} \, v \,.
\end{equation}
If $\kappa = 0$, recalling \eqref{ch2:eq:U12} we obtain
$c \sim \phi(0) v = \frac{1}{\sqrt{2\pi}} v$, which is the second line of \eqref{ch2:eq:Vas<infty}.

Next we assume $\kappa > 0$.
By \eqref{ch2:eq:U12}, \eqref{ch2:eq:Mills} and \eqref{ch2:eq:D}, for all $z > 0$ we can write
\begin{equation*}
	- U'(z) \,\phi(z)
	= -\phi(z) \big(zU(z) - 1\big)
	= \phi(z) - z \Phi(-z)
	= z D(z) \,,
\end{equation*}
hence \eqref{ch2:eq:quaas} can be rewritten as
\begin{equation*}
	c \sim \kappa \, e^{\frac{1}{2}\kappa} \, D\bigg(\frac{\kappa}{v}\bigg) \,,
	\qquad \text{i.e.} \qquad
	(1+\gamma) c = \kappa \, e^{\frac{1}{2}\kappa} \, D\bigg(\frac{\kappa}{v}\bigg) \,,
\end{equation*}
for some $\gamma = \gamma(\kappa,c) \to 0$. Recalling that $v = \VBS(\kappa,c)$,
we have shown that
\begin{equation} \label{ch2:eq:stepql}
	\VBS(\kappa,c) = \frac{\kappa}
	{D^{-1}\Big(\frac{(1+\gamma)c}{\kappa e^{\frac{1}{2}\kappa}}\Big)} \,.
\end{equation}
We now claim that
\begin{equation}\label{ch2:eq:claim2}
	D^{-1}\bigg(\frac{(1+\gamma)c}{\kappa e^{\frac{1}{2}\kappa}}\bigg)
	\sim D^{-1}\bigg(\frac{c}{\kappa}\bigg) \,.
\end{equation}
By a subsequence argument, we may assume that $\frac{c}{\kappa} \to \eta
\in [0,\infty]$ and $\kappa \to \bar\kappa \in [0,M]$.
\begin{itemize}
\item If $\eta \in (0,\infty)$, then $\bar\kappa = 0$
(recall that $c \to 0$) hence $(1+\gamma)c/(\kappa e^{\frac{1}{2}\kappa}) \to \eta$;
then both sides of \eqref{ch2:eq:claim2} converge to
$D^{-1}(\eta) \in (0,\infty)$, hence their ratio converges to $1$.

\item If $\eta = \infty$, then again $\bar\kappa = 0$, hence
$(1+\gamma)c/(\kappa e^{\frac{1}{2}\kappa}) \to \infty$: since
$D^{-1}(y) \sim \frac{1}{\sqrt{2\pi}} y^{-1}$ as $y \to \infty$, cf.\
\eqref{ch2:eq:Das}, it follows immediately that \eqref{ch2:eq:claim2} holds.

\item If $\eta = 0$, then $(1+\gamma)c/(\kappa e^{\frac{1}{2}\kappa}) \to 0$: since
$D^{-1}(y) \sim \sqrt{2|\log y|}$ as $y \to 0$, cf.\ \eqref{ch2:eq:Das},
\begin{equation*}
	D^{-1}\bigg(\frac{(1+\gamma)c}{\kappa e^{\frac{1}{2}\kappa}}\bigg)
	\sim \sqrt{2 \bigg| \bigg(\log \frac{c}{\kappa}\bigg)  +
	\bigg(\log \frac{1+\gamma}{e^{\frac{1}{2}\kappa}}\bigg) \bigg|}
	\sim \sqrt{2 \bigg| \log \frac{c}{\kappa} \bigg|} \,,
\end{equation*}
because $|\log \frac{c}{\kappa}| \to \infty$ while
$|\log [(1+\gamma)/e^{\frac{1}{2}\kappa}]| \to \frac{1}{2} \bar\kappa \in
[0,\frac{M}{2}]$, hence \eqref{ch2:eq:claim2} holds.
\end{itemize}
Having proved \eqref{ch2:eq:claim2}, we can plug it into \eqref{ch2:eq:stepql},
obtaining precisely the first line of \eqref{ch2:eq:Vas<infty}. This
completes the proof of Theorem~\ref{ch2:th:main1}.
\end{proof}

\section{From tail probability to option price}
\label{ch2:sec:probtoprice}

In this section we prove Theorems~\ref{ch2:th:main2b}, \ref{ch2:th:main2bl}
and~\ref{ch2:th:main2a}. We stress that it is enough to prove
the asymptotic relations for the option
prices $c(\kappa,t)$ and $p(-\kappa,t)$, because
the corresponding relations for the implied volatility
$\sigma_\imp(\pm\kappa,t)$ follow immediately applying Theorem~\ref{ch2:th:main1}.

\subsection{Proof of Theorem~\ref{ch2:th:main2b} and~\ref{ch2:th:main2bl}}
\label{ch2:sec:main2bproof}

We prove Theorem~\ref{ch2:th:main2b} and~\ref{ch2:th:main2bl} at the same time.
We recall that the tail probabilities $\F_t(\kappa)$, $F_t(-\kappa)$ are defined
in \eqref{ch2:eq:tail}. Throughout the proof,
we fix a family of values of $(\kappa,t)$ 
with $\kappa > 0$ and $0 < t < T$, for some fixed $T \in (0,\infty)$,
such that Hypothesis~\ref{ch2:ass:rv} is satisfied.

Extracting subsequences, we may distinguish three regimes for $\kappa$:
\begin{itemize}
\item if $\kappa \to \infty$ our goal is to prove \eqref{ch2:eq:ma1c},
resp.\ \eqref{ch2:eq:ma1p};

\item if $\kappa \to \bar\kappa \in (0,\infty)$
our goal is to prove \eqref{ch2:eq:mac}, resp.\ \eqref{ch2:eq:map},
because in this case, plainly, one has
$-\log \F_t(\kappa)/\kappa \to \infty$, resp.\ $-\log F_t(-\kappa)/\kappa \to \infty$,
by \eqref{ch2:eq:Fto0};

\item if $\kappa \to 0$, our goal is to prove \eqref{ch2:eq:ma2c},
resp.\ \eqref{ch2:eq:ma2p}.
\end{itemize}
Of course, each regime has different assumptions,
as in Theorem~\ref{ch2:th:main2b} and~\ref{ch2:th:main2bl}.
 
\medskip
\noindent
\emph{Step 0. Preparation.}
It follows by conditions \eqref{ch2:eq:rv} and \eqref{ch2:eq:cont1} that
\begin{equation}\label{ch2:eq:epsi}
	\forall \epsilon > 0 \quad \exists \rho_\epsilon \in (1,\infty): \qquad
	I_\pm(\rho_\epsilon) < 1+\epsilon \,,
\end{equation}
therefore for every $\epsilon > 0$ one has eventually
\begin{equation}\label{ch2:eq:event}
\begin{split}
	\log \F_t(\rho_\epsilon\kappa) & \ge (1+\epsilon) \log\F_t(\kappa) \,, \quad \ 
	\text{resp.} \\
	\log F_t(-\rho_\epsilon\kappa) & \ge (1+\epsilon) \log F_t(-\kappa) \,,
\end{split}
\end{equation}
where the inequality is ``$\ge$''
instead of ``$\le$'', because both sides are negative quantities.

We stress that $\F_t(\kappa) \to 0$, resp.\ $F_t(-\kappa) \to 0$,
by \eqref{ch2:eq:Fto0}, hence
\begin{equation}\label{ch2:eq:Ftoinfty}
	\log \F_t(\kappa) \to -\infty \,, \quad \ \text{resp.} \quad \
	\log F_t(-\kappa) \to -\infty \,.
\end{equation}
Moreover, we claim that in any of the regimes 
$\kappa \to \infty$, $\kappa \to \bar\kappa \in (0,\infty)$
and $\kappa \to 0$ one has
\begin{equation}\label{ch2:eq:toinf}
	\log \F_t(\kappa) + \kappa \to -\infty \,.
\end{equation}
This follows readily by \eqref{ch2:eq:Ftoinfty} if 
$\kappa \to 0$ or $\kappa \to \bar\kappa \in (0, \infty)$.
If $\kappa \to \infty$ we argue as follows: by Markov's inequality,
for $\eta > 0$
\begin{equation} \label{ch2:eq:Markov}
	\F_t(\kappa) \le \E[e^{(1+\eta)X_t}] e^{-(1+\eta)\kappa} \,, 
\end{equation}
hence
\begin{equation*}
	\log \F_t(\kappa) + \kappa \le - \eta \kappa + \log \E[e^{(1+\eta)X_t}] \,.
\end{equation*}
Since in the regime $\kappa \to \infty$ we assume that the moment
condition \eqref{ch2:eq:moment}
holds for some or every $\eta > 0$, the term $\log \E[e^{(1+\eta)X_t}]$ is bounded from above,
hence eventually
\begin{equation}\label{ch2:eq:bounduseful}
	\log \F_t(\kappa) + \kappa \le -\frac{\eta}{2}\, \kappa  \,,
\end{equation}
which proves relation \eqref{ch2:eq:toinf}.

\smallskip

The rest of the proof is divided in four steps,
in each of which we prove lower and upper bounds
on $c(\kappa,t)$ and $p(-\kappa,t)$, respectively.

\medskip
\noindent
\emph{Step 1. Lower bounds on $c(\kappa,t)$.}
We are going to prove sharp lower bounds on $c(\kappa,t)$,
that will lead to relations \eqref{ch2:eq:ma1c},
\eqref{ch2:eq:mac} and \eqref{ch2:eq:ma2c}.

By \eqref{ch2:eq:cp} and \eqref{ch2:eq:epsi}, 
for every $\epsilon > 0$ we can write
\begin{equation} \label{ch2:eq:stac}
\begin{split}
	c(\kappa,t) & \ge \E[(e^{X_t} - e^\kappa) \ind_{\{X_t > \rho_\epsilon\kappa\}}]
	\ge (e^{\rho_\epsilon\kappa}-e^\kappa) \F_t(\rho_\epsilon\kappa) \,,
\end{split}
\end{equation}
and applying \eqref{ch2:eq:event} we get
\begin{equation} \label{ch2:eq:stacc}
	\log c(\kappa,t) \ge \log \big( e^{\rho_\epsilon\kappa}-e^\kappa \big)
	+ (1+\epsilon) \log \F_t(\kappa) \,.
\end{equation}

If $\kappa \to \infty$, since $\log (e^{\rho_\epsilon\kappa} - e^\kappa) =
\kappa + \log (e^{(\rho_\epsilon-1)\kappa} - 1) \ge \kappa $ eventually, 
we obtain
\begin{equation} \label{ch2:eq:stacc0}
\begin{split}
	\log c(\kappa,t) & \ge \kappa
	+ (1+\epsilon) \log \F_t(\kappa) =
	(1+\epsilon) \big(\log \F_t(\kappa) + \kappa \big) - \epsilon \kappa \\
	& \ge (1+\epsilon +\tfrac{2}{\eta} \epsilon)
	\big( \log \F_t(\kappa) + \kappa \big) \,,
\end{split}
\end{equation}
where in the last inequality we have applied \eqref{ch2:eq:bounduseful}.
It follows that
\begin{equation} \label{ch2:eq:ubc100}
	\limsup \frac{\log c(\kappa,t)}{\log \F_t(\kappa) + \kappa} \le 
	1+\epsilon +\tfrac{2}{\eta}\epsilon \,,
\end{equation}
where the $\limsup$ is taken along the given family of values of $(\kappa,t)$
(note that $\log c(\kappa,t)$ and $\log \F_t(\kappa) + \kappa$ are negative quantities,
cf.\ \eqref{ch2:eq:toinf}, hence the reverse inequality with respect to \eqref{ch2:eq:stacc0}).
Since $\epsilon > 0$ is arbitrary and $\eta > 0$
is fixed, we have shown that
\begin{equation} \label{ch2:eq:ubc1}
	\limsup \frac{\log c(\kappa,t)}{\log \F_t(\kappa) + \kappa} \le 1 \,,
\end{equation}
that is we have obtained a sharp bound for \eqref{ch2:eq:ma1c}.

If $\kappa \to \bar\kappa \in (0,\infty)$, 
since $\log ( e^{\rho_\epsilon\kappa}-e^\kappa) \to \log (e^{\rho_\epsilon\bar\kappa}-
e^{\bar\kappa})$ is bounded while $\log \F_t(\kappa) \to -\infty$,
relation \eqref{ch2:eq:stacc} gives
\begin{equation*}
	\limsup \frac{\log c(\kappa,t)}{\log\F_t(\kappa)}
	\le 1+\epsilon \,.
\end{equation*}
Since $\epsilon > 0$ is arbitrary, we have shown that when $\kappa \to \bar\kappa \in (0,\infty)$
\begin{equation} \label{ch2:eq:ubc}
	\limsup \frac{\log c(\kappa,t)}{\log\F_t(\kappa)}
	\le 1 \,,
\end{equation}
obtaining a sharp bound for \eqref{ch2:eq:mac}.

Finally, if $\kappa \to 0$, since for $\kappa \ge 0$ by convexity
$\log (e^{\rho_\epsilon\kappa} - e^\kappa) =
\kappa + \log (e^{(\rho_\epsilon-1)\kappa} - 1) \ge \kappa + 
\log ((\rho_\epsilon - 1) \kappa)
= \kappa + \log (\rho_\epsilon - 1) + \log \kappa$, relation \eqref{ch2:eq:stacc} yields
\begin{equation*}
	\log \frac{c(\kappa,t)}{\kappa} = \log c(\kappa,t) - \log \kappa
	\ge \log(\rho_\epsilon - 1)
	+(1+\epsilon) \log \F_t(\kappa) \,.
\end{equation*}
Again, since $\log(\rho_\epsilon - 1)$ is constant and
$\log \F_t(\kappa) \to -\infty$, and $\epsilon > 0$ is arbitrary, we get
\begin{equation} \label{ch2:eq:ubc2}
	\limsup \frac{\log \big( c(\kappa,t)/\kappa \big)}{\log \F_t(\kappa)} \le 1 \,,
\end{equation}
proving a sharp bound for \eqref{ch2:eq:ma2c}.

\medskip
\noindent
\emph{Step 2. Lower bounds on $p(-\kappa,t)$.}
We are going to prove sharp lower bounds on $p(-\kappa,t)$,
that will lead to relations \eqref{ch2:eq:ma1p},
\eqref{ch2:eq:map} and \eqref{ch2:eq:ma2p}.

Recalling \eqref{ch2:eq:cp} and \eqref{ch2:eq:epsi}, 
for every $\epsilon > 0$ we can write
\begin{equation} \label{ch2:eq:stap}
\begin{split}
	p(-\kappa,t) & \ge \E[(e^{-\kappa} - e^{X_t}) \ind_{\{X_t \le -\rho_\epsilon\kappa\}}]
	\ge (e^{-\kappa}-e^{-\rho_\epsilon\kappa}) F_t(-\rho_\epsilon\kappa)  \,,
\end{split}
\end{equation}
and applying \eqref{ch2:eq:event} we obtain
\begin{equation}\label{ch2:eq:stapp}
	\log p(-\kappa,t) \ge
	\log \big( e^{-\kappa}-e^{-\rho_\epsilon\kappa} \big)
	+ (1+\epsilon)\log F_t(-\kappa) \,.
\end{equation}

If $\kappa \to \infty$, since
$\log ( e^{-\kappa}-e^{-\rho_\epsilon\kappa} ) =
-\kappa + \log(1-e^{-(\rho_\epsilon - 1)\kappa}) \sim -\kappa$, eventually one has
$\log ( e^{-\kappa}-e^{-\rho_\epsilon\kappa} ) \ge -(1+\epsilon)\kappa$ and we obtain
\begin{equation*}
\begin{split}
	\log p(-\kappa,t) \ge (1+\epsilon)
	\big( \log F_t(-\kappa) - \kappa \big) \,.
\end{split}
\end{equation*}
Since $\epsilon > 0$ is arbitrary, it follows that
\begin{equation} \label{ch2:eq:ubp1}
	\limsup \frac{\log p(-\kappa,t)}{\log F_t(-\kappa) - \kappa}
	\le 1 \,,
\end{equation}
which is a sharp bound for \eqref{ch2:eq:ma1p}.

If $\kappa \to \bar\kappa \in (0,\infty)$, since
$\log ( e^{-\kappa}-e^{-\rho_\epsilon\kappa} ) \to
\log ( e^{-\bar\kappa}-e^{-\rho_\epsilon\bar\kappa} )$ is bounded
while $\log F_t(-\kappa) \to -\infty$, and $\epsilon > 0$ is arbitrary,
relation \eqref{ch2:eq:stapp} gives
\begin{equation} \label{ch2:eq:ubp}
	\limsup \frac{\log p(-\kappa,t)}{\log F_t(-\kappa)}
	\le 1 \,,
\end{equation}
which is a sharp bound for \eqref{ch2:eq:map}.

Finally, if $\kappa \to 0$,
since $e^{-\kappa} - e^{-\rho_\epsilon\kappa} 
= e^{-\rho_\epsilon\kappa} (e^{(\rho_\epsilon - 1)\kappa} -1)
\ge e^{-\rho_\epsilon\kappa}(\rho_\epsilon - 1)\kappa$ by convexity,
since $\kappa \ge 0$, one has eventually
\begin{equation*}
	\log\big( e^{-\kappa} - e^{-\rho_\epsilon\kappa} \big)
	\ge \log \kappa + \log\big(e^{-\rho_\epsilon \kappa}(\rho_\epsilon - 1)\big)
	\ge \log \kappa + \epsilon \log F_t(-\kappa) \,,
\end{equation*}
because $\log\big(e^{-\rho_\epsilon \kappa}(\rho_\epsilon - 1)\big) \to
\log (\rho_\epsilon - 1) > -\infty$ while $\log F_t(-\kappa) \to -\infty$.
Relation \eqref{ch2:eq:stapp} then yields, eventually,
\begin{equation*}
\begin{split}
	\log \frac{p(-\kappa,t)}{\kappa} = \log p(-\kappa,t) - \log \kappa
	& \ge (1+2\epsilon) \log F_t(-\kappa) \,.
\end{split}
\end{equation*}
Since $\epsilon > 0$ is arbitrary, we have shown that
\begin{equation} \label{ch2:eq:ubp2}
	\limsup \frac{\log \big(p(-\kappa,t) / \kappa\big)}{\log F_t(-\kappa)}
	\le 1 \,,
\end{equation}
obtaining a sharp bound for \eqref{ch2:eq:ma2p}.

\medskip
\noindent
\emph{Step 3. Upper bounds on $c(\kappa,t)$.}
We are going to prove sharp upper bounds on $c(\kappa,t)$,
that will complete the proof of relations \eqref{ch2:eq:ma1c},
\eqref{ch2:eq:mac} and \eqref{ch2:eq:ma2c}.
We first consider the case when \emph{the moment assumptions
\eqref{ch2:eq:moment} and \eqref{ch2:eq:moment0p}
hold for every $\eta > 0$}.

\smallskip

Let us look at the regimes
$\kappa \to \infty$ and $\kappa \to \bar\kappa \in (0,\infty)$
(i.e.\, $\kappa$ is bounded away from zero),
assuming that condition \eqref{ch2:eq:moment} holds \emph{for every $\eta > 0$}.
By H\"older's inequality,
\begin{equation} \label{ch2:eq:deco20}
	c(\kappa,t) 
	= \E[(e^{X_t}-e^\kappa) \ind_{\{X_t >\kappa\}}]
	\le \E[e^{X_t} \ind_{\{X_t >\kappa\}}] 
	\le	\E[e^{(1+\eta)X_t}]^{\frac{1}{1+\eta}} \, \F_t(\kappa)^{\frac{\eta}{1+\eta}} \,.
\end{equation}
Let us fix $\epsilon > 0$ and choose $\eta = \eta_\epsilon$
large enough, so that $\frac{\eta}{1+\eta} > 1-\epsilon$. By assumption \eqref{ch2:eq:moment},
for some $C \in (0,\infty)$ one has
\begin{equation*}
	\E[e^{(1+\eta)X_t}]^{\frac{1}{1+\eta}} \le C \,,
\end{equation*}
hence eventually, recalling that $\log \F_t(\kappa) \to -\infty$, by \eqref{ch2:eq:Ftoinfty},
\begin{equation}\label{ch2:eq:ah2b}
\begin{split}
	\log c(\kappa,t) & \le
	\log C + (1-\epsilon) \log \F_t(\kappa)
	\le (1-2\epsilon) \log \F_t(\kappa) \,.
\end{split}
\end{equation}
Since $\epsilon > 0$ is arbitrary, this shows that
\begin{equation} \label{ch2:eq:lbc10}
	\liminf \frac{\log c(\kappa,t)}{\log \F_t(\kappa)}
	\ge 1 \,.
\end{equation}
which together with \eqref{ch2:eq:ubc} completes the proof of \eqref{ch2:eq:mac},
if $\kappa \to \bar\kappa \in (0,\infty)$.
If $\kappa \to \infty$ and condition \eqref{ch2:eq:moment} holds for every $\eta > 0$,
then $\log \F_t(\kappa)/\kappa \to -\infty$ by \eqref{ch2:eq:Markov}, which yields
$\log \F_t(\kappa) \sim \log \F_t(\kappa) + \kappa$,
hence \eqref{ch2:eq:lbc10} together with \eqref{ch2:eq:ubc1} completes the proof of
\eqref{ch2:eq:ma1c}.

We then consider the regime $\kappa \to 0$,
assuming that condition \eqref{ch2:eq:moment0p} holds \emph{for every $\eta > 0$}.
We modify \eqref{ch2:eq:deco20} as follows:
since $(e^{X_t}-e^\kappa) \le (e^{X_t}- 1) \le |e^{X_t}-1|$,
\begin{equation} \label{ch2:eq:ah20}
	c(\kappa,t) \le \E[|e^{X_t}-1| \ind_{\{X_t >\kappa \}}] \le \kappa \,
	\E\bigg[ \bigg|\frac{e^{X_t}-1}{\kappa}\bigg|^{1+\eta}
	\bigg]^{\frac{1}{1+\eta}} \, \F_t(\kappa)^{\frac{\eta}{1+\eta}} \,.
\end{equation}
Let us fix $\epsilon > 0$ and choose $\eta = \eta_\epsilon$
large enough, so that $\frac{\eta}{1+\eta} > 1-\epsilon$. By assumption 
\eqref{ch2:eq:moment0p},
for some $C \in (0,\infty)$ one has
\begin{equation} \label{ch2:eq:ollaa}
	\E\bigg[ \bigg|\frac{e^{X_t}-1}{\kappa}\bigg|^{1+\eta}
	\bigg]^{\frac{1}{1+\eta}} \le C \,,
\end{equation}
hence relation \eqref{ch2:eq:ah20} yields eventually
\begin{equation}\label{ch2:eq:ah20b}
	\log \frac{c(\kappa,t)}{\kappa} \le
	\log C + (1-\epsilon) \log \F_t(\kappa)
	\le (1-2\epsilon) \log \F_t(\kappa) \,.
\end{equation}
Since $\epsilon > 0$ is arbitrary, we have proved that
\begin{equation} \label{ch2:eq:lbc2}
	\liminf \frac{\log \big( c(\kappa,t)/\kappa \big)}{\log \F_t(\kappa)} \ge 1 \,,
\end{equation}
which together with \eqref{ch2:eq:ubc2} completes the proof of \eqref{ch2:eq:ma2c}.

\smallskip

It remains to consider the case when the moment assumptions
\eqref{ch2:eq:moment} and \eqref{ch2:eq:moment0p}
holds \emph{for some $\eta > 0$}, but in addition
conditions \eqref{ch2:eq:Iplus} (if $\kappa \to \infty$ or
$\kappa \to \bar\kappa \in (0,\infty)$)
or \eqref{ch2:eq:I+infty} (if $\kappa \to 0$) holds.
We start with considerations that are valid in any regime of $\kappa$. 

Defining the constant
\begin{equation} \label{ch2:eq:Amax0}
	A := \limsup \bigg\{ \frac{-\kappa}{\log\F_t(\kappa) + \kappa} \bigg\} + 1\,,
\end{equation}
where the $\limsup$ is taken along the given family of values of $(\kappa,t)$,
we claim that $A < \infty$. This follows by \eqref{ch2:eq:toinf}
if $\kappa \to 0$ or if $\kappa \to \bar\kappa \in (0,\infty)$
(in which case, plainly, $A=1$), while if
$\kappa \to +\infty$ it suffices to apply \eqref{ch2:eq:bounduseful}
to get $A \le 2/\eta + 1$. It follows by \eqref{ch2:eq:Amax0} that eventually
\begin{equation} \label{ch2:eq:Amax}
	\kappa \le -A (\log\F_t(\kappa) + \kappa) \,.
\end{equation}

Next we show that, for all fixed $\epsilon > 0$ and $1 < M < \infty$, eventually one has
\begin{equation}\label{ch2:eq:claiM}
	\log \bigg(\sup_{y \in [1,M]} e^{\kappa y} \, \F_t(\kappa y) \bigg)
	\le (1-\epsilon) \big( \log \F_t(\kappa) + \kappa \big) \,,
\end{equation}
which means that the $\sup$ is approximately attained for $y = 1$.
This is easy if $\kappa \to 0$ or if $\kappa \to \bar\kappa \in (0,\infty)$:
in fact, since $\kappa \to \F_t(\kappa)$ is non-increasing, we can write
\begin{equation*}
\begin{split}
	\log \bigg(\sup_{y \in [1,M]} e^{\kappa y} \, \F_t(\kappa y) \bigg)
	& \le \log \big( e^{\kappa M} \F_t(\kappa) \big)
	= \kappa M + \log \F_t(\kappa) \\
	& = \big( \log \F_t(\kappa) + \kappa \big) 
	\,+\, (M-1) \kappa \,,
\end{split}
\end{equation*}
and since $\log \F_t(\kappa) + \kappa \to -\infty$ by \eqref{ch2:eq:toinf},
while $(M-1) \kappa$ is bounded, \eqref{ch2:eq:claiM} follows.

To prove \eqref{ch2:eq:claiM} in the regime $\kappa\to\infty$,
we are going to exploit the assumption \eqref{ch2:eq:Iplus}.
First we fix $\delta > 0$, to be defined later, and set
$\bar n := \lceil \frac{M-1}{\delta} \rceil$ and $a_n := 1 + n \delta$ for $n=0, \ldots, \bar n$,
so that $[1,M] \subseteq \bigcup_{n=1}^{\bar n} [a_{n-1}, a_n]$. For all $y \in [a_{n-1}, a_n]$
one has, by \eqref{ch2:eq:rv},
\begin{equation*}
	\log \F_t(\kappa y) \le \log \F_t(\kappa a_{n-1}) 
	\sim I_+(a_{n-1}) \log \F_t(\kappa) \le a_{n-1} \log \F_t(\kappa) \,,
\end{equation*}
having used that $I_+(\rho) \ge \rho$, by \eqref{ch2:eq:Iplus}, hence eventually
\begin{equation*}
	\log \F_t(\kappa y) \le (1-\delta) a_{n-1} \log \F_t(\kappa) \,, \qquad
	\forall y \in [a_{n-1}, a_n] \,.
\end{equation*}
Recalling that $a_n = a_{n-1} + \delta$, we can write
$a_n 
\le (1-\delta) a_{n-1} + \delta (1+M)$, because $a_{n-1} \le M$ by construction,
and since $e^{\kappa y} \le e^{\kappa a_n}$ for $y \in [a_{n-1}, a_n]$,
it follows that
\begin{equation*}
\begin{split}
	\log \bigg(\sup_{y \in [1,M]} e^{\kappa y} \, \F_t(\kappa y) \bigg)
	& \le \max_{n=1, \ldots, \bar n} \big( a_n \kappa + 
	(1-\delta) a_{n-1} \log \F_t(\kappa)
	\big) \\
	& = \max_{n=1, \ldots, \bar n} \big((1-\delta) a_{n-1} \big( \log \F_t(\kappa)
	+ \kappa \big) + \delta(1+M) \kappa \big) \,.
\end{split}
\end{equation*}
Plainly, the $\max$ is attained for $n=1$, for which $a_{n-1} = a_0 = 1$.
Recalling \eqref{ch2:eq:Amax}, we get
\begin{equation*}
	\log \bigg(\sup_{y \in [1,M]} e^{\kappa y} \, \F_t(\kappa y) \bigg)
	\le (1-\delta(1+A+AM)) \big( \log \F_t(\kappa)
	+ \kappa \big)  \,.
\end{equation*}
Choosing $\delta := \epsilon / (1+A+AM)$, the claim \eqref{ch2:eq:claiM} is proved.

\smallskip

We are ready to give sharp upper bounds on $c(\kappa,t)$,
refining \eqref{ch2:eq:deco20}. For fixed $M\in (0,\infty)$, we write
\begin{equation} \label{ch2:eq:deco2}
	c(\kappa,t) 
	= \E[(e^{X_t}-e^\kappa) \ind_{\{\kappa < X_t \le \kappa M\}}]
	+ \E[(e^{X_t}-e^\kappa) \ind_{\{X_t >\kappa M\}}]  \,,
\end{equation}
and we estimate the first term as follows: by Fubini-Tonelli's theorem
and \eqref{ch2:eq:claiM},
\begin{equation} \label{ch2:eq:cnew0}
\begin{split}
	\E[(e^{X_t} - e^\kappa) \ind_{\{\kappa < X_t \le \kappa M\}}] 
	& = \E\bigg[ \bigg( \int_\kappa^{\infty} e^x \, \ind_{\{x < X_t\}} \, \dd x
	\bigg) \ind_{\{\kappa < X_t \le \kappa M\}} \bigg] \\
	& = \int_\kappa^{\kappa M} e^x \, \P(x < X_t \le \kappa M)\, \dd x  
	\le \int_\kappa^{\kappa M} e^x \, \F_t(x) \, \dd x  \\
	& = \kappa \int_1^M e^{\kappa y} \, \F_t(\kappa y) \, \dd y 
	\le  \kappa \, (M-1) \,
	e^{(1-\epsilon)( \log \F_t(\kappa) + \kappa )} \,.
\end{split}
\end{equation}
To estimate the second term in \eqref{ch2:eq:deco2},
we start with the cases $\kappa \to \infty$ and $\kappa \to \bar\kappa \in (0,\infty)$,
where we assume that \eqref{ch2:eq:moment} holds for some $\eta > 0$,
as well as \eqref{ch2:eq:I+infty},
hence we can fix $M > 1$ such that $I_+(M) > \frac{1+\eta}{\eta}$.
Bounding $(e^{X_t}-e^\kappa) \le e^{X_t}$,
H\"older's inequality yields
\begin{equation*}
	\E[(e^{X_t}-e^\kappa) \ind_{\{X_t >\kappa M\}}]
	\le \E[e^{(1+\eta)X_t}]^{\frac{1}{1+\eta}}
	\, \F_t(\kappa M)^{\frac{\eta}{1+\eta}}
	= C \, \F_t(\kappa M)^{\frac{\eta}{1+\eta}} \,,
\end{equation*}
where $C \in (0,\infty)$ is an absolute constant, by \eqref{ch2:eq:moment}.
Applying relation \eqref{ch2:eq:rv}
together with $I_+(M) > \frac{1+\eta}{\eta}$ we obtain
\begin{equation} \label{ch2:eq:eventu-1}
	\frac{\eta}{1+\eta} \log \F_t(\kappa M)
	\sim \frac{\eta}{1+\eta} I_+(M) \log \F_t(\kappa)
	\le \log \F_t(\kappa) \,,
\end{equation}
hence eventually
\begin{equation} \label{ch2:eq:eventu}
	\log \E[(e^{X_t}-e^\kappa) \ind_{\{X_t >\kappa M\}}]
	\le (1-\epsilon) \log \F_t(\kappa)
	\le (1-\epsilon) \big(\log \F_t(\kappa) + \kappa \big) \,.
\end{equation}
Recalling \eqref{ch2:eq:bounduseful} and \eqref{ch2:eq:toinf},
eventually $\kappa(M-1) \le 
e^{-\epsilon (\log \F_t(\kappa) + \kappa)}$, hence by \eqref{ch2:eq:cnew0}
\begin{equation} \label{ch2:eq:eventu2}
	\log \E[(e^{X_t} - e^\kappa) \ind_{\{\kappa < X_t \le \kappa M\}}] 
	\le (1-2\epsilon) \big(\log \F_t(\kappa) + \kappa \big) \,.
\end{equation}
Looking back at \eqref{ch2:eq:deco2}, since
\begin{equation}\label{ch2:eq:logab}
	\log (a+b) \le \log 2 + \max\{\log a,\log b\} \,,
	\qquad \forall a,b > 0 \,,
\end{equation}
by \eqref{ch2:eq:eventu}, \eqref{ch2:eq:eventu2}
and again \eqref{ch2:eq:toinf} one has eventually
\begin{equation*}
	\log c(\kappa,t) \le \log 2 + 
	(1-2\epsilon) \big(\log \F_t(\kappa) + \kappa \big)
	\le (1-3\epsilon) \big(\log \F_t(\kappa) + \kappa \big) \,.
\end{equation*}
Since $\epsilon > 0$ is arbitrary, this shows that
\begin{equation} \label{ch2:eq:lbc1}
	\liminf \frac{\log c(\kappa,t)}{\log \F_t(\kappa) + \kappa}
	\ge 1 \,,
\end{equation}
which together with \eqref{ch2:eq:ubc1} completes the proof of
\eqref{ch2:eq:ma1c}, if $\kappa \to \infty$.
Since $\log \F_t(\kappa) + \kappa \sim \log \F_t(\kappa)$
if $\kappa \to \bar\kappa \in (0,\infty)$, by \eqref{ch2:eq:Ftoinfty},
we can rewrite \eqref{ch2:eq:lbc1} in this case as
\begin{equation} \label{ch2:eq:lbc}
	\liminf \frac{\log c(\kappa,t)}{\log \F_t(\kappa)}
	\ge 1 \,,
\end{equation}
which together with \eqref{ch2:eq:ubc} completes the proof of \eqref{ch2:eq:mac}.

It remains to consider the case when $\kappa \to 0$,
where we assume that relation \eqref{ch2:eq:moment0p} holds
for some $\eta \in (0,\infty)$, together with \eqref{ch2:eq:I+infty}.
As before, we fix $M > 1$ such that $I_+(M) > \frac{1+\eta}{\eta}$.
Since
\begin{equation} \label{ch2:eq:compan0}
	\E\bigg[ \bigg(\frac{e^{X_t}-e^\kappa}{\kappa} 
	\bigg)^{1+\eta}  \ind_{\{X_t > \kappa\}} \bigg] \le 
	\E\bigg[ \bigg|\frac{e^{X_t}-1}{\kappa} 
	\bigg|^{1+\eta}\bigg] \le C \,,
\end{equation}
for some absolute constant $C \in (0,\infty)$, by \eqref{ch2:eq:moment0p},
the second term in \eqref{ch2:eq:deco2} is bounded by
\begin{equation} \label{ch2:eq:compan1}
	\E[(e^{X_t}-e^\kappa) \ind_{\{X_t >\kappa M\}}]
	\le \kappa \E\bigg[ \bigg|\frac{e^{X_t}-e^\kappa}{\kappa} 
	\bigg|^{1+\eta} \bigg]^{\frac{1}{1+\eta}}
	\, \F_t(\kappa M)^{\frac{\eta}{1+\eta}}
	\le \kappa \, C \, \F_t(\kappa M)^{\frac{\eta}{1+\eta}} \,.
\end{equation}
In complete analogy with \eqref{ch2:eq:eventu-1}-\eqref{ch2:eq:eventu}, we obtain that eventually
\begin{equation} \label{ch2:eq:compan2}
	\log \frac{\E[(e^{X_t}-e^\kappa) \ind_{\{X_t >\kappa M\}}]}{\kappa} \le (1-\epsilon)
	\log \F_t(\kappa) \,.
\end{equation}
By \eqref{ch2:eq:toinf}, eventually $(M-1) \le 
e^{-\epsilon (\log \F_t(\kappa) + \kappa)}$, hence by \eqref{ch2:eq:cnew0}
\begin{equation}\label{ch2:eq:cnew}
	\log \frac{\E[(e^{X_t} - e^\kappa) \ind_{\{\kappa < X_t \le \kappa M\}}]}{\kappa}
	\le (1-2\epsilon)( \log \F_t(\kappa) + \kappa ) \,.
\end{equation}
Recalling \eqref{ch2:eq:deco2} and \eqref{ch2:eq:logab}, we can finally write
\begin{equation*}
	\log \frac{c(\kappa,t)}{\kappa} \le \log 2 + 
	(1-2\epsilon) \big(\log \F_t(\kappa) + \kappa \big)
	\le (1-3\epsilon) \log \F_t(\kappa) \,,
\end{equation*}
because $\kappa \to 0$ and $\log\F_t(\kappa) \to - \infty$.
Since $\epsilon > 0$ is arbitrary, we have proved that
\begin{equation} \label{ch2:eq:lbc2bis}
	\liminf \frac{\log \big( c(\kappa,t)/\kappa \big)}{\log \F_t(\kappa)} \ge 1 \,,
\end{equation}
which together with \eqref{ch2:eq:ubc2} completes the proof of \eqref{ch2:eq:ma2c}.

\medskip
\noindent
\emph{Step 4. Upper bounds on $p(-\kappa,t)$.}
We are going to prove sharp upper bounds on $p(-\kappa,t)$,
that will complete the proof of relations \eqref{ch2:eq:ma1p},
\eqref{ch2:eq:map} and \eqref{ch2:eq:ma2p}.

By \eqref{ch2:eq:cp} we can write
\begin{equation*}
	p(-\kappa,t) = \E[(e^{-\kappa} - e^{X_t}) \ind_{\{X_t \le -\kappa\}}]
	\le e^{-\kappa} \, F_t(-\kappa) \,,
\end{equation*}
therefore
\begin{equation} \label{ch2:eq:lbp1}
	\frac{\log p(-\kappa,t)}{\log F_t(-\kappa) - \kappa} \ge 1 \,,
\end{equation}
which together with \eqref{ch2:eq:ubp1} 
completes the proof of \eqref{ch2:eq:ma2p},
if $\kappa \to \infty$. On the other hand,
if $\kappa \to \bar\kappa \in (0,\infty)$, since
relation \eqref{ch2:eq:lbp1} implies (recall that $\kappa\ge 0$)
\begin{equation} \label{ch2:eq:lbp}
	\frac{\log p(-\kappa,t)}{\log F_t(-\kappa)} \ge 1 \,,
\end{equation}
in view of \eqref{ch2:eq:ubp},
the proof of \eqref{ch2:eq:map} is completed.

It remains to consider the case $\kappa \to 0$.
If relation \eqref{ch2:eq:moment0p} holds \emph{for every $\eta \in (0,\infty)$},
we argue in complete analogy with 
\eqref{ch2:eq:ah20}-\eqref{ch2:eq:ollaa}-\eqref{ch2:eq:ah20b}, getting
\begin{equation} \label{ch2:eq:allo}
	\liminf \frac{\log \big( p(-\kappa,t) / \kappa \big)}{\log F_t(-\kappa) } \ge 1 \,,
\end{equation}
which together with \eqref{ch2:eq:ubp2} completes the proof of \eqref{ch2:eq:ma2p}.
If, on the other hand, relation \eqref{ch2:eq:moment0p} holds only
\emph{for some $\eta \in (0,\infty)$}, we also assume that condition
\eqref{ch2:eq:I-infty} holds, hence we can fix $M > 1$ such that $I_-(M) > \frac{1+\eta}{\eta}$.
Let us write
\begin{equation} \label{ch2:eq:deco2bis}
	p(-\kappa,t) = \E[(e^{-\kappa} - e^{X_t}) \ind_{\{-\kappa M < X_t \le -\kappa\}}]
	+ \E[(e^{-\kappa} - e^{X_t}) \ind_{\{X_t \le -\kappa M\}}]  \,.
\end{equation}
In analogy with \eqref{ch2:eq:cnew0},
for every fixed $\epsilon > 0$,
the first term in the right hand side can be estimated as follows
(note that $y \mapsto F_t(-\kappa y)$ is decreasing):
\begin{equation*}
\begin{split}
	\E[(e^{-\kappa} - e^{X_t}) \ind_{\{-\kappa M < X_t \le -\kappa\}}]
	& \le \int_{-\kappa M}^{-\kappa} e^x \, F_t(x) \, \dd x
	= \kappa \int_{1}^{M} e^{-\kappa y} \, F_t(-\kappa y) \, \dd y \\
	& \le \kappa (M-1) \, F_t(-\kappa)
	\le \kappa \, e^{(1-\epsilon) \log F_t(-\kappa)} \,.
\end{split}
\end{equation*}
The second term in \eqref{ch2:eq:deco2bis} is estimated in complete analogy
with \eqref{ch2:eq:compan0}-\eqref{ch2:eq:compan1}-\eqref{ch2:eq:compan2}, yielding
\begin{equation*}
	\log \frac{\E[(e^{-\kappa} - e^{X_t}) \ind_{\{X_t \le -\kappa M\}}]}{\kappa}
	\le (1-\epsilon) \log F_t(-\kappa) \,.
\end{equation*}
Recalling \eqref{ch2:eq:logab}, we obtain from 
\eqref{ch2:eq:deco2bis}
\begin{equation*}
	\log \frac{p(-\kappa,t)}{\kappa} \le \log 2 + (1-\epsilon) \log F_t(-\kappa)
	\le (1-2\epsilon) \log F_t(-\kappa) \,,
\end{equation*}
and since $\epsilon > 0$ is arbitrary we have proved that
relation \eqref{ch2:eq:allo} still holds,
which together with \eqref{ch2:eq:ubp2} completes the proof of \eqref{ch2:eq:ma2p},
and of the whole Theorem~\ref{ch2:th:main2b}.\qed

\subsection{Proof of Theorem~\ref{ch2:th:main2a}}
\label{ch2:sec:proofth:main2a}

By Skorokhod's representation theorem, 
we can build a coupling of the random variables $(X_t)_{t\ge 0}$ and $Y$
such that relation \eqref{ch2:eq:scaling} holds a.s.. 
Since the function $z \mapsto z^+$ is continuous,
recalling that $\gamma_t \to 0$, for $\kappa \sim a \gamma_t$ we have a.s.
\begin{equation} \label{ch2:eq:both}
	\frac{(e^{X_t}-e^\kappa)^+}{\gamma_t} =
	\bigg( \frac{e^{Y \gamma_t(1+o(1))}-1}{\gamma_t} - 
	\frac{e^{a \gamma_t(1+o(1))}-1}{\gamma_t} \bigg)^+
	\xrightarrow[t\downarrow 0]{\,a.s.\,} (Y - a)^+ \,,
\end{equation}
and analogously for $\kappa \sim -a \gamma_t$
\begin{equation} \label{ch2:eq:both2}
	\frac{(e^{\kappa} - e^{X_t})^+}{\gamma_t} 
	\xrightarrow[t\downarrow 0]{\,a.s.\,} (-a-Y)^+ = (Y+a)^- \,.
\end{equation}
Taking the expectation of both sides of these relations, one would obtain
precisely \eqref{ch2:eq:astypical}.
To justify the interchanging of limit and expectation, we observe that
the left hand sides of \eqref{ch2:eq:both} and \eqref{ch2:eq:both2} are uniformly integrable,
being bounded in $L^{1+\eta}$. In fact
\begin{equation*}
	\frac{|e^{X_t} - e^\kappa|}{\gamma_t} \le
	\frac{|e^{X_t} - 1|}{\gamma_t} + \frac{|e^\kappa - 1|}{\gamma_t} \,,
\end{equation*}
and the second term in the right hand side is uniformly bounded
(recall that $\kappa \sim a \gamma_t$ by assumption), while
the first term is bounded in $L^{1+\eta}$, by \eqref{ch2:eq:assX}.
\qed

\appendix

\section{Miscellanea}
\label{ch2:sec:app}

\subsection{About conditions \eqref{ch2:eq:assfamily} and \eqref{ch2:eq:cpvanish}}
\label{ch2:sec:cpexplained}

Recall from \S\ref{ch2:sec:setting}
that $(X_t)_{t\ge 0}$ denotes the risk-neutral log-price,
and  assume that $X_t \to X_0 := 0$ in distribution as $t \to 0$
(which is automatically satisfied if $X$ has right-continuous paths).
For an arbitrary family
of values of $(\kappa,t)$, with $t > 0$ and $\kappa \ge 0$,
we show that condition
\eqref{ch2:eq:assfamily} implies \eqref{ch2:eq:cpvanish}.

Assume first that $t \to 0$ (with no assumption on $\kappa$).
Since $\kappa \ge 0$, one has
$(e^{X_t}-e^\kappa)^+ \to (1- e^\kappa)^+ = 0$ in distribution,
hence $c(\kappa,t) \to 0$ by \eqref{ch2:eq:cp} and Fatou's lemma.
With analogous arguments, one has
$p(-\kappa,t) \to 0$, hence \eqref{ch2:eq:cpvanish} is satisfied.

Next we assume that $\kappa \to \infty$ and $t$ is bounded,
say $t \in (0,T]$ for some fixed $T > 0$.
Since $z \mapsto (z - c)^+$ is a convex function and $(e^{X_t})_{t\ge 0}$
is a martingale, the process $((e^{X_t} - e^\kappa)^+)_{t\ge 0}$ is a submartingale
and by \eqref{ch2:eq:cp} we can write
\begin{equation*}
	0 \le c(\kappa,t) \le \E[(e^{X_T} - e^\kappa)^+]
	= \E[(e^{X_T} - e^\kappa)\ind_{\{X_T > \kappa\}}]
	\le \E[e^{X_T}\ind_{\{X_T > \kappa\}}] \,.
\end{equation*}
It follows that, if $\kappa \to +\infty$, then $c(\kappa,t) \to 0$. 
With analogous arguments, one shows that
$p(-\kappa,t) \to 0$, hence condition \eqref{ch2:eq:cpvanish} holds.

\subsection{About Remark~\ref{rem:equivalent}}
\label{app:diffusions}

Let $(S_t)_{t\ge 0}$ be a positive process which solves \eqref{eq:sde}.
By Ito's formula, the process $X_t := \log S_t$ solves
\begin{equation}\label{eq:sde2}
	\begin{cases}
	\dd X_t = \sqrt{V_t} \, \dd W_t - \frac{1}{2} V_t \, \dd t \\
	X_0 = 0
	\end{cases} \,.
\end{equation}
Assuming $V_t \to \sigma_0^2$ a.s.\ as $t \to 0$,
we want to show that 
\begin{equation}\label{eq:godis}
	\frac{X_t}{\sqrt{t}} \xrightarrow[t\to 0]{\, d \,} Y \sim N(0,\sigma_0^2) \,.
\end{equation}

Let us define
\begin{equation*}
	J_t := \frac{1}{2 \sqrt{t}} \int_0^t V_s \, \dd s  \,, \qquad
	I_t := \frac{X_t}{\sqrt{t}}  + J_t - \sigma_0 \frac{W_t}{\sqrt{t}}
	= \int_0^t \frac{\sqrt{V_s} - \sigma_0}{\sqrt{t}} \, \dd W_s
	\,.
\end{equation*}
By $V_t \to \sigma_0^2$ a.s.\ it follows that
$J_t \sim \frac{\sqrt{t}}{2} \sigma_0 \to 0$ a.s., by the fundamental theorem
of calculus. Moreover, since $\sqrt{V_t} \to \sigma_0$ a.s.,
\begin{equation}\label{eq:quadra}
	\langle I \rangle_t := \int_0^t \frac{|\sqrt{V_s} - \sigma_0|^2}{t} \, \dd s
	\le \sup_{0 \le s \le t} |\sqrt{V_s} - \sigma_0|^2
	\xrightarrow[\,t\to 0\,]{a.s.} 0 \,.
\end{equation}
We now use the inequality $\P(|I_t| > \epsilon) \le \frac{\delta}{\epsilon^2}
+ \P(\langle I \rangle_t > \delta)$,
cf.\ \cite[Problem 5.25]{cf:KS88}. Sending first $t \to 0$ for fixed $\delta > 0$,
and then $\delta \to 0$, we see by \eqref{eq:quadra} that
$I_t \to 0$ in probability as $t \to 0$.
Since $\sigma_0 W_t /\sqrt{t} \to Y \sim N(0,\sigma_0^2)$
in distribution as $t \to 0$,\footnote{In fact, 
the distribution of $\sigma_0 W_t /\sqrt{t}$ is $N(0,\sigma_0^2)$ \emph{for all $t > 0$}.}
by Slutsky's theorem
\begin{equation*}
	\frac{X_t}{\sqrt{t}} =
	I_t - J_t
	+ \sigma_0 \frac{W_t}{\sqrt{t}} \xrightarrow[t\to 0]{d} 0 + Y = Y \sim N(0,\sigma_0^2) \,,
\end{equation*}
hence relation \eqref{eq:godis} holds.

\smallskip

Next we show that, plugging $Y \sim N(0,\sigma_0^2)$ into \eqref{ch2:eq:sigma0},
we obtain $C_\pm(a) = \sigma_0$ for any $a \ge 0$. Since $Y$ has a symmetric
law, it suffices to focus on $C_+(a)$. Then
\begin{equation} \label{eq:DD-1}
 \begin{split}
\E[(Y-a)^+] & = \sigma_0 \,\E\left[\left(N(0,1)-\frac{a}{\sigma_0}\right)^+\right]
= \sigma_0 \bigg[\int_{\frac{a}{\sigma_0}}^{\infty}
x \, \frac{e^{-\frac{x^2}{2}}}{\sqrt{2\pi}} \, \dd x
\,-\, \frac{a}{\sigma_0} \int_{\frac{a}{\sigma_0}}^{\infty} 
\frac{e^{-\frac{x^2}{2}}}{\sqrt{2\pi}} \, \dd x \bigg]\\
& = \sigma_0\Bigg(\phi\bigg(\frac{a}{\sigma_0}\bigg) \,-\,
\frac{a}{\sigma_0}\Phi\left(-\frac{a}{\sigma_0}\right)\Bigg) 
= a \, D\bigg(\frac{a}{\sigma_0}\bigg)\,,
 \end{split}
\end{equation}
where we used the density $\phi$ and distribution function $\Phi$
of a $N(0,1)$ random variable, cf.\ \eqref{ch2:eq:phiPhi},
and the definition \eqref{ch2:eq:D} of $D$.
Looking back at \eqref{ch2:eq:sigma0}, we obtain $C_+(a) = \sigma_0$.\qed

\subsection{Proof of Lemma~\ref{th:lemma}}
\label{sec:lemma}

We start with some estimates.
It follows by \eqref{ch2:eq:MertonGeneratingFunction} that
\[
 X_t \overset{d}{=} \sigma W_t+\mu t+\sum_{i=1}^{N_t}Y_i
\]
with $Y_i \sim N(\alpha, \delta^2)$ and $N_t \sim Pois(\lambda t)$
(and we agree that the sum equals $0$ in case $N_t = 0$).
By Chernoff's 
bound\footnote{Just apply Markov's inequality 
$\P(N_t > M) \le e^{-M \alpha} \E[e^{\alpha N_t}] =
e^{-M \alpha + \lambda t(e^\alpha - 1)}$ and optimize over $\alpha \ge 0$.} 
$\P(N_t > M) \le (\frac{e\lambda t}{M})^M$, hence
\begin{equation}\label{ch2:eq:probmertongen}
\begin{split}
\P(X_t> \kappa)
&= e^{-\lambda t}  \sum_{n=0}^{M} 
\P\big(N(\mu t+n \alpha, \sigma^2t+n \delta^2)>\kappa \big) \,
\frac{(\lambda t)^n}{n!} \,+\, \cO \Bigg( \bigg( \frac{e\lambda t}{M} \bigg)^M \Bigg) \,,
\end{split}
\end{equation}
where $N(a,b)$ denotes a Gaussian random variable with mean $a$ and variance $b$. 
We recall the standard 
estimate $\log \P(N(0,1) > x) \sim -\frac{1}{2} x^2$ as $x \to \infty$.
Then we can write:
\begin{equation}\label{eq:yiel}
\begin{split}
	& \text{If $t$ is bounded from above (e.g.\ $t \to \bar t \in [0,\infty)$) and $\kappa \to \infty$,}\\
	& \log \P\big(N(\mu t+n \alpha, \sigma^2t+n \delta^2)>\kappa \big) \sim
	-\frac{\kappa^2}{2(\sigma^2t + n\delta^2)} \,.
\end{split}
\end{equation}
In particular, we get from \eqref{ch2:eq:probmertongen} that, for fixed $M \in \N$,
\begin{equation} \label{eq:corre}
\begin{split}
	& \P(X_t > \kappa) \sim e^{-\frac{\kappa^2}{2\sigma^2 t}(1+o(1))} +
	\sum_{n=1}^M e^{-\frac{\kappa^2}{2(\sigma^2 t + n \delta^2)}(1+o(1))} \, \frac{(\lambda t)^n}{n!}
	+ \cO \Bigg( \bigg( \frac{e\lambda t}{M} \bigg)^M \Bigg) \\
	& \le e^{-\frac{\kappa^2}{2\sigma^2 t}(1+o(1))} +
	M \, \max_{n=1,\ldots, M} \, 
	e^{-\big(\frac{\kappa^2}{ 2 n \delta^2}  + n \log \frac{1}{\lambda t}
	+ \log n! \big)(1+o(1))}
	+ \cO \Bigg( \bigg( \frac{e\lambda t}{M} \bigg)^M \Bigg) \,.
\end{split}
\end{equation}
For a lower bound, restricting the sum in \eqref{ch2:eq:probmertongen} to a single value $n\in\N$, 
we get
\begin{equation} \label{eq:correl}
	\P(X_t > \kappa) \ge e^{-\big(\frac{\kappa^2}{2(\sigma^2 t + n \delta^2)}  
	+ n \log \frac{1}{\lambda t}
	+ \log n! \big)(1+o(1))} \,.
\end{equation}

\smallskip

We now prove relation \eqref{eq:antici}.
We fix a family of $(\kappa, t)$ with $t \to 0$
and $\kappa \sim a \, \bkappa_2(t)$ for some $a \in (0,\infty)$.
To get an upper bound, we drop the term $\log n!$ in \eqref{eq:corre}
(since $e^{-\log n!} \le 1$) and
plug $\kappa \sim a \sqrt{\log\frac{1}{t}}$, getting
\begin{equation} \label{eq:middle}
	\P(X_t > \kappa) \le t^{\frac{a^2}{2\sigma^2 t}(1+o(1))}
	+ M \, \max_{n=1,\ldots, M} \, 
	t^{\big(\frac{a^2}{2n \delta^2} + n \big)(1+o(1))}
	+ \cO \Bigg( \bigg( \frac{e\lambda t}{M} \bigg)^M \Bigg) \,.
\end{equation}
Let us denote by $\bar n_a \in \N$ the value of $n\in\N$ for which
the minimum in the definition \eqref{eq:fa} of $f(a)$ is attained.
Choosing $M\in\N$ large enough, so that $M \ge \bar n_a$,
the middle term in \eqref{eq:middle} is $t^{f(a)(1+o(1))}$
and is the dominating one, provided
$M > e \lambda$ and $M > f(a)$, so that
the third term is $\ll t^{f(a)}$.
For an analogous lower bound, we apply \eqref{eq:correl} with $n = \bar n_a$:
since $\sigma^2 t + n \delta^2 \sim n \delta^2$ (recall that $t \to 0$), we get
\begin{equation*}
	\P(X_t > \kappa) \ge  e^{- \log \bar n_a !}
	\, t^{f(a) (1+o(1))} = (const.) \, t^{f(a) (1+o(1))} \,.
\end{equation*}
We have thus proved relation \eqref{eq:antici}.

It remains to prove relation \eqref{eq:antici2}.
We fix a family of $(\kappa, t)$ such that
either $t \to 0$ and $\kappa \gg \bkappa_2(t)$, or $t \to \bar t \in (0,\infty)$
and $\kappa \to \infty$.
Since $n! \ge (n/e)^n$,
\begin{equation*}
	\frac{\kappa^2}{2n \delta^2}  + n \log \frac{1}{\lambda t}
	+ \log n! \,\ge\,
	\frac{\kappa^2}{2n \delta^2}  + n \log \frac{n}{e \lambda t}
	\,\ge\, \inf_{x \ge 0} \left\{
	\frac{\kappa^2}{2 \delta^2 x}  + x \log \frac{x}{e \lambda t}
	\right\} \,.
\end{equation*}
By direct computation, the infimum is attained at
\begin{equation}\label{eq:barx}
	\bar x \sim \frac{\kappa}{\delta \sqrt{2\log \frac{\kappa}{t}}} \,,
\end{equation}
which yields
\begin{equation*}
	\frac{\kappa^2}{2n \delta^2}  + n \log \frac{1}{\lambda t}
	+ \log n! \,\ge\, \frac{\kappa}{\delta}\sqrt{2\log \frac{\kappa}{t}}
	\, \big(1+o(1)\big) \,.
\end{equation*}
We now choose $M = \lfloor 3 \bar x \rfloor$ in \eqref{eq:corre}, so that
\begin{align}
	\nonumber
	\P(X_t > \kappa) & \le
	e^{-\frac{\kappa^2}{2\sigma^2 t}(1+o(1))}
	+ 3 \bar x \,
	e^{- \frac{\kappa}{\delta}\sqrt{2\log \frac{\kappa}{t}} (1+o(1))}
	+ \cO \Bigg( \bigg( \frac{\lambda t}{\bar x} \bigg)^{3 \bar x} \Bigg)  \\
	\label{eq:ubapp}
	& \le
	e^{-\frac{\kappa^2}{2\sigma^2 t}(1+o(1))}
	+ e^{- \frac{\kappa}{\delta}\sqrt{2\log \frac{\kappa}{t}} (1+o(1))}
	+ \cO \Big( e^{-3 \bar x  \log \frac{ \bar x}{\lambda t}} \Big)  \,,
\end{align}
where we have absorbed $3 \bar x$ inside the
$o(1)$ term in the exponential, because $\log (3\bar x) = o(\kappa) =
o(\kappa \sqrt{\log \frac{\kappa}{t}})$
by \eqref{eq:barx} (recall that $\kappa \to \infty$).
The dominant contribution to \eqref{eq:ubapp} is given by the middle
term (note that $3 \bar x \log \frac{\bar x}{\lambda t} \sim \frac{3}{2}
\frac{\kappa}{\delta} \sqrt{2 \log \frac{\kappa}{t}}$, always by \eqref{eq:barx}).
For a corresponding lower bound, we apply \eqref{eq:correl} with
$n = \lfloor \bar x \rfloor$: since $\log n! \sim n\log (n/e)$
and $\sigma^2 t + \lfloor \bar x \rfloor \delta^2 
\sim \lfloor \bar x \rfloor \delta^2$ (because $\bar x \to \infty$), we get
\begin{equation*}
\begin{split}
	\P(X_t > \kappa) 
	& \ge e^{-\big(\frac{\kappa^2}{ 2 \delta^2 \lfloor \bar x \rfloor}  
	+ \lfloor \bar x \rfloor \log \frac{1}{\lambda t}
	+ \log \lfloor \bar x \rfloor! \big)(1+o(1))} =
	e^{-\big(\frac{\kappa^2}{ 2 \delta^2 \lfloor \bar x \rfloor} 
	+ \lfloor \bar x \rfloor \log \frac{\lfloor \bar x \rfloor}{e \lambda t}\big) (1+o(1))} \\
	& = e^{- \frac{\kappa}{\delta}\sqrt{2\log \frac{\kappa}{t}} (1+o(1))} \,.
\end{split}
\end{equation*}
We have thus shown that
\begin{equation*}
	\log \P(X_t > \kappa) \sim - \frac{\kappa}{\delta}\sqrt{2\log \frac{\kappa}{t}} \,,
\end{equation*}
completing the proof of relation \eqref{eq:antici2} and
of Lemma~\ref{th:lemma}.\qed

\subsection{Proof of Proposition~\ref{ch2:th:BS}}
\label{ch2:sec:app:BS}

Let us first prove \eqref{ch2:eq:asd1} and \eqref{ch2:eq:asv}.
Since $\phi(d_2) e^\kappa = \phi(d_1)$, cf.\ \eqref{ch2:eq:phiPhi} and \eqref{ch2:eq:BS2},
recalling \eqref{ch2:eq:Mills} we can rewrite the Black\&Scholes formula \eqref{ch2:eq:BS1} as follows:
\begin{equation} \label{ch2:eq:BS3}
	\CBS(\kappa, v) = \phi(d_1)  \big( U(-d_1) - U(-d_2) \big)
	= \phi(d_1) \big( U(-d_1) - U(-d_1+v) \big)\,.
\end{equation}
If $d_1 \to -\infty$, applying \eqref{ch2:eq:Millsas} we get
\begin{equation*}
	U(-d_1) - U(-d_1+v) =
	- \int_{-d_1}^{-d_1+v} U'(z) \, \dd z \sim \int_{-d_1}^{-d_1+v}
	\frac{1}{z^2} \, \dd z =
	\frac{v}{-d_1(-d_1+v)} \,,
\end{equation*}
and \eqref{ch2:eq:asd1} is proved. 
Next we assume that $v \to 0$. By convexity of $U(\cdot)$
(cf.\ Lemma~\ref{ch2:th:Mills}),
\begin{equation*}
	-U'(-d_1+v) \le \frac{U(-d_1) - U(-d_1+v)}{v} \le -U'(-d_1) \,,
\end{equation*}
hence to prove \eqref{ch2:eq:asv} it suffices to show that $U'(-d_1+v) \sim U'(-d_1)$.
To this purpose, by a subsequence argument, 
we may assume that $d_1 \to \overline{d_1} \in \R \cup \{\pm\infty\}$.
Since $d_1 \le \frac{v}{2}$ for $\kappa \ge 0$, when $v \to 0$ necessarily
$\overline{d_1} \in [-\infty,0]$. If $\overline{d_1} = -\infty$, i.e.\
$-d_1 \to +\infty$, then
$-d_1+v \sim -d_1 \to +\infty$ and $U'(-d_1+v) \sim U'(-d_1)$ follows by \eqref{ch2:eq:Millsas}.
On the other hand, if $\overline{d_1} \in (-\infty,0]$ then
both $U'(-d_1)$ and $U'(-d_1+v)$ converge to $U'(-\overline{d_1}) \ne 0$, by continuity of $U'$,
hence $U'(-d_1)/ U'(-d_1+v) \to 1$, i.e.\ $U'(-d_1+v) \sim U'(-d_1)$ as requested.

\smallskip

Let us now prove \eqref{ch2:eq:cto0}. Assume that $\min\{d_1, \log v\} \to -\infty$,
and note that for every subsequence
we can extract a sub-subsequence along which either $d_1 \to -\infty$ or $v \to 0$.
We can then apply \eqref{ch2:eq:asd1} and \eqref{ch2:eq:asv} to show that $\CBS(\kappa, v)\to 0$:
\begin{itemize}
\item if $d_1 \to -\infty$, the right hand side of \eqref{ch2:eq:asd1} is bounded from above by
$\phi(d_1)/(-d_1) \to 0$;

\item If $\kappa \ge 0$ and $v \to 0$, then
$d_1 \le \frac{v}{2} \to 0$ and consequently
$\phi(d_1) U'(-d_1)$ is uniformly bounded from above, hence
the right hand side of \eqref{ch2:eq:asv} vanishes (since $v\to 0$).
\end{itemize}

Finally, we assume that $\min\{d_1, \log v\} \not\to -\infty$ and show that
$\CBS(\kappa, v) \not\to 0$. Extracting a subsequence, we have 
$\min\{d_1, \log v\} \ge -M$ for some fixed $M \in (0,\infty)$, 
i.e.\ both $v \ge \epsilon :=e^{-M} > 0$ and $d_1 \ge -M$,
and we may assume that $v \to \overline v \in [\epsilon, +\infty]$
and $d_1 \to \overline{d_1} \in [-M,+\infty]$. Consider
first the case $\overline v = +\infty$,
i.e.\ $v \to +\infty$: by \eqref{ch2:eq:BS2} one has $-d_1 + v = -d_2 
\ge \frac{v}{2} \to +\infty$, hence $\phi(d_1) U(-d_1 + v) \to 0$
(because $\phi$ is bounded), and
recalling \eqref{ch2:eq:Mills} relation \eqref{ch2:eq:BS3} yields
\begin{equation*}
	\CBS(\kappa,v) = \Phi(d_1) - \phi(d_1) U(-d_1+v) \to \Phi(\overline{d_1})  > 0 \,.
\end{equation*}
Next consider the
case $\overline v < +\infty$: since $d_1 \le \frac{v}{2}$, we have
$\overline{d_1} \le \frac{\overline{v}}{2}$ and again by \eqref{ch2:eq:BS3} we obtain
$\CBS(\kappa,v) \to \phi(\overline{d_1}) (U(-\overline{d_1}) - U(-\overline{d_1}+\overline{v})) > 0$.
In both cases, $\CBS(\kappa,v) \not\to 0$.\qed

\section*{Acknowledgments}

We thank Fabio Bellini, Stefan Gerhold and Carlo Sgarra for fruitful discussions.


\medskip

\end{document}